\documentclass[11pt]{article}
\usepackage[T1]{fontenc}
\usepackage[utf8]{inputenc}
\usepackage{verbatim}
\usepackage{mathtools}
\usepackage{amsmath}
\usepackage{amsthm}
\usepackage{amssymb}
\usepackage[bookmarks=true,bookmarksnumbered=false,bookmarksopen=false,
 breaklinks=false,pdfborder={0 0 1},colorlinks=false, pagebackref=false]
 {hyperref}
\hypersetup{
 colorlinks,citecolor=blue,linkcolor=red}

\usepackage{mdframed}

\makeatletter
\theoremstyle{plain}
\newtheorem{thm}{\protect\theoremname}
\theoremstyle{plain}
\newtheorem{cor}[thm]{\protect\corollaryname}
\theoremstyle{plain}
\newtheorem{lem}[thm]{\protect\lemmaname}
\theoremstyle{definition}
\newtheorem{defn}[thm]{\protect\definitionname}
\theoremstyle{plain}
\newtheorem{prop}[thm]{\protect\propositionname}

\@ifundefined{date}{}{\date{}}

\usepackage{fullpage}
\usepackage[dvipsnames]{xcolor}

\usepackage{times}

\usepackage{dsfont}
\usepackage{amsthm}

\usepackage{thmtools}
\usepackage{thm-restate}

\usepackage{graphicx}
\usepackage{url}
\usepackage{nicefrac}
\usepackage{enumitem}
\usepackage{authblk}
\usepackage{tikz}
\usepackage{subcaption}
\usepackage{xspace}
\usepackage[compact]{titlesec}
\usetikzlibrary{arrows,decorations.markings}
\usepackage[normalem]{ulem}
\usepackage{comment}

\usepackage{algorithm}
\usepackage[noend]{algpseudocode}

\usepackage{typearea}

\usepackage{pgf}
\usetikzlibrary{decorations.pathmorphing}
\usetikzlibrary{patterns}
\usetikzlibrary{backgrounds}
\usetikzlibrary{scopes}
\usetikzlibrary{arrows, automata, positioning}
\usetikzlibrary{decorations.markings}
\usepackage{calc}
\usepackage{color}
\usepackage{lipsum}

\usepackage{color}

\definecolor{blueblack}{rgb}{0,0,.7}

\newcounter{sideremark}

\definecolor{Darkblue}{rgb}{0,0,0.4}
\definecolor{Brown}{cmyk}{0,0.61,1.,0.60}
\definecolor{Purple}{cmyk}{0.45,0.86,0,0}
\definecolor{brickred}{rgb}{0.8, 0.25, 0.33}
\usepackage{cleveref}

\theoremstyle{plain}
\theoremstyle{definition}
\theoremstyle{remark}
\newtheorem*{remark*}{Remark}

\newcommand{\RR}{{\mathbb R}}

\newcommand{\NN}{{\mathbb N}}
\newcommand{\EE}{{\mathbb E}}

\newcommand{\eps}{\varepsilon}

\newcommand{\calB}{\mathcal{B}}
\newcommand{\calC}{\mathcal{C}}
\newcommand{\calE}{\mathcal{E}}

\newcommand{\calS}{\mathcal{S}}

\newcommand{\calF}{\mathcal{F}}

\newcommand{\calI}{\mathcal{I}}

\newcommand{\OPT}{\mathsf{OPT}}

\global\long\def\calS{\mathcal{S}}%

\newcommand{\APX}{APX}

\global\long\def\Jt{J_{\mathrm{long}}}%
\global\long\def\Jshort{J_{\mathrm{short}}}%
\global\long\def\Jbnd{J_{\mathrm{bnd}}}%

\newcommand{\greedy}{\textsc{Greedy}}

\global\long\def\E{\mathbb{E}}%

\DeclareMathOperator{\tw}{tw}

\usepackage[colorinlistoftodos,textsize=tiny,textwidth=2cm,color=green!50!gray]{todonotes}

\ifdefined\DEBUG
  \newcommand{\aw}[1]{\textcolor{ForestGreen}{#1}}
  \newcommand{\ant}[1]{\textcolor{blue}{#1}}
  \newcommand{\fab}[1]{\textcolor{red}{#1}}
  \newcommand{\ale}[1]{\textcolor{violet}{#1}}

\newcommand{\alnote}[1]{\todo[color=violet!100!black!50]{AA: #1}}
\newcommand{\anote}[1]{\todo[color=blue!100!black!50]{AT: #1}}
\newcommand{\awr}[1]{\todo[color=green!100!black!50]{AW: #1}}
\newcommand{\fabr}[1]{\todo[color=red!100!black!50]{F: #1}}

\else
  \newcommand{\ant}[1]{#1}
  \newcommand{\fab}[1]{#1}
  \newcommand{\aw}[1]{#1}
  \newcommand{\ale}[1]{#1}

  \newcommand{\anote}[1]{}
  \newcommand{\fabr}[1]{}
  \newcommand{\awr}[1]{}
  \newcommand{\alnote}[1]{}
\fi

\title{Improved Approximation Algorithms for\\ Non-Preemptive Throughput Maximization}

\author[1]{Alexander Armbruster\thanks{Funded by the Deutsche Forschungsgemeinschaft (DFG, German Research Foundation) – project 551896423.}}
\author[2]{Fabrizio Grandoni\thanks{Partially supported by the Swiss National Science Foundation (SNSF) Grants 200021-200731/1 and 200021-236706.}}
\author[2]{Antoine Tinguely\thanks{Supported by the Swiss National Science Foundation (SNSF) Grant 200021-200731/1.}}
\author[1]{Andreas Wiese}

\affil[1]{Technical University of Munich, Munich, Germany

\texttt{\small alexander.armbruster@tum.de}, \texttt{\small andreas.wiese@tum.de}}
\affil[2]{USI-SUPSI, IDSIA, Lugano, Switzerland

\texttt{\small fabrizio@idsia.ch}, \texttt{\small antoine.tinguely@tum.de}}

\makeatother

\providecommand{\definitionname}{Definition}
\providecommand{\lemmaname}{Lemma}
\providecommand{\propositionname}{Proposition}
\providecommand{\theoremname}{Theorem}
\providecommand{\corollaryname}{Corollary}

\begin{document}
\global\long\def\NN{\mathbb{N}}%
\global\long\def\tw{\mathrm{tw}}%
\global\long\def\eps{\varepsilon}%
\global\long\def\OPT{\mathrm{OPT}}%
\global\long\def\calB{\mathcal{B}}%
\global\long\def\calS{\mathcal{S}}%
\global\long\def\Kavg{K_{\mathrm{avg}}}%
\global\long\def\calC{\mathcal{C}}%
\global\long\def\Js{J_{\mathrm{local}}}%
\global\long\def\Jl{J_{\mathrm{global}}}%

\maketitle



\begin{abstract}
\noindent The (Non-Preemptive) Throughput Maximization problem is a natural and fundamental scheduling problem. We are given $n$ jobs, where each job $j$ is characterized by a processing time and a time window, contained in a global interval $[0,T)$, during which~$j$ can be scheduled. Our goal is to schedule the maximum possible number of jobs non-preemptively on a single machine, so that no two scheduled jobs are processed at the same time. This problem is known to be strongly NP-hard. The best-known approximation algorithm for it has an approximation ratio of 
$1/0.6448 + \eps \approx 1.551 + \eps$ [Im, Li, Moseley IPCO'17], improving on an earlier result in 
[Chuzhoy, Ostrovsky, Rabani FOCS'01]. In this paper we substantially improve the approximation factor for the problem to $4/3+\eps$ for any constant~$\eps>0$. Using pseudo-polynomial time $(nT)^{O(1)}$, we improve the factor even further to $5/4+\eps$. \fab{Our results extend to \aw{the setting in which we are given} an arbitrary number of (identical) machines.}
\end{abstract}

\thispagestyle{empty}
\newpage

\setcounter{page}{1}

\section{Introduction}

In this paper we study the \emph{(Non-Preemptive) Throughput Maximization} problem
(also known as \emph{Job Interval Scheduling}) which is one of the
most basic scheduling problems. We are given a set of $n$ jobs~$J$,
where each job $j\in J$ is characterized by its \emph{processing
time} $p_{j}\in\NN$, its \emph{release time} $r_{j}\in\NN$, and
its \emph{deadline} $d_{j}\in\NN$. For each job $j\in J$ we define its
\emph{time window }by $\tw(j)\coloneqq[r_{j},d_{j})$. The goal is
to select a subset $J'\subseteq J$ of the jobs and to compute a non-preemptive
schedule for $J'$ on one machine.
More formally, we seek to compute
a start time 
\awr{line had disappeared}$s(j)\in\NN$ for each job $j\in J'$ such that $[s(j),s(j)+p_{j})\subseteq\tw(j)$, meaning that
we execute $j$ during $[s(j),s(j)+p_{j})$.
We require that for any two distinct jobs $j,j'\in J'$ their intervals $[s(j),s(j)+p_{j})$
and $[s(j'),s(j')+p_{j'})$ are disjoint. The objective is to maximize
the number of scheduled jobs, i.e., to maximize $|J'|$. Not surprisingly,
the problem \ale{and} its variants and generalization have several applications,
see, e.g., \cite{hall1994maximizing,blazewicz2013scheduling,fischetti1987fixed,lawler1993sequencing}
and references therein.

Throughput Maximization is (strongly) NP-hard \cite{pinedo08,garey1977two} which motivates studying approximation
algorithms for it. However, from this point of view, it is not very
well understood. The currently best known approximation factor in
polynomial time (and even in pseudo-polynomial or quasi-polynomial
time) is $1/0.6448+\eps \approx 1.551+\eps$ for any constant $\eps>0$, due to Im, Li and Moseley
\cite{ILM17,im2020breaking}. This
improves a previous result by Chuzhoy,
Ostrovisky and Rabani \cite{COR01,chuzhoy2006approximation} which
achieves an approximation ratio of $\frac{e}{e-1}+\eps \approx 1.582+\eps$
(in fact, even for a more general version
where for each job we are given an explicit set of possible execution intervals, instead of our implicitly
defined intervals of the form $[s(j),s(j)+p_{j})\subseteq\tw(j)$) and other previous results \cite{spieksma1999approximability,bar2001unified,bar2001approximating,berman2000multi}.

However, Throughput Maximization is \emph{not} known to be APX-hard
and, hence, it might still admit a PTAS! We find intriguing that the
approximability status of such a basic problem is still rather unclear.

\subsection{Our Results and Techniques}

In this paper we substantially improve the best known approximation
ratio for Throughput Maximization. More specifically, we obtain the
following two main results.
\begin{thm}
\label{thr:mainPoly} For any constant $\eps>0$, there is a polynomial-time
randomized $(4/3+\eps)$-approximation algorithm for Throughput Maximization.
\end{thm}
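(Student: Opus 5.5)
Our plan is to combine two complementary algorithms and show that, on every instance, the better of the two achieves ratio $4/3+\eps$. First, by standard preprocessing, we round processing times and window endpoints geometrically and guess $O_\eps(1)$ bits of structure. This loses only a $(1+\eps)$ factor and brings the number of relevant time points down to polynomially many. We then split the jobs of a fixed optimal solution $\OPT$ into two classes. The \emph{local} jobs $\Js$ have processing time at least an $\eps$-fraction of their window length, or more generally are "large" relative to a suitable hierarchical decomposition of $[0,T)$ into intervals. The \emph{global} jobs $\Jl$ are tiny relative to their windows.

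\textbf{Step 1 (global jobs).} For $\Jl$ we will use a configuration-type LP over a laminar family of intervals, which we obtain by random shifting. In this LP a tiny job only needs to be assigned, fractionally, to some interval inside its window whose total load is bounded. Randomized rounding with alteration, in the style of Chuzhoy--Ostrovsky--Rabani and Im--Li--Moseley, loses at most a $(1+\eps)$ factor on these jobs. The reason is that tiny jobs are small relative to interval lengths, so concentration bounds together with discarding a small fraction of jobs suffice.

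\textbf{Step 2 (local jobs).} For $\Js$ we will use dynamic programming. Within each interval of the decomposition only $O_\eps(1)$ local jobs of the relevant size fit. A DP over the laminar tree, guessing boundedly many "boundary" jobs that cross child intervals, then computes an optimal or $(1+\eps)$-optimal schedule of the local jobs alone.

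\textbf{Step 3 (combining).} The naive combination takes the better of the two solutions and yields only ratio $2$. Matching $4/3$ requires a combined algorithm. We will solve an LP that schedules local jobs integrally via the DP structure, or equivalently via a configuration LP whose configurations are local placements, and global jobs fractionally in the remaining free space. We then round by sampling a local configuration and inserting global jobs greedily into the gaps. The key lemma to prove is that, for a random configuration, each global job of $\OPT$ survives with probability at least $1/2$ (or with a suitable tradeoff). Mixing this with the pure-global algorithm gives the bound $\max\{|\Js|+|\Jl|/2,\ |\Jl|\}\ge \tfrac34|\OPT|$ after optimizing the mixing weight. Concretely, if $|\Jl|=x|\OPT|$, the first term equals $(1-x/2)|\OPT|$ and the second equals $x|\OPT|$, and they balance at $x=2/3$, giving $2/3$. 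Reaching $3/4$ therefore requires a more refined per-job guarantee, such as a convex combination with a third algorithm that schedules local jobs and loses only boundary jobs, so that the three bounds balance at $3/4$.

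\textbf{Main obstacle.} The hardest part is this interaction step: proving that global jobs can be packed into the gaps left by sampled local jobs with enough surviving mass. We would first try a fractional-gap argument. We bound, in expectation over the LP-sampled configuration, the free capacity in each global job's window by its LP share, and then apply a contention-resolution scheme whose loss is tight at $4/3$.
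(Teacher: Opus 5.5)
Your proposal has a genuine gap where the theorem itself lives: it never establishes the factor $3/4$. By your own computation, Step 3 yields only $\max\{1-x/2,\,x\}\ge 2/3$, i.e.\ a $3/2$-approximation, and even the $1/2$-survival lemma behind it is only announced. The ``third algorithm'' and the ``contention-resolution scheme tight at $4/3$'' are never specified, so nothing proves $4/3+\eps$.

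The missing idea is where $4/3$ actually comes from. The paper first builds a block--superblock partition, from the Chuzhoy--Ostrovsky--Rabani blocks plus a shifting argument over $1/\eps$ levels. In it, a $(1-2\eps)$-optimal solution schedules every job either in one of its at most two boundary blocks $B_{j,L},B_{j,R}$ or inside a superblock it spans. Each block holds $O_\eps(1)$ jobs and each superblock contains many blocks. The paper then:
\begin{itemize}
\item solves a configuration LP over blocks and samples one configuration per block independently;
\item reinterprets each sampled job as a \emph{slot};
\item computes a maximum bipartite matching between jobs and compatible slots.
\end{itemize}
A job appears in the sampled configuration of one of its boundary blocks with probability $y^*_{j,L}+y^*_{j,R}-y^*_{j,L}y^*_{j,R}\ge\frac34(y^*_{j,L}+y^*_{j,R})$, because $y^*_{j,L}+y^*_{j,R}\le 1$. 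These two independent chances are exactly the source of $4/3$.

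The LP mass a job places in spanned superblocks is instead recovered up to a factor $1-O(\eps)$. Inside a spanned superblock the job fits into any slot of length at least $p_j$. The paper uses harmonic grouping by processing time and matches each group to the slots of the previous (longer) group, with the boosted extent $y^*_{j,S}/((1-y^*_{j,L})(1-y^*_{j,R}))$. Concentration then gives a large fractional matching: Chernoff for the slot counts, and a read-$k$ bound for the jobs, which are dependent through shared boundary blocks. Local jobs lose nothing. So a single algorithm achieves $4/3$; the paper uses best-of-two combinations only for $5/4$ in pseudopolynomial time.

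Several of your intermediate steps would also fail in polynomial time.
\begin{itemize}
\item Geometric rounding of processing times and window endpoints at a $(1+\eps)$ loss is not available without resource augmentation. Rounding $p_j$ up can make every job infeasible, e.g.\ when $p_j=|\tw(j)|$; rounding down is augmentation.
\item Step 1's $(1+\eps)$ guarantee for jobs tiny relative to $\tw(j)$ is unjustified. The alteration and concentration argument needs $p_j$ to be small relative to $\tw(j)\cap B$, where $B$ is the interval in which the LP places $j$. A job tiny with respect to its window can still be placed near the window's boundary.
\item The paper itself gets $(1+\eps)$ on its global jobs only by discarding jobs that are long relative to $\tw(j)\cap B$. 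That is affordable only when blocks carry $\Theta_\eps(\log T)$ jobs on average, which forces configurations of that size and an LP solved via the ellipsoid method plus color coding in $(nT)^{O_\eps(1)}$ time.
\item The polynomial-time $(1+\eps)$ dynamic program for your local jobs in Step 2 is asserted without any argument.
\end{itemize}
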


Using pseudo-polynomial time, we can do even better. We define $T:=\max_{j\in J}d_{j}$
and observe that, hence, each job $j\in J$ must be scheduled within
the interval $[0,T)$.
\begin{thm}\label{thr:mainPseudopoly}
For any constant $\eps>0$, there is a randomized $(5/4+\eps)$-approximation
algorithm for Throughput Maximization with a running time of $(nT)^{O_{\eps}(1)}$.
\end{thm}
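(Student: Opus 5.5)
We plan to follow the same template as for Theorem~\ref{thr:mainPoly}, using the pseudo-polynomial running time to remove the loss that is forced in the polynomial-time case. We first reduce to a hierarchical structure. Fix an optimal schedule $\OPT$ and pick a random shifted hierarchical grid on $[0,T)$ with branching factor $K=K(\eps)$. Each job $j\in\OPT$ gets a \emph{level}: the coarsest grid level at which its execution interval $[s(j),s(j)+p_j)$ meets only $O(1)$ grid cells. By random shifting and by the usual argument that levels can be grouped so that $p_j$ is either much smaller or much larger than the cell size (discarding an $\eps$-fraction of levels in expectation), we lose at most an $\eps$-fraction of $\OPT$. For a cell $C$ at a given level, the jobs of $\OPT$ assigned to $C$ then split into two classes. \emph{Local} jobs ($\Js$) have time windows of size comparable to $C$. \emph{Global} jobs ($\Jl$) have windows much larger than $C$, so their placement inside $C$ is essentially unconstrained by $\tw(j)$ and depends only on which cell they are packed into.

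Next we set up a DP over the grid tree. The state at a cell records the total processing volume reserved for global jobs that are scheduled inside the cell but whose windows contain it. Pseudo-polynomial time lets us store this volume exactly, since it is an integer at most $T$. This exact bookkeeping is exactly what the polynomial algorithm cannot afford: there volumes have to be rounded geometrically, which forces a lossy guess. Local jobs are handled exactly inside the DP by enumerating $O_\eps(1)$ relevant jobs and configurations per cell. The remaining difficulty is assigning global jobs to the reserved volumes. This is a generalized-assignment or bin-packing step, so we solve an LP relaxation: variables $x_{j,C}$ put global job $j$ into the volume reserved at cell $C\subseteq\tw(j)$. We then round it.

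The main obstacle is this rounding step and proving the ratio $5/4$. Our plan is to combine two candidate solutions and output the better one. The first is a solution from randomized LP rounding, followed by packing the jobs into the reserved slack, where a constant fraction of the LP value survives. The second is a solution that keeps all local jobs together with an integral greedy packing of the global jobs, using the exact volumes from the DP. Let $\alpha=|\Jl\cap\OPT|/|\OPT|$. One candidate should achieve roughly $(1-\alpha)+\alpha\cdot\gamma_1$ and the other roughly $1-\alpha\cdot\gamma_2$ plus a term for global jobs recovered through slack. The hard part is the loss analysis for the LP rounding: we need to bound how many jobs overflow a cell after independent rounding. We would try a concentration argument that uses the fact that global jobs are tiny relative to their cell, together with a charging argument for the jobs that spill over. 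We then optimize the convex combination of the two candidates over $\alpha$ and expect the worst case to balance at $4/5$ of $|\OPT|$. The exact volume states are what should improve the polynomial bound $3/4$ to $4/5$.
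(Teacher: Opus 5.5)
Your proposal leaves the decisive step, the rounding of the global jobs and its loss analysis, as something you ``would try'', and the constants $\gamma_1,\gamma_2$ are never determined. Worse, the combination you describe cannot produce $4/5$. Both of your expressions are essentially of the form $1-c\,\alpha$, so their maximum is smallest at $\alpha=1$ and there is no interior balance point. In the paper the two candidates round the \emph{same} configuration-LP solution with complementary guarantees. The polynomial-time slot/matching rounding recovers $(1-O(\eps))L$ of the local LP mass $L=\sum_{j\in\Js}y^*_j$, but only $(3/4-O(\eps))G$ of the global mass $G=\sum_{j\in\Jl}y^*_j$. A second rounding drops all local jobs and recovers $(1-O(\eps))G$. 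Then $\max\{L+\tfrac34 G,\,G\}\ge\tfrac45(L+G)$. Your diagnosis of the $3/4$ is also off: the polynomial algorithm rounds no volumes. Its loss sits exactly at the two boundary blocks, where independent sampling gives only $y^*_{j,L}+y^*_{j,R}-y^*_{j,L}y^*_{j,R}\ge\tfrac34(y^*_{j,L}+y^*_{j,R})$; on spanned superblocks it is already $1-O(\eps)$. So exact volume states in a DP do not address this loss.

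The step that would fail is your assertion that global jobs are tiny relative to their cell and that their placement inside a cell is essentially unconstrained by $\tw(j)$. A job with a huge window may meet a cell only in a short piece at its border, with $p_j$ comparable to that piece. This boundary case is exactly the source of the loss, and concentration alone cannot handle it. The missing idea is to take $\Delta=4(\log_2 T+1)/\eps^4$, so that blocks carry at least $\Delta/\eps$ jobs of a near-optimal solution on average. Each global job is then assigned independently to block $B$ with probability $(1-2\eps)y^*_{j,B}$, and jobs with $p_j>\eps^4|\tw(j)\cap B|$ are discarded. For a global job, $\tw(j)\cap B$ is a prefix or suffix of $B$, so each class $p_j\in[2^\ell,2^{\ell+1})$ contributes at most $4/\eps^4$ such jobs to any configuration. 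Hence at most $\Delta$ jobs are discarded per block in expectation, an $\eps$-fraction. The remaining short jobs survive with probability at least $1-2\eps$ by Chernoff bounds on the total volume and on the suffix/prefix volumes (events $\calE_{total}$, $\calE_{right}(j)$, $\calE_{left}(j)$). A left/right/middle greedy then schedules them. Pseudo-polynomial time is used not for volume bookkeeping but for two other purposes. First, it solves the configuration LP with $K=O_\eps(\log T)$ jobs per configuration, via the ellipsoid method on the dual with a color-coding DP over integer time points as separation oracle. Second, it solves instances with $|\OPT|=O_\eps(\log T)$ exactly.
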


In the following, we illustrate the main ideas behind our results. Our
starting point is the approach by Chuzhoy et al. \cite{chuzhoy2006approximation}.\awr{strictly speaking, we simplify the thing witih the full blocks here, but I think this is ok}
The authors present a polynomial-time procedure that partitions the
time horizon $[0,T)$ into a collection of at most $\eps|\OPT|$ intervals
which we will call \emph{blocks}. 
The blocks are defined such that
there is a $(1+\eps)$-approximate solution in which
\begin{itemize}
\item each job is scheduled entirely within one block and
\item inside each block at most $O_{\eps}(1)$ jobs are scheduled.
\end{itemize}
Based on the blocks, they define a configuration-LP that has a configuration
for each combination of a block $B$ and a set of at most $O_{\eps}(1)$
jobs that can be scheduled within $B$; in particular, the LP has a polynomial
number of variables and constraints.

Given an optimal fractional solution to the configuration-LP, they
sample independently one configuration for each block. It might happen
that some job $j\in J$ appears in more than one sampled configuration;
in this case, it can still be scheduled only once in the computed solution.
However, one can show that if a job $j\in J$ appeared fractionally
in $y_{j}^{*}$ configurations of the optimal LP-solution, then it
appears in at least one sampled configurations with probability at
least $\frac{e-1}{e}y_{j}^{*}$ which yields the mentioned approximation
ratio of $\frac{e}{e-1}+\eps\approx1.58\fab{2}+\eps$.

In both of our algorithms, we use a similar configuration-LP, but
we define the blocks in a different way and also invoke a different
rounding procedure. We start with our $(4/3+\eps)$-approximation
algorithm. Like \fab{in} the algorithm by Chuzhoy et al. \cite{chuzhoy2006approximation},
we sample a configuration for each block according to the optimal
LP-solution. However, we do not directly assign the jobs according
to the sampled configurations. Instead, if a sampled configuration
schedules a job $j$ during some time interval $[s(j),s(j)+p_{j})$,
then we interpret this interval 
as a \emph{slot}
during which we might schedule \emph{some} job $j'$ whose time window
and processing time allow to process it completely during $[s(j),s(j)+p_{j})$ (possibly $p_{j'}<p_j$ and then the machine would remain idle during some parts of the interval).
Then, we use a bipartite matching routine to compute the largest set
of jobs that can be assigned in this way to the slots, i.e., each
slot gets at most one compatible job assigned to it and each job is assigned to at most one slot. In particular,
if a job $j$ appears in two sampled configurations, then it creates
two slots such that we can assign $j$ to one of them and potentially
another job $j'$ to the other slot. In contrast, in \cite{chuzhoy2006approximation}
the second slot was kept empty in this case and, hence, it was lost. On a high level, our algorithm might
not seem too different from \cite{chuzhoy2006approximation}; however,
we show that in expectation our resulting matching yields an approximation
ratio of only $4/3+\eps < 1.334+\eps$.

One key idea for proving this is a more sophisticated definition of
our blocks. We ensure that they still have the properties described
above. In addition, we define a second partition of $[0,T)$ into
\emph{superblocks}, \fab{where} 
each superblock \fab{is the union of} a large (constant) number of consecutive blocks. We show
that we can compute such a partition for which there is a $(1+\eps)$-approximate
solution with a set of jobs $\OPT'$ in which each job $j\in\OPT'$
is scheduled either (see Figure~\ref{fig:block-superblock}):
\begin{itemize}
\item in the leftmost or the rightmost block that $\tw(j)$ intersects;
we call these blocks the \emph{boundary blocks} \fab{for} $j$,\fabr{Around we use for and of, but I think for is more frequent} or
\item in one of the superblocks $S$ \emph{spanned} by $j$, i.e., such that $S\subseteq \tw(j)$.
\end{itemize}

\begin{figure}
    \centering
    \footnotesize
    \includegraphics[width=\textwidth]{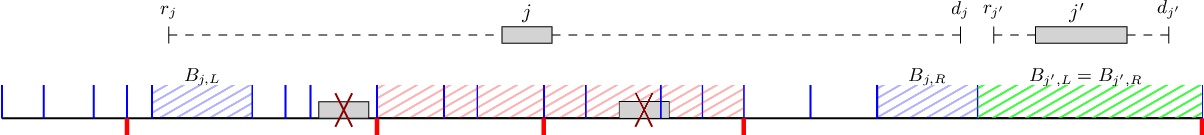}
    \caption{The blue and red lines delimitate the blocks and superblocks, respectively. Job $j$ is global and can be scheduled within its boundary blocks $B_{j,L}$ and $B_{j,R}$ (marked blue) or within one of the two superblock\fab{s} that it spans (marked red). It cannot be scheduled between a spanned superblock and $B_{j,L}$ or $B_{j,R}$, and cannot intersect two (or more) blocks. Job $j'$ is local. \fab{Its (unique) boundary block is marked in green.}}
    \label{fig:block-superblock}
\end{figure}

\noindent In order to prove that there exists a sufficiently large matching, a particularly
simple case is when in the optimal solution to the configuration-LP,
each job $j\in J$ appears only in configurations corresponding to
its boundary blocks. Then, we can easily show that it appears in
at least one sampled block-configuration with probability at least
$\frac{3}{4}y_{j}^{*}$. Hence, we can we simply match each job $j$
to one of ``its own'' slots like \cite{chuzhoy2006approximation}
and obtain an approximation ratio of $4/3+\eps$.

However, we need to show that we can achieve the same approximation factor also when the jobs might be fractionally scheduled in non-boundary blocks. This is
substantially more complex, and one of the main contributions of this paper.
At a high level, we show that in expectation there exists a large-enough fractional (bipartite) matching between jobs and slots. Standard matching theory then implies that the maximum matching is also large enough. 

In more detail, if a job $j$ belongs to the sampled configuration for at least one of its boundary blocks, say $B$, we simply integrally match $j$ with the slot created by $j$ in $B$. The remaining jobs are fractionally matched to slots in \aw{the superblocks spanned by them}.
We need to show that the latter
jobs yield sufficiently many fractional edges. To that aim, we use a fractional version of the classical harmonic grouping technique and concentration arguments that critically exploit the fact that each of our superblocks contains many blocks. For each superblock $S$, we apply harmonic grouping
based on the fractional solution to the configuration-LP for its contained
blocks. This yields $O(1/\eps)$ groups of jobs where each group contains
essentially the same (fractional) number of jobs, and the processing times of the jobs in one group are not larger than the processing times of the next group. Notice that in harmonic
grouping one typically forms groups of items of similar cardinality,
while we form the groups so that they have a similar \emph{fractional
cardinality} in terms of the computed LP-solution.  Since the configuration
for each of the (many) blocks contained in $S$ were sampled independently,
we can show via concentration arguments that, with sufficiently large probability, for each of the $O(1/\eps)$
job groups, the number of sampled slots is essentially the same as
the fractional number of jobs in that group \fab{(\aw{and, hence, also in the next}\awr{it said ``previous group''} group)}. 
\fab{When this event happens}, we \fab{fractionally} match the jobs from each job
group to the slots of the next job group \fab{(with larger processing time)}. For this it is crucial
that for each matched job $j$ its time window $\tw(j)$ contains
the entire superblock $S$; due to this property, we can schedule $j$ arbitrarily
within $S$. 

There is a subtle technical issue. In order to achieve the claimed approximation factor, we use that the fractional amount by which we assign a job $j$ to slots in a superblock $S$ spanned by $j$ depends on the sampled configurations for the boundary blocks for~$j$. More specifically, if $j$ belongs to one such configuration, it cannot be also fractionally matched to a slot in a spanned superblock. Hence, these fractional amounts for the different jobs are not independent random variables, \aw{since two jobs might have the same boundary block}! Therefore, we cannot use the standard Chernoff's bound to prove that the relevant random variables are sufficiently concentrated around their respective means. However, we are able to show that the impact of such dependencies is sufficiently small. More specifically, we can still use the concentration bounds for \emph{read}-$k$ families of random variables in \cite{gavinsky2015tail} to obtain the desired properties.

A closer look into our analysis reveals that our approximation ratio
is even $1+\eps$ (compared to the optimal LP solution) with respect to the
profit of jobs $j$ whose time
window $\tw(j)$ is contained in some block; we call such jobs \emph{local}.
For the other (\emph{global}) jobs, our approximation ratio is $4/3+\eps$, again
compared to the optimal fractional solution. Therefore, we design
a second rounding routine that achieves an approximation ratio of
$1+\eps$ w.r.t.~the global jobs (but does not schedule any local
job). The best of the two solutions then has an approximation ratio
of only~$5/4+\eps$.

For this second algorithm, we need a different
block decomposition in which each block contains intuitively $\Theta_{\eps}(\log T)$
jobs on average from some near-optimal solution. However, then our
configuration-LP has $n^{\Theta_{\eps}(\log T)}$ variables and
we can no longer solve it trivially in polynomial time. Therefore,
we use the ellipsoid method with a new separation oracle for the dual
LP. The separation problem is a \emph{weighted} generalization of our (initial) Throughput Maximization problem, however with the additional constraint that any solution can contain \emph{at most} 
$\Theta_{\eps}(\log T)$ jobs.
We show how to solve the latter problem in pseudopolynomial time with a suitable dynamic program and the 
color coding technique \cite{AYZ95}.

Another major difference in our second algorithm is the LP-rounding procedure. Instead of independently sampling a configuration for each block, we independently assign a job to a block $B$ according
to the marginal probabilities induced by the fractional solution.
It might
be that there is a block $B$ whose assigned jobs cannot all be scheduled
within $B$. However, we show that if we remove $O_{\eps}(\log T)$ intuitively
relatively ``long'' jobs from each block $B$, then due to concentration
arguments the remaining jobs can (most likely) be scheduled within
$B$. More precisely, we remove a job $j$ if its processing time
$p_{j}$ is relatively large compared to the length of its time window
within $B$, i.e., $p_{j}$ is large compared to the length of $B\cap\tw(j)$.
Since our blocks contain $\Theta_\eps(\log T)$ jobs on average from
some near-optimal solution, we can argue that the removed jobs determine
only an $\eps$-fraction of the optimal profit.

We remark that a similar random assignment was used by Im, Li and
Moseley \cite{im2020breaking}. However, they had to disallow
to schedule a job in its boundary blocks which lost the corresponding
profit of the LP. In contrast, since we allow a block to contain up to ${O_{\eps}}(\log T)$ jobs and we remove the relatively long jobs, 
we \emph{can} assign a job to its boundary intervals too
and, hence, we obtain a much better approximation ratio of $5/4+\eps$.

Our results extend to the generalization of Throughput Maximization on $m$ (identical) machines (see Section~\ref{sec:multiple} for details).
\begin{thm}
\label{thr:multipleMachines} For any constant $\eps>0$, there is a polynomial-time
randomized $(4/3+\eps)$-approximation algorithm and a pseudopolynomial-time randomized $(5/4+\eps)$ approximation algorithm for Throughput Maximization on $m$ machines.
\end{thm}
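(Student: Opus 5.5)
The plan is to reduce the $m$-machine case to the single-machine machinery of Theorems~\ref{thr:mainPoly} and~\ref{thr:mainPseudopoly}. We keep the same time partition of $[0,T)$ into blocks and superblocks, but a configuration now describes what happens on all machines at once. First, we redo the structural step for $m$ machines. Starting from an optimal $m$-machine schedule, we apply the single-machine block/superblock construction with a common set of block boundaries for all machines. We charge the removed jobs on each machine separately, so the total loss is at most $\eps|\OPT|$. Job counts are controlled per block and per machine. This gives a $(1+\eps)$-approximate solution in which every job lies inside one block on one machine. It also keeps the properties that each job is scheduled in a boundary block or in a spanned superblock, and that each block-machine pair holds $O_\eps(1)$ jobs (respectively $\Theta_\eps(\log T)$ on average in the pseudo-polynomial variant). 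The number of blocks must stay at most $\eps|\OPT|$, and with $m$ machines the blocks can hold up to $m$ times as many jobs. I expect this adaptation to be routine, since the construction only uses counting arguments over the optimal schedule.

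Second, we treat the pair (block, machine) as the basic unit. We introduce LP variables $x_{B,i,C}$ for block $B$, machine $i$ and a single-machine configuration $C$ feasible in $B$, and require $\sum_C x_{B,i,C}=1$ and $\sum_{B,i,C\ni j}x_{B,i,C}\le 1$. For the polynomial algorithm this LP is still of polynomial size. Its rounding samples one configuration independently per (block, machine) pair, creates slots, and computes a maximum bipartite job–slot matching. The analysis goes through essentially verbatim. For each job, the boundary-block argument giving $\frac34 y_j^*$ uses only independence across the units that contain $j$'s boundary blocks, and these units now number $2m$ instead of $2$. The bound $1-\prod(1-a_k)$ under $\sum a_k\le 1$, split into a left group and a right group, is minimized by the same two-group calculation, so we still get $\frac34$. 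For superblocks, the harmonic grouping and the read-$k$ concentration improve, because a superblock now contains $m$ times more independent units. A job spanning $S$ can be placed on any machine inside $S$, so matching jobs of one group to slots of the next group on arbitrary machines is valid.

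Third, for the pseudo-polynomial algorithm, the separation oracle for the dual is still a bounded-cardinality weighted single-machine problem, now solved per (block, machine) pair, so the DP with color coding is unchanged. The random assignment step sends each job to a pair $(B,i)$ according to the LP marginals. We then discard the long jobs per pair and apply the concentration argument per pair, losing an $\eps$-fraction. Taking the better of the two roundings again gives $5/4+\eps$.

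The main obstacle is the first step: showing that a single set of block and superblock boundaries works for all machines simultaneously, with the total number of blocks still at most $\eps|\OPT|$. My first attempt would be to build the partition from the union of the $m$ schedules, viewed as a single multiset of intervals. Thresholds would be defined by the number of jobs of $\OPT$ (across all machines) that end in a region, so that the per-machine loss can be charged against the global count.
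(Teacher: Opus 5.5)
There are genuine gaps. The most serious one is in your second step, and it breaks the rounding analysis for both ratios.

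\textbf{Per-(block, machine) sampling loses the $3/4$ bound.} If you sample a single-machine configuration independently for each pair $(B,i)$, a job $j$ is hit by up to $2m$ independent units. Your claim that $1-\prod_k(1-a_k)\ge\frac34\sum_k a_k$ survives ``by the same two-group calculation'' is false. Taking $a_k=\frac{1}{2m}$ on all $2m$ units gives $1-(1-\frac{1}{2m})^{2m}$, which is $\approx 0.684$ already for $m=2$ and tends to $1-1/e$.

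The same defect destroys the local-job guarantee $\Pr[g(j)=1]=y_j^*$. Take $m$ rigid local jobs of equal length whose windows pairwise overlap inside one block. The LP solution with mass $1/m$ on every (machine, job) pair is optimal, with value $m$. A slot created by one of these jobs fits no other, so the expected maximum matching is $m(1-(1-1/m)^m)\to(1-1/e)m$. This invalidates the $4/3$ analysis. It also invalidates the $5/4$ combination, which needs factor $1-O(\eps)$ on local jobs.

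\textbf{The paper's fix, and the reduction you are missing.} The paper does what your opening sentence says but your LP abandons. A configuration for a block is a set $J_C$ together with a partition $J_{C,1}\dot\cup\dots\dot\cup J_{C,m}$ and a schedule, and one configuration is sampled per block. Then every job appears at most once per sampled block, and the single-machine analysis transfers verbatim. The price is that $K$ now bounds the jobs in a block over all machines, so $K$ is $O_\eps(1)$ only when $m$ is. The paper therefore first disposes of $m>m_\eps$ via the $(1-O(\sqrt{\log m/m})-\eps)^{-1}$-approximation of Im, Li and Moseley (Theorem~\ref{thm:small-m}). Your plan has no such reduction to bounded $m$.

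\textbf{Your structural step is not an algorithm.} The block partition must be computed without knowing $\OPT$. If you could build it from $\OPT$'s schedules, Lemma~\ref{lem:block-preprocessing} would be trivial: just cut after every $1/\eps$ jobs. The whole content of the Chuzhoy--Ostrovsky--Rabani construction is obtaining such a partition via iterated \textsc{Greedy} without $\OPT$. So ``thresholds defined by the number of jobs of $\OPT$'' cannot be implemented.

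\textbf{Your block count is too weak.} A common boundary may cut one job on every machine. You therefore need at most $\eps|\OPT|/m$ blocks, not $\eps|\OPT|$.

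The paper handles both points together. It encodes the $m$ machines as a JISP instance on $[0,mT)$, where job $j$ gets the windows $[r_j+(i-1)T,\,d_j+(i-1)T)$ for $i\in[m]$. It runs the Chuzhoy et al.\ algorithm with accuracy $\eps/(2m)$ and takes all breakpoints modulo $T$. Each projected breakpoint cuts at most $m-1$ jobs, so the total loss is at most $(m-1)\cdot\eps|\OPT|/(2m)$. The resulting per-block bound is $\tilde K_0=(m/\eps)^{O(m/\eps\log(m/\eps))}$, which again is constant only for bounded $m$.
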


\subsection{Other related Work}
\label{sec:related}

The first approximation algorithm for Throughput Maximization was a $2$-approximation algorithm due to 
Spieksma \cite{spieksma1999approximability}.
While the problem is NP-hard in general \cite{pinedo08,garey1977two}, it can be solved in polynomial time if all the processing times are identical \cite{GareyJST81} and in pseudo-polynomial time if preemption is allowed \cite{lawler1990dynamic}. The latter setting also admits an FPTAS \cite{pruhs2007approximation}.

\awr{\cite{bar2001approximating} has a $2+\eps$ approximation for the weighted time windows case and a factor $2$ in the discrete case; I think their key contribution were the same results for unrelated machines}
There are natural generalizations of Throughput Maximization to multiple machines and/or to the weighted case in which each job has a weight and we wish to maximize the total weight of scheduled jobs.
There is an algorithm by Bar-Noy, Guha, Naor and Schieber \cite{bar2001approximating} that achieves a $\frac{(1+1/m)^m}{(1+1/m)^m-1}$-approximation for any number of machines $m$ (also in the weighted case in pseudo-polynomial time)\awr{they also have a worse approximation ratio in polytime}; note that this ratio converges to $\frac{e}{e-1}$ for $m \rightarrow \infty$. 
Essentially the same ratio was obtained in \cite{berman2000multi} in polynomial time. There is also a $(2+\eps)$-approximation algorithm for the weighted case on multiple machines using the local ratio technique \cite{bar2001unified}.
\awr{in \cite{berman2000multi} there is a polynomial time algorithm with a ratio of essentially $\frac{(1+1/m)^m}{(1+1/m)^m-1-\eps}$}
 The algorithm by Chuzhoy et al. \cite{chuzhoy2006approximation} yields the ratio of $\frac{e}{e-1}+\varepsilon$ for any number of machines but only in the unweighted case. The algorithm by Im et al.~\cite{im2020breaking} yields the mentioned approximation ratio of $1/0.6448+\eps<1.551+\eps$ for any number of machines in the unweighted case, and the authors provide also an $(1+\eps)$-approximation for the weighted case if the number of machines $m$ is sufficiently large (for any constant $\varepsilon >0$). If we allow resource augmentation, i.e., shortening the processing time of each input job by a factor of $1+\eps$, there is even a $(1+\varepsilon)$-approximation in quasi-polynomial time known in the weighted case and for any number of machines \cite{dynamic-programming-framework}.

The Unsplittable Flow on a Path with Time Windows problem (UFPTW) is another generalization of Throughput Maximization where each job is in addition specified by a certain \emph{demand} for a shared resource whose capacity might vary over time. In this problem, multiple jobs can be processed at the same time, provided that their total demand is within the available capacity at that time. Throughput Maximization with $m$ machines is the special case where all the demands are $1$, and the available capacity is uniformly $m$. For UFPTW there is a polynomial time $O(\log n / \log \log n)$-approximation algorithm for the weighted case and an $O(1)$-approximation in the unweighted case \cite{grandoni2015improved}, improving on \cite{CCGRS14}. These results hold for a more general setting of explicitly given possible execution intervals and corresponding demands, similar to~\cite{chuzhoy2006approximation}. Also, there is a quasi-polynomial time $(2+\varepsilon)$-approximation algorithm under resource augmentation for the weighted case \cite{armbruster2025approximability}. 
In \cite{armbruster2025approximability} it is also shown that the
problem is APX-hard (even in the unweighted setting); the hardness reduction critically exploits \aw{that the tasks can have different} demands, and, hence, it does not extend to Throughput Maximization. The special case of the above problem where the length of each time window equals the respective processing time is the classical and very well-studied Unsplittable Flow on a Path problem \cite{BCES2006,BFKS09,AGLW14,BSW14,BGKMW15,GMWZ17,GMWZ18,GMW21,GMW22} which admits a PTAS \cite{GMW22STOC}.

Other scheduling problems on one or multiple machines include for example makespan minimization \cite{hochbaum1987using,jansen2010eptas,jansen2020closing},
minimizing the average (weighted) completion times of the given jobs \cite{chekuri2001approximation}, minimizing the average (weighted) job's flowtimes \cite{armbruster2023ptas}, or scheduling jobs with even more general cost functions \cite{DBLP:journals/siamcomp/BansalP14, cheung2017primal,antoniadis2017qptas,hohn2018unsplittable}.
We refer to \cite{agnetis2025fifty,brucker2004scheduling,leung2004handbook,lawler1993sequencing} for overviews on the scheduling literature.

\section{Polynomial time approximation algorithm}\label{sec:polyAlgo}

In this section, we present our polynomial time $(4/3+\eps)$-approximation
algorithm. We assume w.l.o.g.~that $\min_{j\in J}r_{j}=0$ and recall that
$T=\max_{j\in J}d_{j}$. Let $\eps>0$ be a sufficiently small \fab{constant} and assume that $1/\eps\in \NN$. For each $k\in\NN$, we define $[k]:=\{1,\dots,k\}$.

In Section \ref{sec:blockConstruction}, we present a method to partition $[0,T)$ into subintervals
such that there exist \fab{a} $(1+\eps)$-approximate solution that \fab{is}
``aligned'' with this partition and satisfies certain additional properties
that we will exploit later in our approximation algorithm. Based on this, in Section~\ref{sec:configurationLP} we define a modified configuration LP. In Section \ref{sec:firstRounding} we present our rounding algorithm, which is then analyzed in Section \ref{sec:polynomialTimeAnalysis}.

\subsection{Construction of blocks and superblocks  }\label{sec:blockConstruction}

As\fabr{Removed repetition about $\eps$} a first step, we invoke a method from \cite{chuzhoy2006approximation}
to partition $[0,T)$ into \emph{blocks} where we define a block to
be an interval of the form $[a,b)$ for some $a,b\in[0,T\fab{]}$\fabr{Was )} with $a<b$.
Given \fab{a set} of jobs $J'\subseteq J$ and \fab{a} corresponding
schedule with a starting time $s(j)\in \{0,\ldots,T\}$ for each job $j\in J'$, we say that a job $j\in J'$ is \emph{scheduled in a block $B$} if $[s(j),s(j)+p_{j})\subseteq B$. We remark that the running time bound in \cite{chuzhoy2006approximation} holds for a discrete version of the problem and that
property \ref{item:cor-bound-opt} below is not explicitly stated in \cite{chuzhoy2006approximation}. For the sake of completeness, we sketch a proof of the following lemma in the appendix.
\begin{lem}
\label{lem:block-preprocessing}
Assume that $|\OPT|\geq (\frac{6}{\eps})^3$. In polynomial time 
we can compute a partition of $[0,T)$ into a set of blocks $\calB_{0}$ such that there exists a \fab{set of jobs} $\OPT''\subseteq J$ \fab{and a feasible schedule of them} with the following properties:
\begin{enumerate}[label=(A\arabic*)]
\item\label{item:cor-opt} $|\OPT''|\geq(1-\eps)|\OPT|$, 
\item\label{item:cor-restricted} each job $j\in\OPT''$ is scheduled in some block $B\in\calB_{0}$,
\item\label{item:cor-block-bound} for each block $B\in\calB_{0}$ there are at most $K_{0}=(1/\eps)^{O(1/\eps\log(1/\eps))}$
jobs of $\OPT''$ that are scheduled in $B$, 
\item\label{item:cor-bound-opt} $|\OPT''|\ge|\calB_{0}|/\eps$.

\end{enumerate}
\end{lem}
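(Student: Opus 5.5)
We will follow the Chuzhoy--Ostrovsky--Rabani construction: a recursive, length-class-based partition of $[0,T)$, computed without knowing $\OPT$. A uniform greedy/cutting procedure yields a partition into few blocks, and an existence argument then shows that some near-optimal schedule is aligned with it.

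\emph{Step 1 (classifying jobs and the block procedure).} Since $|\OPT|$ is unknown, we guess a value $k$ with $|\OPT|\in[k,(1+\eps)k)$; there are only polynomially many candidates. We group jobs into $O(\log_{1/\eps}$-scale$)$ size classes relative to a current interval. Starting from $I=[0,T)$, we greedily compute a maximal set of disjoint subintervals. Each subinterval is a minimal interval that contains $1/\eps$ pairwise disjoint feasible placements of jobs; equivalently, it is the shortest prefix in which the EDF-style greedy schedules $1/\eps$ jobs. The endpoints of these greedy intervals become block boundaries.

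The key COR observation is as follows. Inside each resulting piece, the greedy packs fewer than $1/\eps$ jobs that fit entirely in the piece. Hence $\OPT$ also contains at most $O(1/\eps)$ jobs that are ``small'' relative to the piece, because greedy is a $2$-approximation for unit-demand interval-like selection. $\OPT$ jobs that are large relative to the piece are handled by recursing with a shrunk size scale. We iterate this for $O(1/\eps)$ levels, then stop and charge.

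\emph{Step 2 (alignment, (A1)--(A3)).} For a fixed optimal schedule, a job crossing a block boundary is dealt with at the level where that boundary was created. Boundaries at a given level number at most about $\eps$ times the jobs packed by greedy there, and hence at most $O(\eps)|\OPT|$. Each boundary is crossed by at most one $\OPT$ job, so deleting crossing jobs loses only $O(\eps)|\OPT|$ over all levels after a geometric weighting. Rescaling $\eps$ gives (A1) and (A2).

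For (A3), we bound the number of jobs inside a final block. There are $O(1/\eps)$ levels, and each level contributes $(1/\eps)^{O(1)}$ job types that fit, so the product gives $K_0=(1/\eps)^{O(\frac1\eps\log\frac1\eps)}$. If a block still contains more jobs, we remove a cheapest class by a shifting/averaging argument over $1/\eps$ offsets, which loses only an $\eps$-fraction.

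\emph{Step 3 ((A4)).} This property is not stated in COR, but it follows by counting. Every block boundary created at any level corresponds to a greedy interval that contains $1/\eps$ disjoint schedulable jobs. The union of these greedy witnesses is therefore a feasible solution whose size is at least (number of boundaries)$\cdot\frac1\eps$ divided by the number of levels. This shows $|\OPT|\ge|\calB_0|/(\eps\cdot O(1))$. We then set the greedy threshold to $c/\eps$ so that the number of blocks is at most $\eps|\OPT''|$; the hypothesis $|\OPT|\ge(6/\eps)^3$ absorbs the additive terms from the first and last blocks and from rounding.

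The main obstacle is Step 2's charging. We must show that deleting crossing jobs and trimming overfull blocks costs only $\eps|\OPT|$ while keeping $|\calB_0|$ this small. We would first try COR's argument verbatim and add only the counting of Step 3.
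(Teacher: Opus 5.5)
There is a genuine gap, and it starts in Step~1. You infer that if greedy packs fewer than $1/\eps$ jobs in a piece, then $\OPT$ has $O(1/\eps)$ jobs there. This is false when the pieces come from a single greedy run in which each job is used once. Earliest-finishing-time greedy is only a \emph{global} $2$-approximation, and $\OPT$ may place in a piece many jobs that greedy spent elsewhere.

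For a concrete instance, take $M$ jobs with $p=1$ and $\tw=[0,T)$, $T\ge 3M$, and for each $i\in[M]$ a job with $p=2$ and $\tw=[2i-2,2i)$. Greedy puts all unit jobs into $[0,M)$ and its last job ends at $2M$. So the last piece contains $[2M,T)$ and no greedy job. Yet every $(1-\eps)$-approximate solution schedules at least $(1-4\eps)M$ unit jobs there. These jobs are tiny relative to the piece, so recursing on ``large'' jobs does not help.

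If instead greedy is rerun in every piece with all jobs available, the local bound becomes true. But then the same jobs witness arbitrarily many pieces: $N$ unit jobs with $\tw=[0,T)$, $T\gg N$, give $\eps T$ pieces. Then (A4) fails, and the union of witnesses in Step~3 is not a feasible solution.

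Resolving this tension is exactly what the multi-round refinement of Chuzhoy et al.\ does, and the paper uses it as a black box (Lemma~\ref{lem:chuzhoy}). Note that its output has two kinds of blocks. Only blocks in $\calB^{II}$ carry the bound $K_0$. In blocks of $\calB^{I}$, $\OPT''$ follows the algorithm's own schedule $s$, with no bound on the number of jobs.

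So even with your fallback of invoking that result verbatim, (A3) and (A4) still need arguments, and yours do not work.

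\emph{(A3).} The paper cuts each block of $\calB^{I}$ between consecutive jobs of the known schedule $s$ into blocks of exactly $K_0$ jobs (except the last). No job is lost, and at most $|\OPT|/K_0\le\frac{\eps}{2}|\OPT''|$ blocks are added. Your ``remove a cheapest class by shifting over $1/\eps$ offsets'' cannot reduce an unbounded number of jobs in a block to a constant while losing only an $\eps$-fraction. A shifting argument would also produce several candidate partitions rather than the single one the lemma asks for. Separately, $O(1/\eps)$ levels with $(1/\eps)^{O(1)}$ each give only $(1/\eps)^{O(1/\eps)}$. The $\log\frac1\eps$ in the exponent of $K_0$ comes from the $O(\frac1\eps\log\frac1\eps)$ rounds with thresholds $\bar K^{i+2}$, where $\bar K=6/\eps$.

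\emph{(A4).} Dividing by the number $L$ of levels yields only $|\calB_0|=O(L\eps)|\OPT|$. A threshold $c/\eps$ with constant $c$ cannot repair this, since $L$ grows with $1/\eps$; the unspecified ``geometric weighting'' in Step~2 has the same problem. The paper instead uses that each round adds at most $(\eps/6)^3|\OPT|$ blocks. Over at most $\frac6\eps\ln\frac6\eps+1$ rounds, plus the initial block (absorbed via $|\OPT|\ge(6/\eps)^3$), this gives $|\calB^{I}|+|\calB^{II}|\le\frac\eps2|\OPT''|$. Adding the subdivision count gives $|\calB_0|\le\eps|\OPT''|$.
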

To avoid ambiguity later, we refer to the blocks from Lemma~\ref{lem:block-preprocessing} as \emph{elementary blocks.}
We remark that property \ref{item:cor-bound-opt} of Lemma~\ref{lem:block-preprocessing}
implies that the \emph{average} number of jobs in an elementary block
is at least $1/\eps$, while by property \ref{item:cor-block-bound} the \emph{maximum
}number of jobs in each elementary block is bounded by $K_{0}=O_{\eps}(1)$. 

As a next step, we want to compute a partition of $[0,T)$ with stronger
properties. Instead of just blocks, we want to compute superblocks
such that each superblock is the union of a set of consecutive blocks (\aw{which will be} constantly
many blocks below).
\begin{defn}
A pair $(\calB,\calS)$ is a \emph{block-superblock partition} if
both $\calB$ and $\calS$ are sets of blocks that form a partition
of $[0,T)$ and each block $S\in\calS$ is the union of a set of consecutive blocks
$\calB(S) \subseteq \calB$. We call the blocks in $\calS$ also \emph{superblocks.}
\end{defn}

Intuitively, we use the blocks in $\calB_{0}$ as a basis and, with
a suitable shifting argument, glue them together to form the blocks
$\calB$. We use a parameter $\Delta$ that controls how many elementary blocks
we glue together here. For our polynomial-time algorithm in this section we will define $\Delta:=1$, while for the pseudopolynomial-time algorithm in the next section we will use $\Delta=O_{\eps}(\log T)$.
Then, we define each superblock
as the union of (constantly) many consecutive blocks in $\calB$. Thanks to \fab{the mentioned}
shifting argument, we can ensure that there is a $(1+\eps)$-approximate
solution in which each job $j$ \aw{may be scheduled only} within a specific 
set of blocks and superblocks. \fab{In more detail,} given a block-superblock partition
$(\calB,\calS)$, for each job $j\in J$ we define its \emph{release
block }$B_{j,L}\in\calB$ such that $r_{j}\in B_{j,L}$ and its \emph{deadline
block} $B_{j,R}=[s,t)\in\calB$ such that $d_{j}\in (s,t]$. The release block and the deadline block of $j$ are also called its \emph{boundary} blocks. We remark that possibly $B_{j,L}=B_{j,R}=B$, in which case $\tw(j)\subseteq B$. Furthermore,
we say that $j$ \emph{spans} a block $B\in\calB$ or a superblock
$S\in\calS$ (and $B$ or $S$ are \emph{spanned} by $j$) if $B\subseteq\tw(j)$ or $S\subseteq\tw(j)$, resp. In the following lemma we use the value $K_0$ as defined in Lemma \ref{lem:block-preprocessing}.

\begin{lem}
\label{lem:good-block-superblock-partition}Let $\Delta\in\NN$ be a given parameter, and assume that $|\OPT|\geq 5\ant{K_0}\Delta(2K_0/\eps^5)^{1/\eps}$.
\anote{we need factor $K_0$ in the bound}
In time $\Delta^{O(1)} n^{O_\eps(1)}$ we can compute at most $1/\eps$ block-superblock partitions and for each one of them a value $K$ with \ant{$K\le \Delta(2K_0/\eps^5)^{1/\eps} = O_{\varepsilon}(\Delta)$}
such that for at least one computed partition $(\calB,\calS)$ there exists \fab{a set of jobs}
$\OPT'\fab{\subseteq J}$ and a feasible schedule of them with the following properties:
\begin{enumerate}[label=(B\arabic*)]
\item\label{item:block-superblock-partition-lb} $|\OPT'|\geq (1-2\varepsilon)|\OPT|$,
\item\label{item:border-or-spanning} each job $j\in\OPT'$ is scheduled in $B_{j,L}$, or in $B_{j,R}$,
or in a block $B\in\calB$ for which $j$ spans the superblock $S\in\calS$
containing $B$,
\item\label{item:configuration-size} for each block $B\in\calB$ there are at most $K$ jobs from $\OPT'$ that are scheduled in $B$, 
\item\label{item:fewBlocks} $|\OPT'|\ge \frac{\Delta}{\eps}|\calB|$ and $|\OPT'|\ge \frac{K}{\eps^6}|\calS|$.
\end{enumerate}
\end{lem}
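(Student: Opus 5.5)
Starting from the elementary blocks $\calB_0$ and the solution $\OPT''$ of Lemma~\ref{lem:block-preprocessing}, I first glue groups of $\Delta$ consecutive elementary blocks into blocks $\calB$. Each job of $\OPT''$ still lies in one block of $\calB$. Each block of $\calB$ then contains at most $K_0\Delta$ jobs of $\OPT''$, and $|\OPT''|\ge|\calB_0|/\eps\ge \Delta|\calB|/\eps$ up to rounding. The first part of \ref{item:fewBlocks} therefore holds up to constants, and the extra $(1-\eps)$ loss is absorbed into \ref{item:block-superblock-partition-lb}.

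Superblocks come from a shifting argument combined with a hierarchy of scales. Define a geometric sequence of superblock sizes $L_0=1<L_1<\dots<L_{1/\eps}$, measured in blocks of $\calB$, with ratio roughly $2K_0/\eps^5$. For each $i\in[1/\eps]$, candidate $i$ partitions the blocks of $\calB$ into consecutive groups of $L_i$ blocks; this gives at most $1/\eps$ partitions. With the associated value $K:=\Delta L_{i-1}$-ish (more precisely $K_0\Delta L_{i-1}$, bounded by $\Delta(2K_0/\eps^5)^{1/\eps}$), the blocks of candidate $i$ are groups of $L_{i-1}$ consecutive elementary-glued blocks. So "block" itself changes with $i$: block size $L_{i-1}$, superblock size $L_i$. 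For a job $j\in\OPT''$, consider the number of scale-$i$ blocks its window spans. Call $j$ \emph{bad for $i$} if its window spans at least one level-$i$ block but does not fit the pattern of \ref{item:border-or-spanning}. Roughly, this means $j$ is scheduled in a block $B$ strictly between its boundary blocks, yet the superblock containing $B$ is not spanned by $j$. That is, $j$ lies in the first or last superblock meeting $\tw(j)$, but not in a boundary block.

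The key counting step: for each job $j$, the index $i$ at which the window $\tw(j)$ has "length comparable" to superblock size $L_i$ is essentially unique, since the $L_i$ grow geometrically by a factor $\ge 1/\eps$. I would argue that $j$ can be bad only for $O(1)$ values of $i$. If $\tw(j)$ covers fewer than about $L_{i-1}$ blocks, it meets at most two level-$i$ blocks, so $j$ sits in a boundary block. If $\tw(j)$ covers many more than $L_i$ blocks, then $j$'s placement... that case actually fails: $j$ could sit near the window's end inside a non-spanned superblock. So instead of counting per job, I will modify the schedule. For candidate $i$, in each superblock that is partially covered by a job's window, I keep only the jobs in boundary blocks or spanned superblocks. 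I then bound the removed jobs by charging: a removed job is scheduled in a superblock $S$ that contains its release or deadline point but is not its boundary block. To avoid losing them, I would remove instead the jobs scheduled in the boundary block of $S$ itself, whose count per boundary block is at most $K$. A cleaner route is to average: the sets of jobs whose scheduled block and window endpoint sit in the same level-$i$ superblock but different level-$(i-1)$ blocks are disjoint across $i$ up to $O(1)$ overlap. By pigeonhole over $1/\eps$ choices, some $i$ loses at most $O(\eps)|\OPT''|$; rescaling $\eps$ gives \ref{item:block-superblock-partition-lb}.

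Finally, check \ref{item:configuration-size} and the second part of \ref{item:fewBlocks}. The number of superblocks is $|\calB|/L_i$ and $K\approx K_0\Delta L_{i-1}$, so $K|\calS|\le K_0\Delta|\calB| L_{i-1}/L_i\le \eps^5\Delta|\calB|/2\le \eps^6|\OPT'|$ by the ratio choice; this is where the factor $2K_0/\eps^5$ and the hypothesis on $|\OPT|$ (ensuring there are enough blocks that rounding boundary groups is harmless) enter. The main obstacle is the disjointness/charging claim in the pigeonhole step: making precise that each job is "lost" for only $O(1)$ scales. I would try defining the loss at scale $i$ via the pair (block containing $j$, block containing $r_j$ or $d_j$), which lie in the same level-$i$ superblock but in different level-$(i-1)$ blocks. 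Since the hierarchy is nested, this happens for exactly one $i$ per endpoint.
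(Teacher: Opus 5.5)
Your final construction is the paper's: a nested hierarchy of $1/\eps$ levels with ratio $2K_0/\eps^5$, where the superblocks of level $i$ are exactly the blocks of level $i+1$, followed by a pigeonhole argument over the levels. The intermediate detours, such as removing the jobs of a boundary block of $S$, are not needed. Your charging, however, only shows that each \emph{endpoint} of $\tw(j)$ causes a loss at one level or fewer. A job can therefore lie in two loss sets, pigeonhole gives a loss of $2\eps|\OPT''|$, and (B1) drops to $(1-\eps)(1-2\eps)|\OPT|\approx(1-3\eps)|\OPT|$. Rescaling $\eps$ would change the number of partitions and the bound on $K$ that the statement fixes. The missing observation, which the paper uses, is that the sets $D_\ell$ of jobs that actually violate (B2) at level $\ell$ are pairwise disjoint. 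Suppose $j$ violates (B2) at level $\ell$, so it is scheduled in $S^\ell_{j,L}$ but not in $B^\ell_{j,L}$ (or symmetrically with $R$). For every $\ell'>\ell$, $j$ lies in $B^{\ell'}_{j,L}\supseteq B^{\ell+1}_{j,L}=S^\ell_{j,L}$. For every $\ell'<\ell$, the level-$\ell$ block of $j$ lies strictly between its two boundary blocks, so it is spanned by $j$ and contains the level-$\ell'$ superblock in which $j$ is scheduled. In your terms, only the endpoint whose block merges \emph{first} with the block of $j$ can cause a violation, because from that level on $j$ sits in a boundary block.

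Second, gluing only $\Delta$ elementary blocks leaves no slack for the first inequality of (B4). You get roughly $|\OPT'|\ge(1-O(\eps))\frac{\Delta}{\eps}(|\calB|-1)$ rather than $\frac{\Delta}{\eps}|\calB|$, and a deficit in (B4) cannot be ``absorbed into (B1)''. The paper first pads $\calB_0$ with at most $2\Delta(2K_0/\eps^5)^{1/\eps}$ dummy blocks $[T,T)$, so that every grouping is exact. It then merges $2\Delta$ elementary blocks at the bottom level. The hypothesis on $|\OPT|$ gives $|\calB_0|\ge 4\Delta(2K_0/\eps^5)^{1/\eps}$, hence $|\calB_0'|\le\frac32|\calB_0|$. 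The factor $2$ then absorbs both the padding and the $(1-\eps)$ loss for $\eps\le 1/4$, while $K_\ell=2\Delta K_0(2K_0/\eps^5)^{\ell-1}\le\Delta(2K_0/\eps^5)^{1/\eps}$ still holds. With these two adjustments your argument becomes the paper's proof.
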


\begin{proof}
Consider the partition into elementary blocks $\calB_0$ provided by Lemma~\ref{lem:block-preprocessing}.
\ant{From the assumption $|\OPT| \geq 5\ale{K_0}\Delta (2K_0/\eps^5)^{1/\eps}$ and using \ref{item:cor-opt} and $\varepsilon \leq 1/5$, we have
\begin{equation*}
        |\calB_0| \geq \frac{1}{K_0} \cdot |\OPT''|
        \geq \frac{(1-\varepsilon)}{K_0} \cdot |\OPT|
        \geq 5\Delta (1-\varepsilon) (2K_0/\eps^5)^{1/\eps}
        \geq 4\Delta (2K_0/\eps^5)^{1/\eps}.
\end{equation*}
}
We add up to $2\Delta(2K_0/\eps^5)^{1/\eps}=O_\eps(\Delta)$ dummy blocks of type $[T,T)$ so as to obtain a set $\calB'_0$ of elementary blocks whose cardinality is a multiple of $2\Delta(2K_0/\eps^5)^{1/\eps}$. Notice that $|\calB'_0|\leq \frac{3}{2}|\calB_0|$.
We define the partition into blocks $\calB_1$ by merging $2\Delta$ consecutive blocks in $\calB'_0$, and define recursively $\calB_\ell$ for $\ell\in \{2,\ldots,1/\eps\}$ by merging $2K_0/\eps^5$ consecutive blocks of $\calB_{\ell-1}$. For $\ell\in [1/\eps]$, we define the partition $\calS_\ell$ into superblocks by merging $2K_0/\eps^5$ consecutive blocks of $\calB_{\ell}$. The pairs $(\calB_\ell,\calS_\ell)$ for $\ell\in [1/\eps]$ are the block-superblock partitions in the claim, and we set the corresponding value of $K$ to $K_\ell \coloneqq 2\Delta K_0(2K_0/\eps^5)^{\ell-1} \leq \Delta(2K_0/\eps^5)^{1/\eps}$, as required. We also observe that
\begin{itemize}
    \item $|\calB_\ell|=\frac{1}{2\Delta(2K_0/\eps^5)^{\ell-1}}|\calB'_0|$;
    \item $|\calS_\ell|=\frac{1}{2K_0/\eps^5}|\calB_{\ell}|$;
    \item $\calS_\ell=\calB_{\ell+1}$ for $\ell\in [1/\eps-1]$.
\end{itemize}
Consider the solution $\OPT''$ as in Lemma~\ref{lem:block-preprocessing} with the associated schedule. We first observe that, for each $j\in \OPT''$, there exists at most one partition $(\calB_{\ell},\calS_{\ell})$ such that property \ref{item:border-or-spanning} is not satisfied. Indeed, consider one such level $\ell$. Let $B^\ell_{j,L}$ and $B^\ell_{j,R}$ (resp, $S^\ell_{j,L}$ and $S^\ell_{j,R}$) be the corresponding boundary blocks (resp, superblocks). The violation of the property implies that $j$ is either scheduled inside $S^\ell_{j,L}$ but not in $B^\ell_{j,L}$, or inside $S^\ell_{j,R}$ but not in $B^\ell_{j,R}$. Assume w.l.o.g. that the first case applies. Then $j$ is scheduled within $B^{\ell'}_{j,L}$ for every $\ell'>\ell$. At the same time, $j$ is scheduled inside a spanning superblock of $(\calB_{\ell'},\calS_{\ell'})$ for every $\ell'<\ell$. In both cases property \ref{item:border-or-spanning} is satisfied.
Let $D_{\ell}$ be the jobs in $\OPT''$ that do not satisfy property \ref{item:border-or-spanning} w.r.t. $(\calB_{\ell},\calS_{\ell})$. By the above discussion such sets are disjoint, and therefore one such set $D_{\ell^*}$ contains at most $\eps |\OPT''|$ jobs. We define $\OPT':=\OPT''\setminus D_{\ell^*}$, and for these jobs preserve the same schedule as for $\OPT''$.

We claim that the partition $(\calB_{\ell^*},\calS_{\ell^*})$ together with $\OPT'$ and the associated schedule satisfy all the desired properties. Trivially, $|\OPT'|\geq (1-\eps)|\OPT''|\geq (1-\eps)^2|\OPT|\geq (1-2\eps)|\OPT|$, hence property \ref{item:block-superblock-partition-lb} holds. Each $j\in \OPT'$ is scheduled inside an elementary block, hence inside a block in $\calB_{\ell^*}$. By construction $j$ satisfies property \ref{item:border-or-spanning}, otherwise it would belong to $D_{\ell^*}$. Each block of $\calB_{\ell^*}$ consists of $2\Delta(2K_0/\eps^5)^{\ell^*-1}$ elementary blocks, and at most $K_0$ jobs of $\OPT''$ (hence of $\OPT'$) are scheduled in each elementary block. Thus at most $K_{\ell^*}=2\Delta K_0 (2K_0/\eps^5)^{\ell^*-1}$ jobs of $\OPT'$ are scheduled inside each block of $\calB_{\ell^*}$ as required by property \ref{item:configuration-size}.  It remains to prove property \ref{item:fewBlocks}. \ant{With $\varepsilon \leq 1/4$,}
$$
|\OPT'|\geq (1-\eps)|\OPT''|\geq \frac{2(1-\eps)}{3\eps}|\calB'_0|\geq \frac{\Delta}{\eps} (2K_0/\eps^5)^{\ell^*-1}|\calB_{\ell^*}| \geq \frac{\Delta}{\eps}|\calB_{\ell^*}|. 
$$
Furthermore,  
$
\frac{2K_0K_{\ell^*}}{\eps^5}|\calS_{\ell^*}|=K_{\ell^*} \cdot|\calB_{\ell^*}|=K_0 \cdot |\calB'_0|,
$ 
hence
$$
|\OPT'|\geq \frac{2(1-\eps)}{3\eps}|\calB'_0|\geq \frac{4(1-\eps)}{3\eps} \cdot\frac{K_{\ell^*}}{\eps^5}|\calS_{\ell^*}|\geq \frac{K_{\ell^*}}{\eps^6}|\calS_{\ell^*}|.
$$

\end{proof}

In the rest of this section, we will assume $\Delta=1$ \ant{and w.l.o.g.\ $|\OPT| \geq 5K_0\Delta(2K_0/\eps^5)^{1/\eps}$, since otherwise
\anote{need to add $K_0$ anywhere else?}
 the problem  can be solved exactly \aw{in polynomial time} by enumeration.} Therefore, we
can apply Lemma \ref{lem:good-block-superblock-partition}. Intuitively, among the $1/\eps$ computed block-superblock partitions, we guess one for which \fab{p}roperties \ref{item:block-superblock-partition-lb}-\ref{item:fewBlocks} hold. Formally, we execute the following steps for each of them and finally output the best obtained solution.
Hence, in the analysis we may assume that we know the 
partition $(\calB,\calS)$, together with its corresponding value $K$, for which the described solution $\OPT'$
exists. 

\subsection{The linear program}
\label{sec:configurationLP}

We define a configuration-LP which intuitively computes a (fractional)
solution that satisfies~\ref{item:border-or-spanning} and \ref{item:configuration-size}.
A configuration $C$ is specified by a pair $(B_C,J_C)$, where $B_C=[s,t)$ is a block and $J_C$ is a subset of jobs that can be feasibly scheduled inside $B_C$ according to Lemma~\ref{lem:good-block-superblock-partition}; formally, we require that $|J_C|\leq K$ and that, for each $j\in J_C$, the block $B_C$ is either a boundary block for $j$ or $B_C$ is contained in a superblock $S$ spanned by $j$ (i.e., $S\subseteq \tw(j)$). We denote by $s_C:J_C\rightarrow \{s,\ldots,t-1\}$ an arbitrary but fixed feasible schedule of $J_C$ inside $B_C$.
For each block $B$ we define the set $\calC(B)$
to be the set of all configurations for $B$ and we set $\calC\coloneqq\bigcup_{B\in\calB}\calC(B)$.
In our LP, for each configuration $C\in\calC$ we introduce a variable
$x_{C}$ representing whether we select $C$ for its corresponding
block. Then, we introduce constraints to model that we \fab{schedule} each
job at most once and we select one configuration for each block.
\begin{equation}
\tag{\textsf{LP}}
\begin{split}\max\sum_{C\in\calC}|J_C|\cdot x_{C}
\quad \text{s.t.} \quad \sum_{C\in\calC:j\in J_{C}}x_{C} & \leq1\quad\text{\ensuremath{\forall}}j\in J\\
\sum_{C\in\calC(B)}x_{C} & =1\quad\text{\ensuremath{\forall}}B\in\calB\\
x_{C} & \geq0\quad\text{\ensuremath{\forall}}C\in\calC.
\end{split}
\label{eq:configuration-lp}
\end{equation}
Due to Lemma~\ref{lem:good-block-superblock-partition}, the optimal objective function value of (\ref{eq:configuration-lp}) is \ale{at least close to} $|\OPT|$.
\begin{lem}\label{lem:valueLP}
Let $(\calB,\calS)$ be a block-superblock partition satisfying properties \ref{item:block-superblock-partition-lb}-\ref{item:fewBlocks} of Lemma~\ref{lem:good-block-superblock-partition}. Then the optimal objective function value of the associated configuration LP is at least \mbox{$(1-2\eps)|\OPT|$}.
\end{lem}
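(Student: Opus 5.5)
The plan is to build an integral feasible solution of \eqref{eq:configuration-lp} directly from the solution $\OPT'$ and its schedule given by Lemma~\ref{lem:good-block-superblock-partition}, and to check that its objective value is $|\OPT'|$. Property \ref{item:block-superblock-partition-lb} then gives the claimed bound $(1-2\eps)|\OPT|$.

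The construction goes block by block. For each block $B\in\calB$, let $J_B\subseteq\OPT'$ be the set of jobs that the schedule of $\OPT'$ places in $B$, i.e., the jobs $j$ with $[s(j),s(j)+p_j)\subseteq B$. We then check that $C_B:=(B,J_B)$ is a valid configuration.
\begin{itemize}
\item By \ref{item:configuration-size} we have $|J_B|\le K$.
\item By \ref{item:border-or-spanning}, for each $j\in J_B$ either $B\in\{B_{j,L},B_{j,R}\}$, or $B$ lies in a superblock $S$ with $S\subseteq\tw(j)$. This is exactly the admissibility condition for a job in a configuration.
\item The restriction of the schedule of $\OPT'$ to $J_B$ is a feasible schedule inside $B$. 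The jobs are pairwise disjoint in time, each lies in $B$, and each satisfies $[s(j),s(j)+p_j)\subseteq\tw(j)$. So such a schedule $s_C$ exists.
\end{itemize}
One small point: if $J_B=\emptyset$ we still use the empty configuration, which trivially belongs to $\calC(B)$. We set $x_{C_B}=1$ for every $B\in\calB$ and all other variables to $0$.

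Next we verify the constraints. The block constraints $\sum_{C\in\calC(B)}x_C=1$ hold by construction. For the job constraints, each $j\in\OPT'$ is scheduled in exactly one block of $\calB$: it is in at least one by \ref{item:border-or-spanning}, and in at most one since the blocks are disjoint and the processing interval $[s(j),s(j)+p_j)$ is nonempty. Hence each job appears in at most one selected configuration. The objective value is then $\sum_B|J_B|=|\OPT'|\ge(1-2\eps)|\OPT|$, and the LP optimum is at least this.

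I expect no real obstacle. The only care needed is to confirm that the configuration definition (the block being a boundary block or lying inside a spanned superblock) matches \ref{item:border-or-spanning} verbatim, and that the job sets $J_B$ are disjoint across blocks.
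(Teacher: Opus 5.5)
Your proposal is correct and matches the paper's proof: encode $\OPT'$ as the integral LP solution that selects, for each block, the configuration consisting of exactly the jobs of $\OPT'$ scheduled in that block, then conclude via \ref{item:block-superblock-partition-lb}. You also check the configuration conditions and the job constraints explicitly, which the paper dismisses with ``clearly.''
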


\begin{proof}
    Let $(x'_C)_{C\in \calC}$ be the integral solution of \eqref{eq:configuration-lp} which encodes $\OPT'$, \aw{i.e.,} for each configuration $C \in \calC$, set $x'_C = 1 $ if $J_C$ is the set of all the jobs in $\OPT'$ that are scheduled inside $B_{C}$, and set $x'_C = 0$ otherwise.
    Clearly, $(x'_C)_{C\in \calC}$ is a feasible solution to \eqref{eq:configuration-lp} and, by~\ref{item:block-superblock-partition-lb}, $(1-2\varepsilon)|\OPT| \leq |\OPT'| = \sum_{C\in \calC}x'_C |J_C|$ is a lower bound of the optimum of~\eqref{eq:configuration-lp}.
\end{proof}

Obviously, we can solve the above LP in polynomial time when $K$, i.e., the number of jobs per configuration, is upper bounded by a constant.
\begin{lem}\label{lem:solveLP}
Let $K$ denote the maximum number of jobs in a configuration. \fab{If $K=O_\eps(1)$,}
then in \fab{polynomial time} one can compute an optimal solution $(x^*_C)_{C\in \calC}$ to the configuration LP together with a feasible schedule $s_C$ for each configuration $C\in \calC$.
\end{lem}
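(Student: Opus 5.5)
The plan is to enumerate the whole LP explicitly and solve it with any polynomial-time LP algorithm (ellipsoid or interior point). The work is in showing that this explicit LP has polynomial size and that its configurations can be generated together with feasible schedules.

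\textbf{Step 1: counting configurations.} By Lemma~\ref{lem:good-block-superblock-partition} we have $|\calB|\le|\calS|\cdot O_\eps(1)\le n$; in fact $|\calB|\le \eps|\OPT'|/\Delta\le n$ by~\ref{item:fewBlocks}. For each block $B$, a candidate configuration is any subset of at most $K$ jobs. There are $\sum_{i\le K}\binom{n}{i}\le (n+1)^K=n^{O_\eps(1)}$ such subsets, so the number of candidates is $|\calB|\cdot n^{O_\eps(1)}$, which is polynomial.

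\textbf{Step 2: checking a candidate.} For each candidate $(B,J')$ we do two things. First, we check the admissibility condition: for every $j\in J'$, either $B\in\{B_{j,L},B_{j,R}\}$, or the superblock $S\supseteq B$ satisfies $S\subseteq\tw(j)$. This is clearly polynomial. Second, we decide whether $J'$ can be feasibly scheduled non-preemptively inside $B=[s,t)$, with each job $j$ in $[\max(r_j,s),\min(d_j,t))$, and if so we produce the schedule $s_C$. Since $|J'|\le K=O_\eps(1)$, we enumerate all $|J'|!\le K!$ orderings. For a fixed ordering, the earliest-start greedy is optimal: start each job at the maximum of its effective release time and the completion time of the previous job, and check its effective deadline. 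A standard exchange argument shows that if some schedule follows this ordering, the greedy one does too, and it uses only integer start times in $\{s,\dots,t-1\}$. This step costs $O(K!\cdot K)$ per candidate, is independent of $T$, and therefore keeps the whole procedure polynomial rather than pseudo-polynomial. We keep exactly the feasible, admissible candidates as $\calC(B)$. This set always includes the empty configuration, so the equality constraints are satisfiable and the LP is feasible.

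\textbf{Step 3: solving the LP.} The LP now has polynomially many variables and $|J|+|\calB|$ constraints, with $0/1$ coefficients and objective coefficients at most $K$. So its encoding length is polynomial, and an optimal vertex solution $x^*$ is found in polynomial time by the ellipsoid or interior-point method. The only point needing care is Step 2's feasibility test: we must avoid a time-indexed dynamic program over $B$, whose length may be as large as $T$. Enumerating orderings together with the greedy exchange argument is what avoids this.
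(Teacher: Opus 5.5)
Your proposal is correct and follows the paper's proof essentially step for step: enumerate the $n^{O(K)}$ subsets of at most $K$ jobs per block, test each by trying all $K!$ orderings with an earliest-start greedy schedule (which avoids any dependence on $T$), and solve the resulting polynomial-size LP directly. One small nit: your bound on $|\calB|$ via~\ref{item:fewBlocks} covers only the good partition, but the algorithm solves the LP for all $1/\eps$ candidate partitions; for all of them, $|\calB|\le|\calB'_0|\le\frac{3}{2}|\calB_0|=O(n)$ follows from~\ref{item:cor-bound-opt} (or simply from the partitions being computed in polynomial time).
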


\begin{proof}
There are at most $O(n)$ blocks and for each block $B$ there are at most $n^{O(K)}$ subsets $J'\aw{\subseteq J}$ of jobs of cardinality at most $K$ that can be feasibly scheduled inside $B$. For each such \aw{set} $J'$, we can check whether there exists an associated feasible schedule in time $K!\cdot n^{O(1)}$ by considering any ordering of the jobs in $J'$, and computing a greedy schedule (if any) of $J'$ respecting that order. Hence we can fully specify the configuration LP in time $n^{O(K)}$. This LP has $n^{O(K)}$ variables, $O(n)$ constraints besides the non-negativity ones, and \aw{integral} non-negative coefficients upper bounded by $K=O(n)$. Thus the LP can be solved in \aw{time $n^{O(K)}$}.
\end{proof}

\subsection{Rounding algorithm }\label{sec:firstRounding}

Let $(x_{C}^{*})_{C\in\calC}$ denote an optimal solution to \eqref{eq:configuration-lp}.
We describe now how to round it to an integral solution using randomized
rounding, losing at most a factor of $4/3+O(\eps)$ in the profit.
For each block $B\in\calB$ independently, we sample one configuration
$C^*(B)\in\calC(B)$ with respect to the distribution given by the vector
$(x_{C}^{*})_{C\in\calC(B)}$. More precisely, each configuration
$C\in\calC(B)$ is sampled with probability $x_{C}^{*}$ and, deterministically,
exactly one configuration in $\calC(B)$ is sampled. 

For each block $B\in\calB$ and each job $j\in J_{C^{*}(B)}$, there
is an interval $[s_{C^*(B)}(j),s_{C^*(B)}(j)+p_{j})$ during which $j$ is scheduled
by $s_{C^{*}(B)}$. We call this interval a \emph{slot} and we define
$Q$ to be the set of all slots for all blocks $B\in\calB$ and all
jobs $j\in J_{C^{*}(B)}$. Note that the slots in $Q$ are pairwise
disjoint. We want to assign a subset of the jobs in $J$ to the slots
in $Q$ via a bipartite matching. Possibly, this will assign a job
$j\in J$ to a slot $[s,t)$ corresponding to a different
job $j'\ne j$.

Formally, we define a bipartite graph $G=(V,E)$ where $V=J\dot{\cup}Q$ and there is an edge $\{j,[s,t)]\}\in E$ connecting (the vertices corresponding to) a job $j\in J$ and a slot $[s,t)\in Q$ 
if and only if $j$ can be scheduled during $[s,t)$; the latter conditions holds if and only if $\min\{t,d_j\}-\max\{s,r_j\}\geq p_j$, in which case $j$ can be scheduled during $[\max\{s,r_j\},\max\{s,r_j\}+p_j)$.
We compute a maximum matching $M^{*}\subseteq E$ in $G$. 
For each edge $e=(j,[s,t))\in M^{*}$
we select job $j$ and schedule it during $[\max\{s,r_j\},\max\{s,r_j\}+p_j)$. 
Finally, we output the resulting solution.

\subsection{Analysis}\label{sec:polynomialTimeAnalysis}

By construction, we schedule $|M^{*}|$ jobs in total. Hence, it remains
to show that $M^{*}$ is sufficiently large. To formalize this, we
call a job $j\in J$ \emph{local} if $\tw(j)\subseteq B$ for some
block $B\in\calB$ (note that then $B_{j,L}=B_{j,R}$) and \emph{global}
otherwise. We denote by $\Js$ and $\Jl$ the local and global jobs
in $J$, respectively. We prove that, compared to $(x_{C}^{*})_{C\in\calC}$,
we lose a factor of $4/3+O(\eps)$ in the profit of the global jobs
and a factor of $1+O(\eps)$ for the local ones. For convenience,
for each job $j\in J$ we define $y_{j}^{*}:=\sum_{C\in\calC:j\in J_{C}}x_{C}^{*}$
which is the total fractional extent to which $j$ is selected in
$(x_{C}^{*})_{C\in\calC}$. 
\begin{lem}
\label{lem:matchingExistence} For any $\Delta\geq 1$ we have
\[
\EE[|M^{*}|]\ge(1-O(\eps))\sum_{j\in\Js}y_{j}^{*}+(3/4-O(\varepsilon))\sum_{j\in\Jl}y_{j}^{*}\ge(3/4-O(\varepsilon))|\OPT|.
\]
\end{lem}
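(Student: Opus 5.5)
Since any fractional matching in the bipartite graph $G$ lower-bounds the maximum matching $|M^*|$ (bipartite matching polytope is integral), the plan is to construct, for every outcome of the sampling, a fractional matching $z$ in $G$ and bound $\EE[\sum_e z_e]$ from below. The second inequality then follows from Lemma~\ref{lem:valueLP}, since $\sum_j y_j^* \ge (1-2\eps)|\OPT|$ and $1-O(\eps)\ge 3/4-O(\eps)$.

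\textbf{Step 1 (boundary part).} For each job $j$, split $y_j^* = y_{j,\mathrm{bnd}}^* + y_{j,\mathrm{span}}^*$. Here $y_{j,\mathrm{bnd}}^*$ is the mass on configurations of $B_{j,L}$ or $B_{j,R}$, and $y_{j,\mathrm{span}}^*$ is the remaining mass, which lies on blocks inside superblocks spanned by $j$. If $j$ lies in the sampled configuration of one of its boundary blocks, match $j$ integrally to its own slot there; call $j$ \emph{boundary-covered}.

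For a local job, all its mass is on a single block, so $\Pr[\text{covered}]=y_j^*$ and local jobs are done. For a global job, let $a=\Pr[j\in C^*(B_{j,L})]$ and $b=\Pr[j\in C^*(B_{j,R})]$, so $a+b=y_{j,\mathrm{bnd}}^*$. By independence, $j$ is covered with probability $1-(1-a)(1-b)$.

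\textbf{Step 2 (spanned superblocks).} Each spanned superblock $S$ and each uncovered job $j$ spanning $S$ will get fractional mass $\approx x_j(S)$, where $x_j(S)$ is the LP mass of $j$ on blocks of $S$. Then $j$ receives expected value
\[
1-(1-a)(1-b)+(1-a)(1-b)\,y_{j,\mathrm{span}}^* \;\ge\; \tfrac34 (a+b+y_{j,\mathrm{span}}^*).
\]
Check: with $c=y_{j,\mathrm{span}}^*\le 1-a-b$, the left side is $1-(1-a)(1-b)(1-c)$. By AM-GM applied to the product term subject to $a+b+c=y\le 1$, the worst case is $a=b$, $c=0$, and indeed $1-(1-y/2)^2\ge \tfrac34 y$ for $y\le1$. (Other splits are handled the same way; concavity gives the bound.)

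To realize the mass on $S$, apply fractional harmonic grouping there. Sort the jobs $j$ spanning $S$ by $p_j$ and weight them by $w_j=(1-\mathbb 1[\text{covered}])\,x_j(S)$. Cut them into $1/\eps^2$ groups of equal fractional size $\approx \eps^2 W$, where $W=\sum_j \EE$-weights; the last group is discarded at an $O(\eps)$ loss. Each group $g$ has a corresponding set of slots: the sampled slots in blocks of $S$ belonging to spanning jobs of that group. Any job of group $g$ fits into any slot of group $g+1$, because it spans $S$ and is no longer than those slot jobs. So we fractionally match group $g$ to the slots of group $g+1$ with total mass $\min(\text{weight of } g, \#\text{slots of } g+1)$, spreading it uniformly. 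Slots created by boundary jobs or by non-spanning jobs are not used here, so they cause no conflicts with Step 1 except that a boundary-covered job also occupies its own slot; to avoid this, I would use only slots of spanning uncovered-type occurrences, or absorb it into the $O(\eps)$ loss.

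\textbf{Step 3 (concentration; the main obstacle).} We must show that, with probability $1-O(\eps)$, for each group the random slot count and the random weight $\sum_j w_j$ are both within $(1\pm\eps)$ of their common expectation. Here we should define the groups from the \emph{expected} weights $(1-\Pr[\text{covered}])x_j(S)$ so that the groups are deterministic.

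\begin{itemize}
\item The slot counts are sums over the independent block samplings in $S$. Each block contributes at most $K$, and by \ref{item:fewBlocks} the expectations are large ($\Omega(K/\eps^6)$ per superblock on average). Chernoff/Hoeffding applies to most superblocks; small superblocks carry only $O(\eps)$ of the mass and can be charged.
\item The weights depend on the boundary-block samples, which are shared among jobs. However, each boundary block's sample affects at most its configurations' jobs, and each job depends on only two block variables. So this is a read-$k$ family over the independent block choices, and the bound of \cite{gavinsky2015tail} gives concentration, with the deviation controlled by $K$ relative to the expected total $\Omega(K/\eps^6)$.
\end{itemize}

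A union bound over the $1/\eps^2$ groups and the failure events costs $O(\eps)$ of the expected mass, which yields the claimed $(3/4-O(\eps))$ factor for global jobs.
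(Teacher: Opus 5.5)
There is a genuine gap in Step 2. The inequality $1-(1-a)(1-b)(1-c)\ge \frac34(a+b+c)$ for $a+b+c\le 1$ is false, because your AM--GM step runs in the wrong direction. For fixed $a+b+c$, the product $(1-a)(1-b)(1-c)$ is \emph{maximized} at $a=b=c$, so that is the worst case, not $c=0$. For $a=b=c=1/3$ you get $1-(2/3)^3=19/27<3/4$. Hence the scheme as you describe it, where uncovered jobs receive only their LP mass $x_j(S)$ in spanned superblocks, only certifies a ratio of $27/19\approx 1.42$.

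The missing idea is to \emph{boost} the extent of uncovered jobs: an uncovered global job $j$ should be matched in $S$ to extent $x_j(S)/((1-a)(1-b))$.
\begin{itemize}
\item This respects the job's capacity, since $c\le 1-a-b\le(1-a)(1-b)$ gives total extent at most $1$.
\item Its expectation is exactly $x_j(S)$.
\item It is affordable because the slots in $S$ generated by covered spanning jobs are free, since those jobs are matched in their boundary blocks. So if the groups are balanced with respect to the LP masses $x_j(S)$, not your discounted weights, the expected boosted demand of a group equals the expected slot count of the neighbouring group.
\end{itemize}
With this, $j$ collects $1-(1-a)(1-b)+c\ge \frac34(a+b)+c\ge\frac34(a+b+c)$ in expectation; only the two-variable inequality of Proposition~\ref{prop:meanInequality} is needed. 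Your fallback of using only slots of uncovered occurrences would discard exactly the supply that makes boosting work.

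There is also a gap in Step 3. The random weights do not form a read-$k$ family with $k=O(K)$. A block's sample influences every job that has this block as a boundary block and has positive LP mass in some configuration of it. There can be arbitrarily many such jobs, each with tiny mass; the bound $\sum_j y^*_{j,B}\le K$ only limits the number of jobs with mass at least $\eps$.

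The paper handles this with a truncated variable $\tilde Y_j$, which ignores membership in a boundary block $B_{j,X}$ whenever $y^*_{j,X}<\eps$. This gives:
\begin{itemize}
\item $\tilde Y_j\ge Y_j$ and $\EE[\tilde Y_j-Y_j]=O(\eps)\,y^*_{j,S}$;
\item the $\tilde Y_j$ form a read-$(K/\eps)$ family, so the bound of \cite{gavinsky2015tail} applies.
\end{itemize}
One then uses $\min\{N_\ell,\bar N_{\ell-1}\}\ge\min\{\tilde N_\ell,\bar N_{\ell-1}\}-(\tilde N_\ell-N_\ell)$ to transfer the concentration back to the actual demand.
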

In the remainder of this section we prove Lemma~\ref{lem:matchingExistence}.
For this, we construct a feasible fractional bipartite matching, i.e., a function $f:E\rightarrow[0,1]$ such that $\sum_{e: v\in e}f(e)\leq 1$ for each $v\in V$.
We will prove that its \ale{expected} size $\ale{\EE}\left[\sum_{e\in E}f(e)\right]$ is at least the lower bound we want
to prove for $\ale{\EE}[|M^{*}|]$. Since the standard LP-relaxation of the bipartite matching problem is
integral (see \cite{schrijver2003combinatorial}), there exists also an integral matching with at least $\sum_{e\in E}f(e)$ edges.

Consider a job $j\in J$. If the sampled configurations $C^{*}(B_{j,L})$
for its release block $B_{j,L}$ contains $j$, then we match $j$
integrally to the corresponding slot. Formally, in this case let $[s,t)\fab{:=[s_{C^{*}(B_{j,L})}(j),s_{C^{*}(B_{j,L})}(j)+p_j)}$
denote the slot 
corresponding to $j$; we
define $f(\{j,[s,t)\}):=1$. If this is not the case but the sampled
configurations $C^{*}(B_{j,R})$ for its deadline block $B_{j,R}$
contains $j$, then, similarly, we match $j$ to the corresponding
slot. We do this operation for each job $j\in J$. Let $\Jbnd \subseteq J$
denote the set of all jobs that were \fab{(integrally)} matched in this way.

We want to define a fractional matching for the remaining jobs $J\setminus \Jbnd$.
We do this separately for each superblock $S\in\calS$. Let $J_{S}\subseteq J$ denote
all jobs in $J$ that span $S$. Note that this might include jobs
in $\Jbnd$. We want to define a fractional matching for the jobs in
$J_{S}\setminus \Jbnd$. For each job $j\in J_{S}$ we define the total
fractional amount by which $j$ is assigned to $S$ in $(x_{C}^{*})_{C\in\calC}$
by $y_{j,S}^{*}:=\sum_{B\in\calB(S)}\sum_{C\in\calC(B):j\in J_{C}}x_{C}^{*}$.
Similar to harmonic grouping \cite{lee1985simple,williamson2011design}, we partition the jobs in $J_{S}$
into $1/\eps$ groups $J_{1},...,J_{1/\eps}$ ordered non-increasingly by their processing
times such that the jobs from each group contribute almost the same
to the profit of $(x_{C}^{*})_{C\in\calC}$ within~$S$.

\begin{lem}
\label{lem:groups}There exists a partition $J_{1},...,J_{1/\eps}$
of $J_{S}$ such that 
\begin{enumerate}[label=(C\arabic*)]
\item for each $\ell\in[1/\eps-1]$, each job $j\in J_{\ell}$, and each
job $j'\in J_{\ell+1}$ we have that $p_{j}\ge p_{j'}$,
\item for each $\ell\in[1/\eps]$ we have that 
$
\eps\cdot\sum_{j\in J_{S}}y_{j,S}^{*}-1\le\sum_{j\in J_{\ell}}y_{j,S}^{*}\le1+\eps\cdot\sum_{j\in J_{S}}y_{j,S}^{*}.
$
\end{enumerate}
\end{lem}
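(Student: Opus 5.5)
We construct the partition greedily. Sort the jobs of $J_S$ non-increasingly by processing time, breaking ties arbitrarily, as $j_1,\dots,j_m$. Let $Y:=\sum_{j\in J_S}y^*_{j,S}$ and note that $0\le y^*_{j,S}\le y^*_j\le 1$ for every job, by the first LP constraint. We then cut the sorted sequence into $1/\eps$ consecutive (possibly empty) intervals. Property (C1) holds automatically for any partition into consecutive intervals of this order. The only task is therefore to choose the cut points so that each group has fractional weight within $\pm 1$ of $\eps Y$.

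We define prefix sums $P_i:=\sum_{i'\le i}y^*_{j_{i'},S}$, so $P_0=0$ and $P_m=Y$. The sequence $P_i$ is non-decreasing with increments at most $1$. For each $\ell\in\{0,1,\dots,1/\eps\}$ let $i_\ell$ be the smallest index with $P_{i_\ell}\ge \ell\eps Y$, taking $i_0=0$ and $i_{1/\eps}=m$. The indices $i_\ell$ are non-decreasing in $\ell$. We set $J_\ell:=\{j_i: i_{\ell-1}<i\le i_\ell\}$, which gives a partition of $J_S$ into $1/\eps$ consecutive groups.

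The weight bound follows from bracketing each cut point. By minimality of $i_\ell$ and the increment bound, $\ell\eps Y\le P_{i_\ell}<\ell\eps Y+1$ for $\ell\ge1$. The lower bound holds by the choice of $i_\ell$. For the upper bound, $P_{i_\ell-1}<\ell\eps Y$ and the last increment is at most $1$; the case $i_\ell=0$ is trivial. For $\ell=0$ we have $P_{i_0}=0$ exactly. The weight of $J_\ell$ is
\[
\sum_{j\in J_\ell}y^*_{j,S}=P_{i_\ell}-P_{i_{\ell-1}},
\]
and the two brackets give
\[
\eps Y-1< P_{i_\ell}-P_{i_{\ell-1}}<\eps Y+1.
\]
This is exactly (C2).

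The only point needing care is that $y^*_{j,S}\le1$, which we have from the LP constraint. There is no real obstacle beyond the boundary cases: empty groups occur when $\eps Y<1$, and they are still consistent with (C2).
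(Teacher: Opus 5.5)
Your proof is correct and follows essentially the same route as the paper. Both sort $J_S$ non-increasingly by processing time, cut the prefix sums of $y^*_{j,S}$ at the thresholds $\ell\eps Y$, and use $y^*_{j,S}\le 1$ to keep each cut within $1$ of its threshold. The only difference is the choice of cut: you take the first index reaching $\ell\eps Y$, while the paper takes the last index not exceeding it. The paper's choice makes $t(1/\eps)=k$ automatic, whereas you need the explicit override $i_{1/\eps}=m$ so that trailing zero-weight jobs are included.
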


\begin{proof}
    Assume that $J_S=\{j_1, \dots, j_k\}$ such that $p_{j_i}\geq p_{j_{i'}}$ for all $i, i'\in [k]$ with $i<i'$. Let $Y^*=\sum_{j\in J_S}y_{j, S}^*$ denote the total fractional amount of scheduled jobs from $J_S$ in $S$.
    For each $\ell\in [1/\eps]$ let $t(\ell)\in [k]$ be the maximal index $t$ such that $\sum_{i=1}^{t}y_{j_i, S}^*\leq \ell \cdot \eps Y^*$.
    Also, let $t(0):=0$.
    We define our partition by setting $J_{\ell}:=\{j_i\in J_S :\ale{t}(\ell-1)<i\leq \ale{t}(\ell)\}$.
    Clearly, this is indeed a partition of $J_S$.
    Also for $j_{i} \in J_\ell$, $j_{i'}\in J_{\ell +1}$ we have $i<i'$ and thus $p_{j_i}\geq p_{j_{i'}}$.

    Let $\ell\in [1/\eps]\cup\{0\}$. By definition of $t(\ell)$ and the fact that $y_{j, S}^*\in[0,1]$ for each $j\in J_S$ we have $\ell\cdot \eps Y^*-1\leq \sum_{i=1}^{t(\ell)}y_{j_i, S}^*\leq \ell\cdot \eps Y^*$.
    Thus we obtain 
    \begin{align*}
        \eps Y^*-1
        &=(\ell \cdot \eps Y^*-1 )-(\ell-1)\eps Y^*\\
        &\leq \sum_{i=1}^{t(\ell)}y_{j, S}^*-\sum_{i=1}^{t(\ell-1)}y_{j, S}^*\\
        &\leq \ell\cdot \eps Y^*-((\ell-1)\eps Y^*-1)\\
        &=\eps Y^*+1
    \end{align*}
    For each $\ell\in [1/\eps]$ we have $\sum_{j\in J_\ell}y_{j, S}^*=\sum_{i=1}^{t(\ell)}y_{j, S}^*-\sum_{i=1}^{t(\ell-1)}y_{j, S}^*$. 
    Altogether, this implies $\eps Y^*-1\leq \sum_{j\in J_\ell}y_{j, S}^*\leq \eps Y^*+1$, which completes the proof.
\end{proof}

For each $\ell\in[1/\eps]$ we define $n_{\ell}:=\sum_{j\in J_{\ell}}y_{j,S}^{*}$.
Note that Lemma~\ref{lem:groups} implies that $n_{\ell'}-2\le n_{\ell}\le n_{\ell'}+2$
for any $\ell,\ell'\in[1/\eps]$. Intuitively, when we define the
fractional matching for the jobs in $J_{S}\setminus \Jbnd$ we will ignore
the jobs in $J_{1}\setminus \Jbnd$ and for each $\ell\ge2$ we will
match the jobs in $J_{\ell}\setminus \Jbnd$ fractionally to the slots
corresponding to jobs in $J_{\ell-1}$. In particular, here we may
use slots that correspond to jobs in $\Jbnd$ \aw{(and to which we have not yet assigned any jobs)}. We would like that after
sampling the configurations for the blocks, for each group $J_{\ell}$
we obtain (essentially) $n_{\ell}$ slots in $S$ corresponding to
jobs in $J_{\ell}$. If $n_{\ell}$ is sufficiently large, we can
show that this is indeed the case with concentration arguments, using
that each superblock contains many blocks and the block's configurations
are sampled independently. On the other hand, if $n_{\ell}$ is small
then we can simply ignore the profit of jobs in $J_{S}$ within $S$: indeed, by Lemma \ref{lem:good-block-superblock-partition}, each superblock contains on average $\frac{K}{\eps^6}$ jobs.
Formally,
for each $\ell\in[1/\eps]$ we define $\bar{N}_{\ell}$ to be the
(random) number of slots corresponding to jobs in $J_{\ell}$ in the sampled
configurations $\{C^{*}(B)\}_{B\in\calB:B\subseteq S}$.

\begin{lem}
\label{lem:block-concentration}Let $\ell\in [1/\eps]$. With probability
at least $1-\eps^{\aw{2}}$ we have that
$\bar{N}_{\ell}\ge(1-\eps)n_{\ell}-2K/\eps^3$.
\end{lem}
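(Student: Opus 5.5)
Write $\bar N_\ell=\sum_{B\in\calB(S)} X_B$, where $X_B:=|J_{C^*(B)}\cap J_\ell|$ is the number of slots in the sampled configuration for block $B$ that correspond to jobs of $J_\ell$. The configurations of different blocks are sampled independently, so the variables $(X_B)_{B\in\calB(S)}$ are independent. Each $X_B$ lies in $[0,K]$ because every configuration has at most $K$ jobs. By linearity,
\[
\EE[X_B]=\sum_{C\in\calC(B)}x^*_C\,|J_C\cap J_\ell|=\sum_{j\in J_\ell}\sum_{C\in\calC(B):j\in J_C}x^*_C .
\]
Summing over $B\in\calB(S)$ gives exactly $\EE[\bar N_\ell]=\sum_{j\in J_\ell}y^*_{j,S}=n_\ell$. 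This is the key structural fact, and it is why the groups were formed by fractional mass $y^*_{j,S}$.

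Next I apply a Chernoff bound for sums of independent variables in $[0,K]$. Rescaling by $K$ gives, for $\delta=\eps$,
\[
\Pr\bigl[\bar N_\ell<(1-\eps)n_\ell\bigr]\le \exp\!\bigl(-\eps^2 n_\ell/(2K)\bigr).
\]
I split into two cases according to the size of $n_\ell$.

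If $n_\ell\ge 2K/\eps^3$, the bound is at most $\exp(-1/\eps)$, which is at most $\eps^2$ for $\eps$ small enough. Hence $\bar N_\ell\ge(1-\eps)n_\ell$ with probability at least $1-\eps^2$, and the claimed inequality follows.

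If $n_\ell<2K/\eps^3$, then $(1-\eps)n_\ell-2K/\eps^3<0\le\bar N_\ell$, so the inequality holds deterministically. This is the origin of the additive $2K/\eps^3$ term.

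I do not expect a real obstacle. The one point to check is the constant in the Chernoff tail: the threshold $2K/\eps^3$ combined with the $1/(2K)$ from the scaled multiplicative bound must yield $e^{-1/\eps}\le\eps^2$. If a different Chernoff variant has worse constants, I would instead apply Hoeffding/Bernstein with variance bound $\mathrm{Var}(\bar N_\ell)\le K\,n_\ell$, or Chebyshev. Chebyshev gives
\[
\Pr\bigl[\bar N_\ell<n_\ell-\eps n_\ell\bigr]\le \frac{K n_\ell}{\eps^2 n_\ell^2}=\frac{K}{\eps^2 n_\ell}\le \frac{\eps}{2}\qquad\text{when } n_\ell\ge 2K/\eps^3 .
\]
This gives failure probability $\eps/2$ rather than $\eps^2$, so to reach $\eps^2$ I would rely on the exponential (Chernoff) bound rather than Chebyshev.
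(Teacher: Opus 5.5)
Your proof is correct and follows the paper's argument: both write $\bar N_\ell$ as a sum of independent per-block counts bounded by $K$, rescale by $K$, use $\EE[\bar N_\ell]=n_\ell$, and apply the multiplicative Chernoff lower tail $\exp(-\eps^2 n_\ell/(2K))\le e^{-1/\eps}\le\eps^2$ when $n_\ell\ge 2K/\eps^3$, while the case $n_\ell<2K/\eps^3$ holds trivially. The Chebyshev fallback is unnecessary: the Chernoff bound for independent $[0,1]$-valued variables applies directly, and $e^{-1/\eps}\le\eps^2$ holds for every $\eps\in(0,1)$, not only for small $\eps$.
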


\begin{proof}
Recall that $\calB(S):=\{B\in \calB: B\subseteq S\}$ denotes all blocks in the superblock $S$.
    For each $B\in \calB(S)$ let $Z_B$ denote the number of jobs from $J_\ell$ in the sampled configuration $C^*(B)$ and let $Z_B'=Z_B/K$.
    As there are at most $K$ jobs in a configuration, we have $Z_B'\in [0, 1]$.
    Note that $\bar{\fab{N}}_\ell=\sum_{B\in \calB_S}Z_B$ and $\EE[\bar{\fab{N}}_\ell]=n_\ell$.
    If $n_\ell< \frac{2K}{\eps^3}$, there is nothing to show. Hence in the following we assume $n_\ell\geq \frac{2K}{\eps^3}$ and thus $\EE[\sum_{B\in \calB_S}Z_B']\geq \frac{2}{\eps^3}$. \fab{It is sufficient to show that $\bar{\fab{N}}_\ell< (1-\eps)n_\ell$ happens with probability at most $\eps^{\aw{2}}$.} As the random variable $Z_B'$ depends only on the sampled configuration for the block $B$ and these configurations are sampled independently, the random variables $Z_B'$ for $B\in \calB_S$ are independent.
    Thus, we can apply Chernoff's bound \cite{dubhashi2009concentration} and obtain the following;
    \begin{align*}
        \Pr\left[\sum_{B\in \calB(S)}Z_B'<(1-\eps)\EE\left[\sum_{B\in \calB(S)}Z_B'\right]\right]
        &\leq \exp\left(-\frac{\eps^2}{2}\EE\left[\sum_{B\in \calB(S)}Z_B'\right]\right)\\
        &\leq \exp(-\frac{1}{\eps})\leq \eps^{\aw{2}}.
    \end{align*}
    As $\bar{\fab{N}}_\ell< (1-\eps)n_\ell$ is equivalent to $\sum_{B\in \calB(S)}Z_B'<(1-\eps)\EE\left[\sum_{B\in \calB(S)}Z_B'\right]$, this completes the proof.
\end{proof}

If the
\awr{commented out ``high-probability''}
event \aw{due to} Lemma~\ref{lem:block-concentration} does not
happen \aw{for some $\ell \in [1/\eps]$}, then we simply do not match the jobs in $J_{S}$ to the slots contained
in $S$ in our fractional matching. Since this happens only with probability
$\eps$, this influences our expected profit only marginally. Otherwise,
for each $\ell\in[1/\eps]$ we have essentially $n_{\ell}$ slots
available corresponding to the jobs in $J_{\ell}$. Therefore, since
$n_{\ell+1}\approx n_{\ell}$ we can (fractionally) match essentially
$n_{\ell+1}$ jobs from $J_{\ell+1}$ to these slots. Recall that
$\sum_{j\in J_{\ell+1}}y_{j,S}^{*}\fab{=} n_{\ell+1}$. Hence, we
could match each job $j\in J_{\ell+1}$ to a fractional extent of
$y_{j,S}^{*}$. Since the jobs in $J_{\ell+1}\cap \Jbnd$ are already
matched, we can even match each remaining job $j\in J_{\ell+1}\setminus \Jbnd$
to a larger fractional extent than only $y_{j,S}^{*}$. 

Let us define this increased extent. \aw{F}or each job $j\in J$,
let $y_{j,L}^{*}:=\sum_{C\in\calC(B_{j,L}):j\in J_{C}}x_{C}^{*}$
and $y_{j,R}^{*}:=\sum_{C\in\calC(B_{j,R}):j\in J_{C}}x_{C}^{*}$,
i.e., the probabilities that $j$ is contained in \fab{the sampled} configuration
for $B_{j,L}$ and $B_{j,R}$, resp. Hence, each job $j\in J_{\ell+1}$
is \emph{not} contained in $\Jbnd$ with probability $(1-y_{j,L}^{*})(1-y_{j,R}^{*})$.
Thus, in expectation, the sum of the values $y_{j,S}^{*}$ for the
(remaining) jobs in $J_{\ell+1}\setminus \Jbnd$ equals $\mathbb{E}\left[\sum_{j\in J_{\ell+1}\setminus \Jbnd}y_{j,S}^{*}\right]=\sum_{j\in J_{\ell+1}}y_{j,S}^{*}(1-y_{j,L}^{*})(1-y_{j,R}^{*})$.
Therefore, we try to match each job $j\in J_{\ell+1}\setminus \Jbnd$
even to an increased extent of $\frac{y_{j,S}^{*}}{(1-y_{j,L}^{*})(1-y_{j,R}^{*})}$.
In expectation, the sum of those values is then 
\[
\mathbb{E}\bigg[\sum_{j\in J_{\ell+1}\setminus \Jbnd}\frac{y_{j,S}^{*}}{(1-y_{j,L}^{*})(1-y_{j,R}^{*})}\bigg]=\sum_{j\in J_{\ell+1}}y_{j,S}^{*}= n_{\ell+1}.
\]

Via concentration arguments, we can argue that the sum of those values is also sufficiently concentrated around $n_{\ell+1}$, implying that we can find a matching of size close to $n_\ell$ in expectation. As mentioned in the introduction, the dependencies among the variables do not allow us to apply the standard Chernoff's bound. However, we are able to show that the impact of such dependencies is sufficiently small, hence we have sufficient concentration.
To define our matching formally,
let $Q_{S}\subseteq Q$ denote all slots in the blocks $\calB(S)$
corresponding to jobs in $J_{S}$ and let $E_{S}\subseteq E$ denote
all edges in $E$ that connect a vertex (corresponding to a job) in
$J_{S} \setminus \Jbnd$ with a slot in $Q_{S}$.
\begin{lem}
\label{lem:fractional-matching-exists} For each superblock $S\in \calS$, there exists a fractional matching  $f_{S}:E_{S}\rightarrow[0,1]$
such that
\begin{itemize}\itemsep0pt
\item $\EE[\sum_{e\in E_{S}}f_{S}(e)]\ge(1-O(\eps))\sum_{j\in J_{S}}y_{j, S}^*-2K/\eps^5$ and
\item for each $j\in J_S\setminus \Jbnd$ we have
\begin{equation*}
    \sum_{q: \{j,q\}\in E_S}f_S(\{j,q\})\leq \frac{y_{j,S}^{*}}{(1-y_{j,L}^{*})(1-y_{j,R}^{*})}\leq 1.
\end{equation*}
\end{itemize}
\end{lem}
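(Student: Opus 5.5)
The plan is to build $f_S$ by shifting each group onto the slots of the group with larger processing times. Fix a superblock $S$ and the groups $J_1,\dots,J_{1/\eps}$ from Lemma~\ref{lem:groups}. For $j\in J_S\setminus\Jbnd$ define the target weight
\[
w_j:=\frac{y_{j,S}^{*}}{(1-y_{j,L}^{*})(1-y_{j,R}^{*})}.
\]
First check that $w_j\le 1$. A block of $S$ is neither $B_{j,L}$ nor $B_{j,R}$ when $j$ spans $S$ (it is strictly inside $\tw(j)$), except in degenerate cases that we treat separately. Hence the LP constraint $\sum_{C\ni j}x^*_C\le 1$ gives $y^*_{j,S}+y^*_{j,L}+y^*_{j,R}\le 1$, which implies $y^*_{j,S}\le 1-y^*_{j,L}-y^*_{j,R}\le (1-y^*_{j,L})(1-y^*_{j,R})$. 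For $\ell\ge 2$ let $W_\ell:=\sum_{j\in J_\ell\setminus\Jbnd}w_j$, so that $\EE[W_\ell]=n_\ell$.

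Call the sampling \emph{good} for $S$ if, for every $\ell$, we have $\bar N_\ell\ge(1-\eps)n_\ell-2K/\eps^3$ (Lemma~\ref{lem:block-concentration}) and also $W_\ell\le(1+\eps)n_\ell+2K/\eps^3$. On a bad outcome we set $f_S\equiv 0$. On a good outcome, for each $\ell\ge2$ we spread the jobs of $J_\ell\setminus\Jbnd$ uniformly over the $\bar N_{\ell-1}$ slots of group $\ell-1$ in $S$. Concretely, set $f_S(\{j,q\}):=\lambda_\ell w_j/\bar N_{\ell-1}$ with
\[
\lambda_\ell:=\min\{1,\bar N_{\ell-1}/W_\ell\}.
\]
This is feasible for three reasons. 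Every such pair is an edge, since $j$ spans $S$ and $p_j\le p_{j'}$ for the job $j'$ owning slot $q$. Each slot receives load at most $\lambda_\ell W_\ell/\bar N_{\ell-1}\le1$. Each job receives at most $w_j\le 1$. The slots used are those of $J_{\ell-1}\subseteq J_S$, so they lie in $Q_S$; distinct groups use distinct slots.

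Next, the size. On a good outcome,
\[
\sum_e f_S(e)=\sum_{\ell\ge2}\min\{W_\ell,\bar N_{\ell-1}\}.
\]
Using $n_{\ell-1}\ge n_\ell-2$, this is at least $\sum_{\ell\ge2}\min\{W_\ell,(1-\eps)n_\ell\}-O(K/\eps^4)$. Summing the additive losses over $1/\eps$ groups, together with dropping $J_1$ (which costs $n_1\le \eps\sum_j y^*_{j,S}+1$), stays within $O(\eps)\sum y^*_{j,S}+2K/\eps^5$ after adjusting constants. It remains to take expectations. The obstacle is that $\min\{W_\ell,(1-\eps)n_\ell\}$ together with the indicator of goodness must still have expectation $(1-O(\eps))n_\ell$. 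I would write
\[
\EE[\min\{W_\ell,(1-\eps)n_\ell\}\mathbf 1_{\text{good}}]\ \ge\ \EE[W_\ell]-\EE[(W_\ell-(1-\eps)n_\ell)^+]-\EE[W_\ell\mathbf 1_{\text{bad}}].
\]
Both error terms are controlled by upper-tail concentration of $W_\ell$ plus a union bound over the $O(1/\eps)$ events. Each of those events fails with probability $\eps^2$, so the total failure probability is $O(\eps)$. Since $W_\ell$ is bounded by a constant times $|J_\ell|$, a tail integral handles the cross term. Alternatively, we can truncate $w_j$ and simply use that $\EE[W_\ell\mathbf 1_{\text{bad}}]\le O(\eps)n_\ell$ follows from a second-moment bound. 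When $n_\ell<2K/\eps^3$ the whole group is absorbed in the additive term.

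The main step is the upper-tail concentration $\Pr[W_\ell>(1+\eps)n_\ell+2K/\eps^3]\le\eps^2$, and I expect it to be the main obstacle. The variables $w_j\mathbf 1[j\notin\Jbnd]$ are not independent, because jobs share boundary blocks. Each one, however, is a function of the samples in just two blocks, $B_{j,L}$ and $B_{j,R}$. I would instead rewrite $W_\ell$ as a sum over blocks $B$ of $X_B:=\sum_{j:B_{j,L}=B}(\dots)$. This does not decouple things either, since a job involves two blocks. So I would apply the read-$k$ tail bound of \cite{gavinsky2015tail}. Its base variables are the independent block samples. Each summand reads at most two of them, but a block may be read by many jobs, so a direct read-$k$ application has large $k$.

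My fix is to group jobs by their pair of boundary blocks. For a fixed pair $(B,B')$, the contribution of the jobs in $J_\ell$ with these boundary blocks is at most $\sum w_j\mathbf 1[j\notin C^*(B)\cup C^*(B')]$. I would bound it by a quantity in $[0,O(1)]$ after normalizing; the scale can be argued from $\sum_j y^*_{j,S}\le K|\calB(S)|$, and I would rescale by $K$ as in Lemma~\ref{lem:block-concentration}. This leaves the problem that one block pairs with many others. My fallback would be to split the jobs into those whose boundary blocks are adjacent to $S$ and those whose boundary blocks are far away. The point is that only the two neighboring superblocks' boundary blocks can be shared by a large weight, and all far jobs with a common boundary block $B$ have total $y^*_{j,S}$-mass bounded in terms of $K$. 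With bounded mass per block, the family is effectively read-$O(1)$ after grouping, and the read-$k$ Chernoff bound yields the $\eps^2$ failure probability once $n_\ell\ge 2K/\eps^3$.
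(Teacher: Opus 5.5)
Your construction of $f_S$ (spreading $J_\ell\setminus\Jbnd$ uniformly over the slots of $J_{\ell-1}$, scaled by $\lambda_\ell$) is feasible and has size $\sum_{\ell\ge2}\min\{W_\ell,\bar N_{\ell-1}\}$, exactly as in the paper. Reducing the expectation bound to a tail bound on $W_\ell$ is also sound: since $\EE[W_\ell]=n_\ell$ exactly, an upper-tail bound controls $\EE[((1-\eps)n_\ell-W_\ell)^+]$ too. The cross term is even simpler than you make it, since the capped minimum is at most $(1-\eps)n_\ell\Pr[\mathrm{bad}]$.

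The gap is the step you yourself call the main obstacle: you never establish concentration of $W_\ell$, and the fix you sketch rests on a false claim. It is not true that far jobs sharing a boundary block have total $y^*_{j,S}$-mass bounded in terms of $K$. Arbitrarily many jobs spanning $S$ can be released in the same far block $B$, have $y^*_{j,L}=0$, and be scheduled almost entirely inside $S$, with total mass up to $K|\calB(S)|$. Grouping by the pair $(B_{j,L},B_{j,R})$ does not help either. A single pair's group can contain all of $J_\ell$, so its sum is not in $[0,O(K)]$. Also, one block can occur in arbitrarily many pairs, so you remain far from a read-$O(1)$ family.

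The missing idea is that the quantity to control is the \emph{boundary} mass $y^*_{j,B}$, not $y^*_{j,S}$. Since $\sum_j y^*_{j,B}=\sum_{C\in\calC(B)}|J_C|x^*_C\le K$, at most $K/\eps$ jobs have $y^*_{j,B}\ge\eps$ for a given block $B$. The paper therefore replaces $Y_j$ (your $w_j\mathbf 1[j\notin\Jbnd]$) by $\tilde Y_j\ge Y_j$, which ignores the event $j\in J_{C^*(B_{j,X})}$ whenever $y^*_{j,X}<\eps$. Then:
\begin{itemize}
\item the $\tilde Y_j$ form a read-$(K/\eps)$ family;
\item $\EE[\tilde Y_j-Y_j]\le \frac{2\eps}{(1-\eps)^2}\,y^*_{j,S}$, because each ignored event has probability below $\eps$ and concerns a different block from those determining $\{\tilde Y_j>0\}$, hence is independent of it.
\end{itemize}
The paper applies the lower-tail read-$k$ bound to $\tilde N_\ell$ and uses $\min\{N_\ell,\bar N_{\ell-1}\}\ge\min\{\tilde N_\ell,\bar N_{\ell-1}\}-(\tilde N_\ell-N_\ell)$. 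Since $W_\ell\le\tilde N_\ell$ and $\EE[\tilde N_\ell]\le(1+O(\eps))n_\ell$, the same device would also give you the upper tail you want.

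Your parameters need adjusting, though. With $k=K/\eps$ the tail bound is $\exp\bigl(-\eps^3 n_\ell/(3K)\bigr)$, which at $n_\ell=2K/\eps^3$ is only $e^{-2/3}$, not $\eps^2$. You need $n_\ell=\Theta(K/\eps^4)$, which still fits into the additive $2K/\eps^5$. It is also simpler to bound $\EE[\min\{W_\ell,\bar N_{\ell-1}\}]$ group by group, as the paper does. Then a failure probability of $\eps$ per group suffices, and you need not make all groups good simultaneously.
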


\aw{We will prove Lemma~\ref{lem:fractional-matching-exists} in Section~\ref{sec:OmmitedProofTechnicalConcentration}.}
We combine all these matchings for the superblocks $S\in \calS$,
together with the values for $f$ that we defined already for the jobs in $\Jbnd$. Formally,
for each each superblock $S\in\calS$ and each edge $e\in E_{S}$
we define $f(e):=f_{S}(e)$. For each edge $e'\in E$ for which we
have not defined the value $f(e')$ yet, we set $f(e'):=0$. 

It remains to argue that $\sum_{e\in E}f(e)$ is sufficiently large
in expectation. For each job $j\in J$ we define the extent to which
$j$ is matched in $f$ by \emph{$g(j):=\sum_{q\in Q:\{j,q\}\in E}f(\{j,q\})$.
}We can easily show for each (local) job $j\in\Js$ that $g(j)=1$
with probability $y_{j}^{*}$. 
\begin{lem}
\label{lem:short-jobs-matched}For each job $j\in\Js$ we have that
$\Pr[g(j)=1]=y_{j}^{*}$.
\end{lem}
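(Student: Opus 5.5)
For a local job $j\in\Js$ we have $\tw(j)\subseteq B$ for a single block $B\in\calB$, so $B_{j,L}=B_{j,R}=B$. The plan is to show two things: first, that $g(j)=1$ exactly when $j\in J_{C^*(B)}$; second, that this event has probability $y_j^*$.

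For the probability, observe that every configuration $C$ with $j\in J_C$ must have $B_C=B$. Indeed, by the definition of configurations, $B_C$ is either a boundary block of $j$, and the only one is $B$, or $B_C$ lies in a superblock $S$ with $S\subseteq\tw(j)\subseteq B$. The latter is impossible, because $S$ is a union of blocks and hence strictly contains the block $B_C$, while a superblock cannot be contained in a single block (or at best it equals $B$, and then $B_C=B$ again). Therefore $y_j^*=\sum_{C\in\calC(B):j\in J_C}x_C^*$. Since $C^*(B)$ is drawn with $\Pr[C^*(B)=C]=x_C^*$, we get $\Pr[j\in J_{C^*(B)}]=y_j^*$.

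Next, if $j\in J_{C^*(B)}$, the construction sets $f(\{j,q\})=1$ for the slot $q$ that $j$ occupies in $C^*(B)$. This edge exists in $E$ because $j$ fits in its own slot. Hence $j\in\Jbnd$ and $g(j)\ge1$.

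Conversely, suppose $j\notin J_{C^*(B)}$. Then $j\notin\Jbnd$, and the only other way $j$ can receive matching mass is through some $f_S$ with $j\in J_S\setminus\Jbnd$. That would require $j$ to span a superblock $S$, i.e.\ $S\subseteq\tw(j)\subseteq B$, which is impossible as argued above. So $g(j)=0$ in this case.

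It remains to check that $g(j)\le1$ in the first case, so that $g(j)=1$ exactly. The only edge incident to $j$ with nonzero value is the boundary edge, since $j\notin J_S$ for every $S$ and the $f_S$ therefore contribute nothing. Combining the two directions, $g(j)=1$ if and only if $j\in J_{C^*(B)}$, and hence $\Pr[g(j)=1]=y_j^*$.

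The only subtle step is the claim that a local job spans no superblock. This relies on superblocks being unions of many blocks, with $2K_0/\eps^5>1$ blocks each, so no superblock fits inside a single block.
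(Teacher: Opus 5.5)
Your proof is correct and takes the same route as the paper. You locate the unique block $B\supseteq\tw(j)$, observe that $g(j)=1$ exactly when $j\in J_{C^*(B)}$, and read off the probability $\sum_{C\in\calC(B):j\in J_C}x_C^*=y_j^*$. You also spell out two facts the paper compresses into its ``hence'': every configuration containing $j$ lives on $B$, and a local job spans no superblock because each superblock consists of $2K_0/\eps^5>1$ blocks.
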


\begin{proof}
There is a unique block $B\in\calB$ containing $\tw(j)$. Hence,
$g(j)=1$ \fab{iff} for $B$ we sampled a configuration $C$
containing $j$. This happens with probability $\sum_{C\in\calC(B):j\in J_{C}}x_{C}^{*}=y_{j}^{*}$. 
\end{proof}
Consider a (global) job $j\in\Jl$. At the beginning, we matched $j$
to one slot in $B_{j,L}$ or $B_{j,R}$ with probability $y_{j,L}^{*}+y_{j,R}^{*}-y_{j,L}^{*}\cdot y_{j,R}^{*}$.
On the other hand, the LP-solution obtains a profit of $y_{j,L}^{*}+y_{j,R}^{*}$
from assigning $j$ fractionally to $B_{j,L}$ and $B_{j,R}$. However,
since $y_{j,L}^{*}+y_{j,R}^{*}\le1$ the latter profit is by at most
a factor of $4/3$ larger than the probability that we matched $j$
to some slot in $B_{j,L}$ or $B_{j,R}$.

\begin{prop}\label{prop:meanInequality}
For each job $j\in J$ we have that $y_{j,L}^{*}+y_{j,R}^{*}\le\frac{4}{3}(y_{j,L}^{*}+y_{j,R}^{*}-y_{j,L}^{*}\cdot y_{j,R}^{*})$.
\end{prop}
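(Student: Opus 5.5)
Write $a:=y_{j,L}^{*}$ and $b:=y_{j,R}^{*}$. The plan is to reduce the claim to an elementary inequality in these two variables, using only the constraints available from \eqref{eq:configuration-lp}.

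First we record those constraints. Both $a$ and $b$ are nonnegative, being sums of nonnegative $x^*_C$. If $B_{j,L}\neq B_{j,R}$, then the configurations counted in $a$ and in $b$ are disjoint, since they belong to different blocks. All of them contain $j$, so the job constraint $\sum_{C\ni j}x^*_C\le 1$ gives $a+b\le 1$. If $B_{j,L}=B_{j,R}$, then $a=b=y_j^*\le 1$ while $a+b$ may exceed $1$, so this degenerate case is treated separately. For local jobs it is not needed in the analysis: there the job is matched with probability $y_j^*$ by Lemma~\ref{lem:short-jobs-matched}. Still, one should read the statement either for the case $B_{j,L}\ne B_{j,R}$ or with the convention that only one boundary block is counted. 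With a single block the claim becomes $y\le \frac43 y$, which is trivial.

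Now assume $a,b\ge 0$ and $a+b\le 1$. Multiplying through by $3$, the inequality $a+b\le\frac43(a+b-ab)$ is equivalent to $4ab\le a+b$. By AM--GM, $4ab\le (a+b)^2$. Since $0\le a+b\le 1$, we also have $(a+b)^2\le a+b$, which closes the chain. Equality holds at $a=b=1/2$, so the constant $4/3$ is tight.

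No step here is a real obstacle. The only point requiring care is justifying $a+b\le 1$ from the LP, and that rests on $B_{j,L}$ and $B_{j,R}$ being distinct blocks, so that their configuration sets $\calC(B_{j,L})$ and $\calC(B_{j,R})$ are disjoint.
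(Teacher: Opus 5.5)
Your proof is correct and is the same argument as the paper's: AM--GM gives $ab\le\bigl(\frac{a+b}{2}\bigr)^2\le\frac{a+b}{4}$ once $a+b\le 1$, which rearranges to the claim. Your extra remarks make explicit two points the paper leaves implicit. First, $a+b\le 1$ follows from the job constraint because $\mathcal{C}(B_{j,L})$ and $\mathcal{C}(B_{j,R})$ are disjoint when $B_{j,L}\neq B_{j,R}$, i.e., for global jobs. Second, for local jobs with $y_j^*>1/2$ the statement read literally fails, which is harmless since the paper only applies it to $j\in J_{\mathrm{global}}$.
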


\begin{proof}
Note that $y_{j,L}^{*}\geq 0,y_{j,R}^{*}\geq 0$.
Due\fabr{Slightly shortened} to the arithmetic-geometric mean inequality we have $y^*_{j,L}\cdot y^*_{j,R}\leq \left(\frac{y^*_{j,L}+ y^*_{j,R}}{2}\right)^2\leq \frac{y^*_{j,L}+ y^*_{j,R}}{4}$, where in the last inequality we used $y_{j,L}^{*}+ y_{j,R}^{*}\leq 1$. Therefore
$y^*_{j,L}+y^*_{j,R}-y^*_{j,L}\cdot y^*_{j,R}\geq \frac{3}{4}(y^*_{j,L}+y^*_{j,R})$.
\end{proof}

Assume now that $j$ is not matched to a slot in $B_{j,L}$ nor in $B_{j,R}$, which
happens with probability $(1-y_{j,L}^{*})(1-y_{j,R}^{*})$. Roughly
speaking, for each superblock $S\in\calS$ spanned by $j$, we match
$j$ fractionally to a total extent of $(1-O(\eps))\frac{y_{j,S}^{*}}{(1-y_{j,L}^{*})(1-y_{j,R}^{*})}$
to the blocks in $\calB(S)$. More precisely, this holds on average
over all the global jobs $j$ spanning each superblock $S$. Using this,
we can prove the following bound for the fractionally matched global
jobs.
\begin{lem}
\label{lem:long-jobs-matched}For any $\Delta\ge 1$ we have that\newline
\makebox[0pt][l]{$\mathbb{E}\left[\sum_{j\in\Jl}g(j)\right]\ge\frac{3}{4}(1-O(\eps))\sum_{j\in\Jl}y_{j}^{*}-O(\eps)|OPT'|.$}
\end{lem}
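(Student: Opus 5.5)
We split $g(j)$ for a global job $j$ into the part matched integrally in a boundary block and the part matched fractionally in spanned superblocks. Write $g(j)=g_{\mathrm{bnd}}(j)+\sum_{S\in\calS:\,S\subseteq\tw(j)} g_S(j)$. Here $g_{\mathrm{bnd}}(j)=\mathbf{1}[j\in\Jbnd]$, and $g_S(j)=\sum_{q}f_S(\{j,q\})$ is the amount assigned by Lemma~\ref{lem:fractional-matching-exists}, which is nonzero only if $j\notin\Jbnd$.

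First, we have $\Pr[j\in\Jbnd]=y^*_{j,L}+y^*_{j,R}-y^*_{j,L}y^*_{j,R}$. Since $j$ is global, $B_{j,L}\neq B_{j,R}$, so the two samples are independent. By Proposition~\ref{prop:meanInequality} this probability is at least $\frac34(y^*_{j,L}+y^*_{j,R})$.

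Second, by \ref{item:border-or-spanning} and the definition of configurations, every configuration containing $j$ lies either in a boundary block or in a block of a spanned superblock. Hence
\[
y^*_j=y^*_{j,L}+y^*_{j,R}+\sum_{S\subseteq \tw(j)}y^*_{j,S}.
\]
A spanned superblock contains neither boundary block, because $r_j\in B_{j,L}$ and $d_j$ lies at the right end of $B_{j,R}$, so there is no double counting. It therefore suffices to show
\[
\EE\Big[\sum_{S}\sum_{j\in J_S\cap\Jl}g_S(j)\Big]\ \ge\ (1-O(\eps))\sum_{j\in\Jl}\sum_{S\subseteq\tw(j)}y^*_{j,S}-O(\eps)|\OPT'|,
\]
because $1-O(\eps)\ge\frac34(1-O(\eps))$ and we can add the two parts.

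For each superblock $S$, Lemma~\ref{lem:fractional-matching-exists} gives $\EE[\sum_{e\in E_S}f_S(e)]\ge(1-O(\eps))\sum_{j\in J_S}y^*_{j,S}-2K/\eps^5$. A job spanning $S$ has a time window strictly larger than one block, so it is global, and $J_S\subseteq\Jl$. Summing over all $S\in\calS$, the first term becomes $(1-O(\eps))\sum_{j\in\Jl}\sum_{S}y^*_{j,S}$. The additive loss is $2K|\calS|/\eps^5$, and \ref{item:fewBlocks} gives $|\calS|\le\eps^6|\OPT'|/K$, so this loss is at most $2\eps|\OPT'|$. The sets $E_S$ are disjoint over different $S$, since slots lie in distinct superblocks. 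Finally, $\sum_{e\in E}f(e)=\sum_j g(j)$ over global jobs, with local jobs contributing only to $\Jbnd$ edges.

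The main obstacle is bookkeeping rather than probability, since the concentration is already inside Lemma~\ref{lem:fractional-matching-exists}. We must make sure that the boundary contributions and the superblock contributions are both counted, without double counting, for the same job. This works because $f_S$ only uses jobs in $J_S\setminus\Jbnd$, so the two parts of $g(j)$ are disjoint. Feasibility of $f$ as a combined matching is not needed for this lemma; it only bounds expectations.
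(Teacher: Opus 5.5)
Your proof is correct and is essentially the paper's argument. You bound the boundary part $\EE[|\Jbnd\cap\Jl|]$ via Proposition~\ref{prop:meanInequality} and the superblock part via Lemma~\ref{lem:fractional-matching-exists} summed over $S\in\calS$, with the additive loss $2K|\calS|/\eps^5\le 2\eps|\OPT'|$ from \ref{item:fewBlocks}. One small slip: a spanned superblock \emph{can} contain a boundary block, e.g.\ when $r_j$ is the left endpoint of a superblock $S\subseteq\tw(j)$, as happens for $r_j=0$. So in general only $y^*_j\le y^*_{j,L}+y^*_{j,R}+\sum_{S\subseteq\tw(j)}y^*_{j,S}$ holds, but this is exactly the direction your final step (and the paper's, which writes the same equality) needs.
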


\begin{proof}
    Recall that we fully matched the jobs in $\Jbnd$ and combined this with the matchings $f_S$ for each superblock $S\in \calS$. 
    This implies $$\EE\left[\sum_{j\in\Jl}g(j)\right]=E\left[|\Jbnd\cap \Jl|\right]+\sum_{S\in \calS}\EE\left[\sum_{e\in E_{S}}f_{S}(e)\right].$$
    By construction we have for each $j\in \Jl$ that $\Pr[j\in \Jbnd]=y_{j, L}^*+y_{j, R}^*-y_{j, L}^*\cdot y_{j, R}^*$.
    Using Proposition~\ref{prop:meanInequality} we obtain $\Pr[j\in \Jbnd] \geq \frac{3}{4}(y_{j, L}^*+y_{j, R}^*)$ and consequently $\EE[|\Jbnd\cap \Jl|]=3/4\cdot \sum_{j\in \Jl}(y_{j, L}^*+y_{j, R}^*)$.
    Now we combine this with Lemma~\ref{lem:fractional-matching-exists} and obtain
    \begin{align*}
        \mathbb{E}\left[\sum_{j\in\Jl}g(j)\right]
        &\geq \frac{3}{4} \sum_{j\in \Jl}(y_{j, L}^*+y_{j, R}^* )+ \sum_{S\in \calS}\left((1-O(\eps))\sum_{j\in J_{S}}y_{j, S}^*-2K/\eps^5\right)\\
        &\geq \frac{3}{4}(1-O(\eps))\sum_{j\in \Jl}\left(y_{j, L}^*+y_{j, R}^*+\sum_{S\in \calS:S\subseteq \tw(j)}y_{j, S}^*\right)-2K/\eps^5\cdot |\calS|\\
        &= \frac{3}{4}(1-O(\eps))\sum_{j\in \Jl}y_{j}^*-2K/\eps^5\cdot |\calS|
    \end{align*}
By Lemma~\ref{lem:good-block-superblock-partition}, \ref{item:fewBlocks}, we have 
$2K/\eps^5\cdot |\calS|\leq 2\eps|\OPT'| $.
Thus together with the above, we obtain
    $$\mathbb{E}\left[\sum_{j\in\Jl}g(j)\right]\ge\frac{3}{4}(1-O(\eps))\sum_{j\in\Jl}y_{j}^{*}-2\eps|\OPT'|$$
    This completes the proof.
\end{proof}

Now, Lemma~\ref{lem:matchingExistence} essentially follows Lemmas~\ref{lem:valueLP}, \ref{lem:short-jobs-matched},
\ref{lem:long-jobs-matched}, and the integrality of the bipartite
matching polytope.

 \begin{proof}[Proof of Lemma~\ref{lem:matchingExistence}]
By Lemmas~\ref{lem:short-jobs-matched} and
\ref{lem:long-jobs-matched}, there exists a fractional matching of expected size 
$\sum_{j\in\Js}y_{j}^{*}+\frac{3}{4}(1-O(\eps))\sum_{j\in\Jl}y_{j}^{*}-\eps|OPT'|$. 
By Lemma~\ref{lem:valueLP} we have $\sum_{j\in J}y_{j}^*\geq (1-2\eps)\OPT$, which implies that the fractional matching has an expected size of at least $$(1-O(\eps))\sum_{j\in\Js}y_{j}^{*}+(3/4-O(\varepsilon))\sum_{j\in\Jl}y_{j}^{*}\ge(3/4-O(\varepsilon))|\OPT|.$$
As the standard matching LP is integral on bipartite graphs, there is always an integral matching of size equal to the optimal LP value, which completes the proof.
\end{proof}

Theorem~\ref{thr:mainPoly} holds by choosing $\Delta=1$ and combining Lemmas~\ref{lem:good-block-superblock-partition}, \ref{lem:solveLP}, and \ref{lem:matchingExistence}.

\begin{proof}[Proof of Theorem~\ref{thr:mainPoly}]
    By Lemma~\ref{lem:matchingExistence}, the computed matching has a size of at least $\frac{3}{4}(1-O(\eps))\cdot |\OPT|$. By rescaling $\eps$ appropriately, we obtain an $\big(\frac{4}{3}+\eps\big)$-approximation algorithm.
    The maximum matching can be computed in time $n^{O(1)}$.
    The LP can be solved in time $n^{O(K)}$ by Lemma~\ref{lem:solveLP}.
    As we choose $\Delta=1$, we have $K=O_{\eps}(1)$ due to Lemma~\ref{lem:good-block-superblock-partition}.
    Thus the running time of the algorithm is bounded by $n^{O_{\eps}(1)}$.
\end{proof}

\subsection{Proof of Lemma~\ref{lem:fractional-matching-exists}}\label{sec:OmmitedProofTechnicalConcentration}
\fab{Let us focus on a specific superblock $S\in \calS$.} We would like to match a job $j$ to an extend of 
$$Y_{j}:=\begin{cases}
    \frac{y_{j, S}^*}{(1-y_{j, L}^*)(1-y_{j, R}^*)} &\text{ if } j \in J_{S}\setminus \Jbnd\\
    0 &\text{ if } j\in \Jbnd
\end{cases}$$ 
\fab{Consider the sets $J_\ell$ as defined in Lemma \ref{lem:groups}}, and
define $N_\ell:=\sum_{j\in J_\ell}Y_{j}$ for $\ell\in [1/\eps]\setminus \{1\}$.
First, we show some technical properties.
\begin{prop}\label{prop:hatn-properties}
    For each $j\in J_S$ we have (deterministically) $Y_{j}\leq 1$ and for each $\ell\in [1/\eps]$ we have $\EE[N_\ell]=n_\ell$.
\end{prop}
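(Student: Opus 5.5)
For the first claim I would use the LP constraints at job $j$ together with a disjointness observation. Fix $j\in J_S$. Since $j$ spans $S$, the blocks $B_{j,L}$ and $B_{j,R}$ are not among the blocks of $\calB(S)$. Indeed, $r_j$ lies in $B_{j,L}$, and if $B_{j,L}\subseteq S$ then $r_j\in S$. If $r_j$ were strictly inside $S$, this would contradict $S\subseteq\tw(j)$. If $r_j$ is the left endpoint of $S$, then $B_{j,L}$ is the first block of $S$, and this degenerate case needs care. So I would first argue that the three fractional amounts $y^*_{j,L}$, $y^*_{j,R}$ and $y^*_{j,S}$ come from pairwise disjoint families of configurations. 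The clean way is to handle the case $B_{j,L}\in\calB(S)$ by noting that configurations of that block containing $j$ count toward $y^*_{j,L}$. I would then interpret $y^*_{j,S}$ as counting only non-boundary blocks, or observe that the definition of $Y_j$ stays bounded in either case. Assuming disjointness, the first LP constraint gives
\[
y^*_{j,L}+y^*_{j,R}+y^*_{j,S}\le y^*_j\le 1 .
\]
Hence $y^*_{j,S}\le 1-y^*_{j,L}-y^*_{j,R}$. Moreover,
\[
1-y^*_{j,L}-y^*_{j,R}\le (1-y^*_{j,L})(1-y^*_{j,R}),
\]
because the difference between the right and left sides is $y^*_{j,L}y^*_{j,R}\ge 0$. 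Thus $Y_j\le 1$ whenever $j\notin\Jbnd$, and trivially $Y_j=0\le1$ otherwise. If $(1-y^*_{j,L})(1-y^*_{j,R})=0$, then $y^*_{j,S}=0$ and $j\in\Jbnd$ almost surely, so I would adopt the convention $Y_j=0$.

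For the expectation I would use linearity over $j\in J_\ell$. The event $j\notin\Jbnd$ is exactly the event that neither $C^*(B_{j,L})$ nor $C^*(B_{j,R})$ contains $j$. When $B_{j,L}\neq B_{j,R}$, independence of the samples gives probability $(1-y^*_{j,L})(1-y^*_{j,R})$. Then
\[
\EE[Y_j]=y^*_{j,S}.
\]
Summing over $j\in J_\ell$ gives $\EE[N_\ell]=\sum_{j\in J_\ell}y^*_{j,S}=n_\ell$.

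The main subtlety is the case $B_{j,L}=B_{j,R}$. This means $j$ is local, so $\tw(j)$ lies inside one block and $j$ cannot span a superblock. Hence it does not occur in $J_S$. I would state this explicitly, together with the boundary-block disjointness issue above, which I expect to be the only delicate point.
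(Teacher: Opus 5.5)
Your proposal is correct and follows essentially the paper's route. The LP job constraint gives $y^*_{j,L}+y^*_{j,R}+y^*_{j,S}\le 1$, the inequality $1-y^*_{j,L}-y^*_{j,R}\le (1-y^*_{j,L})(1-y^*_{j,R})$ then yields $Y_j\le 1$, and independence of the samples for the two distinct boundary blocks gives $\Pr[j\notin\Jbnd]=(1-y^*_{j,L})(1-y^*_{j,R})$, hence $\EE[Y_j]=y^*_{j,S}$ and $\EE[N_\ell]=n_\ell$. The paper silently assumes the disjointness you flag, which really can fail when $r_j$ or $d_j$ is an endpoint of $S$. Of your two fixes only the first works, namely excluding $B_{j,L},B_{j,R}$ from $y^*_{j,S}$: with the literal definition $Y_j$ can exceed $1$ (e.g.\ $y^*_{j,L}=0.6$, $y^*_{j,R}=0$ and $B_{j,L}\in\calB(S)$ give $Y_j\ge 0.6/0.4=1.5$ with positive probability).
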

\begin{proof}
    Let $j\in J_S$.
    As $y_{j, L}^*+y_{j, R}^*+\sum_{S'\in \calS:S'\subseteq \tw(j)}y^*_{j, S'}\leq 1$ due to the constraints of \eqref{eq:configuration-lp}, we have $y_{j, S}^*\leq 1-y_{j, L}^*-y_{j, R}^*\leq (1-y_{j, L})(1-y_{j, R})$. This implies $Y_{j}\leq 1$.
    
    As $\Pr[j\in J_S \setminus \Jbnd]=(1-y_{j,L}^*)(1-y_{j, R}^*)$, we have $\EE[Y_{j}]=y_{j, S}^*$ Summing over all $j\in J_\ell$, we obtain $\EE[N_\ell]=\sum_{j\in J_\ell}y_{j, S}^*=n_\ell$.
\end{proof}
As already mentioned, we want to match a job $j\in J_\ell$ to an extend of $Y_{j}$ to slots corresponding to jobs in $J_{\ell-1}$.
As the time window of $j$ spans $S$ and each slot corresponding to a job in $J_{\ell-1}$ has a length of at least $p_j$ due to Lemma~\ref{lem:groups}, we can match $j$ to any slot corresponding to any job in $J_{\ell-1}$.
The total extend to which we want to match jobs is $N_\ell$ and the total number of slots is $\bar{\fab{N}}_{\ell-1}$.
As we have a complete bipartite graph between these sets of nodes, the maximum matching has a size of $\min\{N_\ell, \bar{\fab{N}}_{\ell-1}\}$.
Therefore, the major part of the proof is devoted to show that $\min\{N_\ell, \bar{\fab{N}}_{\ell-1}\}$ is in expectation close to $n_\ell$.

In the following paragraph, we give some intuition. 
By Lemma~\ref{lem:block-concentration} we already know that $\bar{\fab{N}}_{\ell-1}$ is close to $n_\ell$ with probability $1-\eps^2$.
So the goal is to obtain a similar bound for $N_\ell$. 
We know that $\EE[N_\ell]=n_\ell$ by Proposition~\ref{prop:hatn-properties}, so we only need some kind of concentration for $N_\ell$.
Recall that \fab{$Y_{j}$} depends on the sampled configurations for the boundary blocks for $j$.
This dependence also implies that the random variables $Y_{j}$ are not independent.
Hence we cannot directly apply standard Chernoff's bound to obtain the desired concentration for $N_\ell$.
To our advantage, the correlation between the random variables is still quite low: 
If two random variables $Y_{j}$ and $Y_{j'}$ are not independent, then there must be a block which is a boundary block for both $j$ and $j'$.
We make this dependence even weaker. We define a new random variable $\tilde{Y}_j$, which equals $Y_j$ except if there is a boundary block $B$ of $j$ with $\Pr[j\in J_{C^*(B)}]<\eps$ and nevertheless we sample a configuration $C^*(B)$ with $j\in J_{C^*(B)}$. 
In this case $\tilde{Y}_j$ behaves the same as $Y_j$ behaves when we sample a configuration $C^*(B)$ with $j\not\in J_{C^*(B)}$.
As this event only occurs with probability $\eps$ this sacrifices only an $\eps$ fraction of the profit in expectation. 
The advantage of this is that for a job $j$, there are now only $2K/\eps$ other jobs $j'$ for which $Y_{j}$ and $Y_{j'}$ are not independent. \fab{Indeed,} each such job $j'$ needs to share a boundary block with $j$, and \fab{$j'$} must be scheduled in this boundary block with probability at least $\eps$. \fab{Since at most $K$ fractional jobs can be scheduled in a block, there can be at most $K/\eps$ jobs fractionally assigned to a block by an amount of at least $\eps$.} This is enough independence to obtain the desired concentration for $N_\ell$, using the results by \cite{gavinsky2015tail}.

Now we make this formal.
First we introduce the concept of read-$k$ families (see \cite{gavinsky2015tail}), which are defined as follows. 
\begin{defn}[Read-$k$ families \cite{gavinsky2015tail}]\label{def:read-k-family}
    Let $C_1, \dots, C_b$ be independent random variables and let $k\in \NN$ and $J'$ be a finite set. For each $j \in J'$, let $A_j \subseteq [b]$ and let $f_j:(C_p)_{p\in A_j}\to [0,1]$. Assume that $|\{j:\;b'\in A_j\}| \leq k$ for every $b'\in [b]$.
    Then the random variables $Z_j = f_j\big((C_p)_{p\in A_j}\big)$ for $j\in J'$ are called a \emph{read-$k$ family}.
\end{defn}
Intuitively, \fab{in our setting} the random variables $C_1, \dots, C_b$ are the sampled configurations $C^*(B)$ for the blocks $B\in \calB$. For every job $j \in J_S$, the set $A_j$ is a subset of the boundary blocks for $j$, and  $Z_j$ is \fab{a random variable \emph{close to}} $\fab{Y_{j}}$.\fabr{Removed reference to function since $Z_j$ is enough} For each job $j\in J_S$, we would like to choose $\fab{A}_j$ as the two boundary blocks \fab{for $j$} and $Z_j$ as the random variable $Y_{j}$, as the value of $Y_{j}$ is determined by the sampled configurations for two boundary blocks \fab{for} $j$, i.e.$B_{j, L}$ and $B_{j, R}$. 
But this doesn't result in a read-$k$-family (for a reasonable value of $k$) as it might be that all jobs are released within the same block and thus all have the same boundary block and all depend on the same sampled configuration.
Therefore, we will define new random variables $\tilde{Y}_{j}$ with stronger independence properties (which will form the read-$k$-family) as follows:
\ale{$$\tilde{Y}_{j}:=\begin{cases}
    0 &\text{ if } y_{j, L}^*\geq \eps \text{ and } j \in J_{C^*(B_{j, L})}\\
    0 &\text{ if } y_{j, R}^*\geq \eps \text{ and } j \in J_{C^*(B_{j, R})}\\
    \frac{y_{j, S}^*}{(1-y_{j, L}^*)(1-y_{j, R}^*)} &\text{ otherwise }
\end{cases}$$ }
and let $\tilde{N}_\ell:=\sum_{j\in J_\ell}\tilde{Y}_{j}$. 
Using these random variables, we obtain the following properties.
\begin{lem}\label{lem:read-k-fulfilled}
The following holds.
\begin{itemize}
    \item For each $\ell\in [1/\eps]$ we have $\tilde{N}_\ell\geq N_\ell$ and $\EE[ \tilde{N}_\ell- N_\ell]\leq \ale{O(\eps)}\cdot\sum_{j\in J_\ell} y_{j, S}^*$.
    \item The random variables $\tilde{Y}_{j}$ for $j\in J_S$ are a read-$(K/\eps)$ family.
\end{itemize}
\end{lem}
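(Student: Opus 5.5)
The plan is to compare $\tilde Y_j$ with $Y_j$ job by job for every $j\in J_S$, and then read off both bullets. Two preliminary facts are used throughout. First, every $j\in J_S$ spans $S$, so $\tw(j)$ is not contained in a single block. Hence $j$ is global and $B_{j,L}\neq B_{j,R}$, and the events $j\in J_{C^*(B_{j,L})}$ and $j\in J_{C^*(B_{j,R})}$ are independent with probabilities $y^*_{j,L}$ and $y^*_{j,R}$. Second, $Y_j=0$ exactly when $j\in\Jbnd$, i.e., when one of these two events occurs. Call a boundary block $B$ of $j$ \emph{heavy} if $y^*_{j,B}\ge\eps$ and \emph{light} otherwise.

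\emph{First bullet.} By definition, $\tilde Y_j=0$ only when $j$ lies in the sampled configuration of a heavy boundary block, and this event is contained in $\{j\in\Jbnd\}$. Otherwise $\tilde Y_j=y^*_{j,S}/((1-y^*_{j,L})(1-y^*_{j,R}))\ge Y_j$. So $\tilde Y_j\ge Y_j$ pointwise, and summing over $J_\ell$ gives $\tilde N_\ell\ge N_\ell$.

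The difference $\tilde Y_j-Y_j$ is nonzero only on the event $E_j$ that $j$ is in none of the sampled configurations of its heavy boundary blocks but is in the sampled configuration of at least one light one. On $E_j$ the difference equals $y^*_{j,S}/((1-y^*_{j,L})(1-y^*_{j,R}))$. By independence of the two boundary blocks,
\[
\Pr[E_j]\le \prod_{B\text{ heavy}}(1-y^*_{j,B})\cdot\sum_{B\text{ light}}y^*_{j,B}.
\]
Dividing by $(1-y^*_{j,L})(1-y^*_{j,R})$ cancels the heavy factors. This leaves
\[
\EE[\tilde Y_j-Y_j]\le y^*_{j,S}\cdot\frac{\sum_{B\text{ light}}y^*_{j,B}}{\prod_{B\text{ light}}(1-y^*_{j,B})}\le y^*_{j,S}\cdot\frac{2\eps}{(1-\eps)^2}=O(\eps)\,y^*_{j,S}.
\]
Summing over $j\in J_\ell$ gives the bound. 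The one delicate point is this case analysis. A naive bound $\Pr[E_j]\le 2\eps$ combined with the value bound $Y_j\le 1$ would only give $O(\eps)$ per job, not $O(\eps)\,y^*_{j,S}$. It is the cancellation of the heavy factors $(1-y^*_{j,B})$ against the denominator that makes the bound proportional to $y^*_{j,S}$.

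\emph{Second bullet.} Take as independent variables the sampled configurations $C^*(B)$, $B\in\calB$, and for each $j\in J_S$ let $A_j$ be the set of heavy boundary blocks of $j$. By definition, $\tilde Y_j$ depends only on $(C^*(B))_{B\in A_j}$, and by Proposition~\ref{prop:hatn-properties} it takes values in $[0,1]$. It remains to bound, for a fixed block $B$, the number of jobs $j$ with $B\in A_j$. Each such $j$ satisfies $\Pr[j\in J_{C^*(B)}]\ge\eps$. Moreover,
\[
\sum_{j}\Pr[j\in J_{C^*(B)}]=\EE\big[|J_{C^*(B)}|\big]\le K,
\]
since every configuration has at most $K$ jobs and $\sum_{C\in\calC(B)}x^*_C=1$. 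Hence at most $K/\eps$ jobs have $B$ as a heavy boundary block, which is exactly the read-$(K/\eps)$ condition of Definition~\ref{def:read-k-family}.
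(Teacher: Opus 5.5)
Your proof is correct and follows essentially the same argument as the paper. You get pointwise domination from $\tilde Y_j=0\Rightarrow j\in\Jbnd$, bound $\EE[\tilde Y_j-Y_j]$ via the light-boundary event with the heavy factors cancelling against $(1-y^*_{j,L})(1-y^*_{j,R})$ (the paper phrases this as conditioning on $\tilde Y_j>0$), and obtain the read-$(K/\eps)$ property from $\sum_j y^*_{j,B}\le K$. Your explicit observation that $B_{j,L}\neq B_{j,R}$ for $j\in J_S$ usefully pins down an independence the paper uses without comment.
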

\begin{proof}
    We start with the first property. Let $j\in J_S$. 
    By construction, the random variables $Y_{j}$ and $\tilde{Y}_{j}$ can only take the two values $0$ and $\frac{y_{j, S}^*}{(1-y_{j, L}^*)(1-y_{j, R}^*)}$.
    Note that $\tilde{Y}_{j}=0$ implies $j\in J_{C^*(B_{j, L})}$ or $j\in J_{C^*(B_{j, R})}$. Both imply $j\in \Jbnd$ and thus $Y_{j}=0$.
    Thus we obtain $\tilde{Y}_{j}\geq Y_{j}$ and by summing this over all $j\in J_\ell$ we obtain $\tilde{N}_\ell\geq N_\ell$.

    Let $j\in J_S$. 
    Suppose that $\tilde{Y}_{j}>Y_{j}$, i.e., $Y_{j}=0$ and $\tilde{Y}_{j}>0$. 
    This implies $j\in \Jbnd$ and thus $j \in J_{C^*(B_{j, L})}$ or $j \in J_{C^*(B_{j, R})}$. 
    The fact that $\tilde{Y}_j>0$ implies that there exists $X\in \{L , R\}$ such that $j \in J_{C^*(B_{j, X})}$ and $y_{j, X}^*<\eps$.
    So $\tilde{Y}_{j}>Y_{j}$ implies that there is $X\in \{L, R\}$ with $y_{j, X}^*<\eps$ and $j \in J_{C^*(B_{j, X})}$.
    
    \alnote{I rewrote the next part}
    \ale{We condition on the event that $\tilde{Y}_j>0$.
    A crucial observation in that this conditioning is independent from the sampled configuration $C^*(B_{j, L})$ (\fab{resp.,} $C^*(B_{j, R})$) if $y_{j, L}^*<\eps$ (\fab{resp.,} $y_{j, R}^*<\eps$).
    Thus for $X\in \{L, R\}$ with $y_{j, X}<\eps$ we have $\Pr[j \in J_{C^*(B_{j, X})}|\tilde{Y}_j>0]\leq y_{j, X}^*\leq \eps$.
    This implies 
    \begin{align*}
        \Pr[\tilde{Y}_j>Y_j|\tilde{Y}_j>0]&\leq \sum_{X\in\{L, R\}:y_{j, X}^*<\eps}\Pr[j \in J_{C^*(B_{j, R})}|\tilde{Y}_j>0] \leq 2\eps.
    \end{align*}
    Note that $\tilde{Y}_j=\frac{y_{j, S}^*}{(1-y_{j, L}^*)(1-y_{j, R}^*)}$ when $\tilde{Y}_j>0$.
    Thus
    \begin{align*}
        \EE[ \tilde{Y}_{j}-Y_{j}]&=\frac{y_{j, S}^*}{(1-y_{j, L}^*)(1-y_{j, R}^*)}\cdot \Pr[\tilde{Y}_{j}>Y_{j}]\\
        &=\frac{y_{j, S}^*}{(1-y_{j, L}^*)(1-y_{j, R}^*)}\cdot \Pr[\tilde{Y}_j>0]\cdot \Pr[\tilde{Y}_{j}>Y_{j}|\tilde{Y}_j>0]\\
        &\leq \frac{2 \eps \Pr[\tilde{Y}_j>0]}{(1-y_{j, L}^*)(1-y_{j, R}^*)}y_{j, S}^*\\
        &\leq \frac{2\eps}{(1-\eps)^2} y_{j, S}^*\\
        &=O(\eps)\cdot y_{j, S}^*.
    \end{align*}}
    Again, summing over all  $j\in J_\ell$ yields $\EE[ \tilde{N}_\ell- N_\ell]\leq O(\eps)\cdot\sum_{j\in J_\ell} y_{j, S}^*$.

    Now we show the second property.
    By construction $\tilde{Y}_{j}$ can only depend on the sampled configurations for the blocks $B_{j, L}$ and $B_{j, R}$ and not on any other block in $\calB$. 
    Let $X\in \{L, R\}$. Then $\tilde{Y}_{j}$ depends on the configuration of the block $B_{j, X}$ only if $y_{j, X}^*\geq \eps$.
    Consider a block $B\in \calB$. For $\tilde{Y}_{j}$ to be dependent on $C^*(B)$, the above implies that $j$ is scheduled to an extend of $\eps$ within $B$ by the optimal LP solution, i.e., $B=B_{j, L}$ and $y_{j, L}\geq \eps$ or $B=B_{j, R}$ and $y_{j, R}\geq \eps$. 
    Notice that at most $K$ jobs can be \fab{fractionally} scheduled within any block $B$. \fab{Indeed,}
    $$
    \fab{\sum_{j\in J}y^*_{j,B}=\sum_{j\in J}\sum_{C\in \calC(B):j\in J_C}x^*_{C}=\sum_{C\in \calC(B)}|J_C|x^*_C\leq K.}
    $$   
Therefore there can be at most $K/\eps$ jobs dependent on any block $B$.
    This implies that the random variables $\tilde{Y}_{j}$ for $j\in J_S$ are a read-$K/\eps$ family, by choosing $C_1, \dots, C_{\fab{|\calB|}}$ as the sampled configurations $C^*(B)$ for the blocks $B\in \calB$ and for each $j\in J_S$ the set $A_j$ contains the block $B\in \calB$ if and only if $B=B_{j, L}$ and $y_{j, L}\geq \eps$ or $B=B_{j, R}$ and $y_{j, R}\geq \eps$.
\end{proof}
As we now have a read-$K/\eps$ family of random variables, we can apply \cite[Theorem~1.1]{gavinsky2015tail}. This states that the adjustment of Chernoff's bound also holds for read-$k$ families, when the exponent is divided by $k$.
\begin{lem}[\cite{gavinsky2015tail}]\label{lem:read-k-concentration}
Let $Y_1, \dots ,Y_r$ be a family of read-$k$
     variables taking values in $[0, 1]$. Then for any $\delta>0$ we have
     $$\Pr\Big[\sum_{s=1}^{r}Y_s< (1-\delta) \EE\big[\sum_{s=1}^{r}Y_s\big]\Big]\leq \exp\left(-\frac{\delta^2}{2\cdot k}\cdot \EE\big[\sum_{s=1}^r Y_s\big]\right).$$
\end{lem}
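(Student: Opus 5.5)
We would prove this lower-tail bound by the usual exponential-moment method. The one new ingredient is a Hölder-type inequality that replaces independence of the $Y_s$ by the read-$k$ structure. Write $\mu:=\EE[\sum_s Y_s]$ and $\mu_s:=\EE[Y_s]$. Recall that $Y_s=f_s\big((C_p)_{p\in A_s}\big)$ for independent $C_1,\dots,C_b$, with each index $p$ lying in at most $k$ of the sets $A_s$.

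\emph{Key step (Finner's inequality).} For nonnegative functions $g_s$ of $(C_p)_{p\in A_s}$ we want
\begin{equation*}
\EE\Big[\prod_{s=1}^r g_s\Big]\le \prod_{s=1}^r \big(\EE[g_s^k]\big)^{1/k}.
\end{equation*}
We would prove this by induction on $b$, integrating out one coordinate at a time.
\begin{itemize}
\item Condition on $C_1,\dots,C_{b-1}$. The factors $g_s$ with $b\notin A_s$ are then constants.
\item At most $k$ of the factors depend on $C_b$. By the ordinary Hölder inequality with exponents $k$ (padding with constant factors $1$ if fewer than $k$ depend on $C_b$),
\begin{equation*}
\EE_{C_b}\Big[\prod_{s:\,b\in A_s} g_s\Big]\le \prod_{s:\,b\in A_s}\big(\EE_{C_b}[g_s^k]\big)^{1/k}.
\end{equation*}
\item The new functions $h_s:=(\EE_{C_b}[g_s^k])^{1/k}$ depend only on $(C_p)_{p\in A_s\setminus\{b\}}$. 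The family is still read-$k$ on $C_1,\dots,C_{b-1}$, and $\EE[h_s^k]=\EE[g_s^k]$.
\item The induction hypothesis applied to the $h_s$ then closes the step.
\end{itemize}
We expect this inequality to be the main obstacle. The bookkeeping has to be right: use exponent $k$ at every coordinate, and check that the powers recombine to exactly $\EE[g_s^k]^{1/k}$ at the end.

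\emph{Chernoff computation.} Apply the key step with $g_s=e^{-\lambda Y_s}$ for some $\lambda>0$, so that $\EE[e^{-\lambda\sum_s Y_s}]\le\prod_s(\EE[e^{-k\lambda Y_s}])^{1/k}$. Set $t:=k\lambda$.
\begin{itemize}
\item Since $Y_s\in[0,1]$, convexity gives $e^{-tY_s}\le 1-(1-e^{-t})Y_s$.
\item Hence $\EE[e^{-tY_s}]\le 1-(1-e^{-t})\mu_s\le \exp(-(1-e^{-t})\mu_s)$.
\item Taking the product over $s$ yields $\EE[e^{-\lambda\sum_s Y_s}]\le \exp\big(-(1-e^{-t})\mu/k\big)$.
\end{itemize}
By Markov's inequality applied to $e^{-\lambda\sum_s Y_s}$,
\begin{equation*}
\Pr\Big[\sum_s Y_s<(1-\delta)\mu\Big]\le \exp\Big(\frac{\mu}{k}\big(t(1-\delta)-(1-e^{-t})\big)\Big).
\end{equation*}

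\emph{Choice of $t$ and final estimate.} We may assume $\delta<1$. If $\delta\ge 1$ the event $\sum_s Y_s<(1-\delta)\mu\le 0$ is impossible, since the $Y_s$ are nonnegative. Choose $t=-\ln(1-\delta)$. The exponent becomes $\frac{\mu}{k}\big(-\delta-(1-\delta)\ln(1-\delta)\big)$, which is the familiar Chernoff lower-tail expression with $\mu$ replaced by $\mu/k$. The elementary inequality $(1-\delta)\ln(1-\delta)\ge -\delta+\delta^2/2$ for $\delta\in[0,1)$ then gives the claimed bound $\exp(-\delta^2\mu/(2k))$. To verify this inequality, note that both sides vanish at $\delta=0$ and compare derivatives: the derivative of the left side is $-\ln(1-\delta)-1$ and that of the right side is $-1+\delta$, and $-\ln(1-\delta)\ge\delta$.
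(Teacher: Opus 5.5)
Your proof is correct, but it takes a genuinely different route from the paper. The paper does not reprove the read-$k$ concentration. It quotes Theorem~1.1 of Gavinsky et al.\ in its Kullback--Leibler form, $\Pr[\sum_s Y_s<(p-\eps')r]\le\exp(-D_{KL}(p-\eps'\|p)\,r/k)$ with $p=\EE[\sum_s Y_s]/r$, and sets $\eps'=\delta p$. It then only proves the elementary estimate $D_{KL}(q\|p)\ge (p-q)^2/(2p)$ for $q\le p$, using $x\ln x\ge x-1$ for $x\ge 1$ and $x\ln x\ge (x-1)+\frac12(x-1)^2$ for $x\le 1$.

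You instead derive the read-$k$ bound itself. Finner's inequality replaces independence in the moment-generating-function argument. Your coordinate-by-coordinate induction is sound: the reduced family stays read-$k$, and $\EE[h_s^k]=\EE[g_s^k]$ by Fubini. The chord bound $e^{-tY}\le 1-(1-e^{-t})Y$ then handles $[0,1]$-valued variables directly.

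Your intermediate exponent $\frac{\mu}{k}\bigl(\delta+(1-\delta)\ln(1-\delta)\bigr)$ is the Poisson-type one, which is slightly weaker than the Bernoulli KL exponent the paper starts from. The paper's step $(1-p)f\bigl(\frac{1-q}{1-p}\bigr)\ge p-q$ (with $f(x)=x\ln x$) discards exactly this difference. Your final inequality $(1-\delta)\ln(1-\delta)\ge-\delta+\delta^2/2$ is the paper's second-order bound on $x\ln x$ evaluated at $x=1-\delta$.

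So the paper's version is shorter but rests on the citation. Yours is self-contained and shows transparently where the factor $1/k$ in the exponent comes from. Both arrive at the same bound $\exp(-\delta^2\mu/(2k))$.
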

\begin{proof}
    The statement in \cite{gavinsky2015tail} is slightly more general. Let $p=\EE\big[\sum_sY_s\big]/r$. 
    In \cite{gavinsky2015tail} it is shown that, for each $\eps'>0$,  $$\Pr\Big[\sum_{s=1}^r Y_s< (p-\eps')r\Big]\leq \exp\left(-D_{KL}(p-\eps'||p)\cdot \frac{r}{k}\right)$$ where $D_{KL}(q||p):=q\ln(\frac{q}{p})+(1-q)\ln(\frac{1-q}{1-p})$ is the Kullback-Leibler divergence. 
    We use the standard lower bound\alnote{I added a proof, because I didn't find a good source to cite this inequality...} 
    \begin{equation}\label{eqn:KLlowerBound}
    D_{KL}(q||p)\geq \frac{(p-q)^2}{2\cdot p}\quad \text{for }\fab{0<}q\leq p\fab{<1}
    \end{equation}
    for obtaining a multiplicative Chernoff's bound. We prove \eqref{eqn:KLlowerBound} later \fab{for completeness}. Let us first complete the proof using this lower bound.
    We choose $\eps'=\delta\cdot p$ and obtain $$D_{KL}(p-\eps'||p)\cdot r\geq \frac{(\delta p)^2}{2 p}\cdot r =\frac{\delta^2}{2}\EE\big[\sum_{s=1}^r Y_s\big]$$ which yields the desired result.
    
    Thus it remains to prove \eqref{eqn:KLlowerBound}.
    Consider the function $f:(0, \infty)\to \RR$, $f(x)=x\cdot \ln x$.
    Using the fundamental theorem of calculus, we obtain some bounds in $f(x)$.
    We have $f'(x)=1+\ln x$ and $f''(x)=1/x$.
    For $x\geq 1$, we have $f'(x)\geq 1$, which implies for all $x\geq 1$
    $$f(x)=f(1)+\int_{t=1}^{x}f'(t)\,dt\geq f(1)+\int_{t=1}^{x}1\,dt=x-1$$
    Similarly, for $x\leq 1$ we have $f''(x)\geq 1=f''(1)$, which implies for $x\leq 1$:\alnote{I changed it, such that the lower bound of the integral is smaller}
    \begin{align*}
        f(x)= f(1)\ale{-\int_{t=x}^1} f'(t)dt&=f(1)\ale{-\int_{t=x}^1} \ale{\Big(}f'(1)\ale{-\int_{s=t}^1} f''(s)\, ds \ale{\Big)} dt\\
        &\geq f(1)-f'(1)(1-x)+\frac{f''(1)}{2}(x-1)^2=(x-1)+\frac{1}{2}(x-1)^2.
    \end{align*}
    We can rewrite the Kullback-Leibler divergence using the function $f$:
    $$D_{KL}(q||p)=p\cdot f\Big(\frac{q}{p}\Big)+(1-p)f\Big(\frac{1-q}{1-p}\Big)$$
    As we assumed $q\leq p$, we have $\frac{q}{p} \leq 1$ and $\frac{1-q}{1-p}\geq1$.
    Now we plug in the above bounds and obtain
    \begin{align*}
        D_{KL}(q||p)&=p\cdot f\Big(\frac{q}{p}\Big)+(1-p)f\Big(\frac{1-q}{1-p}\Big)\\
        &\geq p\cdot \left(\frac{q}{p}-1+\frac{1}{2}\left(\frac{q}{p}-1\right)^2\right)+(1-p)\cdot \left(\frac{1-q}{1-p}-1\right)\\
        &=q-p+\frac{(q-p)^2}{2p}+(1-q)-(1-p)\\
        &=\frac{(q-p)^2}{2p}.
    \end{align*}
    This completes the proof.
\end{proof}
Now we apply the above concentration bound to $\tilde{N}_\ell$ and obtain the following result.
\begin{lem}\label{lem:holes-cocentration}
    Let $\ell\in [1/\eps]$. We have that
    \begin{equation*}
        \Pr\big[ \tilde{N}_\ell \leq (1-\eps) n_\ell-2 K/\eps^4 \big]\leq \eps.
    \end{equation*}
\end{lem}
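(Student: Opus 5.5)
The plan is to apply Lemma~\ref{lem:read-k-concentration} to the read-$(K/\eps)$ family $\{\tilde{Y}_j\}_{j\in J_\ell}$ from Lemma~\ref{lem:read-k-fulfilled}, and to handle small $n_\ell$ by a trivial case distinction, as in the proof of Lemma~\ref{lem:block-concentration}.

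\emph{Case 1: $n_\ell < 2K/\eps^4$.} Then $(1-\eps)n_\ell - 2K/\eps^4 < 0 \le \tilde{N}_\ell$. Hence the event in question has probability $0$, and there is nothing to show.

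\emph{Case 2: $n_\ell \ge 2K/\eps^4$.} We need two ingredients.

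\begin{itemize}
\item \emph{Range of the variables.} Each $\tilde{Y}_j$ takes values in $\{0, y^*_{j,S}/((1-y^*_{j,L})(1-y^*_{j,R}))\}$. By the argument of Proposition~\ref{prop:hatn-properties}, this quantity is at most $1$, so $\tilde{Y}_j\in[0,1]$.
\item \emph{Expectation.} Lemma~\ref{lem:read-k-fulfilled} gives $\tilde{N}_\ell\ge N_\ell$ pointwise, and Proposition~\ref{prop:hatn-properties} gives $\EE[N_\ell]=n_\ell$. Together, $\EE[\tilde{N}_\ell]\ge n_\ell$. (The upper bound $\EE[\tilde N_\ell]\le (1+O(\eps))n_\ell$ from Lemma~\ref{lem:read-k-fulfilled} is not needed here.)
\end{itemize}

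Since Lemma~\ref{lem:read-k-fulfilled} shows the $\tilde{Y}_j$ for $j\in J_S$ form a read-$(K/\eps)$ family, so does the subfamily indexed by $J_\ell$. We apply Lemma~\ref{lem:read-k-concentration} with $k=K/\eps$ and $\delta=\eps$:
\[
\Pr\big[\tilde{N}_\ell<(1-\eps)\EE[\tilde{N}_\ell]\big]\le \exp\Big(-\frac{\eps^2\cdot\eps}{2K}\,\EE[\tilde{N}_\ell]\Big)\le \exp\Big(-\frac{\eps^3}{2K}\cdot\frac{2K}{\eps^4}\Big)=e^{-1/\eps}\le\eps,
\]
where the last step holds for $\eps$ small enough.

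To conclude, note that $(1-\eps)n_\ell \le (1-\eps)\EE[\tilde{N}_\ell]$. Therefore the event $\tilde{N}_\ell\le(1-\eps)n_\ell-2K/\eps^4$ implies $\tilde{N}_\ell<(1-\eps)\EE[\tilde{N}_\ell]$, and its probability is at most $\eps$.

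The only subtle step is using the lower bound $\EE[\tilde N_\ell]\ge n_\ell$ rather than an exact expectation. This works because the concentration bound is monotone in $\EE[\tilde N_\ell]$: a larger mean only makes the exponent more negative and the threshold $(1-\eps)\EE[\tilde N_\ell]$ larger, so the event above is still covered.
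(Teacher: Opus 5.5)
Your proof is correct and follows the paper's own argument. You dismiss the case $n_\ell < 2K/\eps^4$, then apply the read-$(K/\eps)$ tail bound with $\delta=\eps$ using $\EE[\tilde N_\ell]\ge \EE[N_\ell]=n_\ell\ge 2K/\eps^4$; you are also more careful than the paper about passing from the threshold $(1-\eps)\EE[\tilde N_\ell]$ to $(1-\eps)n_\ell$ and about strict versus non-strict inequality, which the paper glosses over.
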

\begin{proof}
    If $n_\ell\leq 2 K/\eps^4$, there is nothing to show. So assume $n_\ell\geq 2K/\eps^4$. \fab{It is sufficient to show that $\tilde{N}_\ell \leq (1-\eps) n_\ell$ happens with probability at most $\eps$.}
    By Lemma~\ref{lem:read-k-fulfilled}, the variables $\tilde{Y}_{j}$ are a read-$K/\eps$ family. 
    Also by Lemma~\ref{lem:read-k-fulfilled} and Proposition~\ref{prop:hatn-properties}, we have $\EE[\tilde{N}_\ell]\geq \EE[N_\ell]=n_\ell\geq 2 K/\eps^4$.
    By applying Lemma~\ref{lem:read-k-concentration} to $\tilde{N}_\ell$ for $\delta=\eps$, we obtain:
    $$ \Pr\big[ \tilde{N}_\ell \leq (1-\eps) n_\ell\big]\leq\exp\left( -\frac{\eps^2}{2\cdot K/\eps}\cdot 2K/\eps^4\right)\leq \exp(-1/\eps)\leq \eps.$$
\end{proof}
Now we can prove Lemma~\ref{lem:fractional-matching-exists}.
\begin{proof}[Proof of Lemma~\ref{lem:fractional-matching-exists}]

 For each $\ell\in [1/\eps]\setminus\{1\}$, let $f^\ell:E_S\to [0,1]$ denote a maximal \fab{fractional} matching between the jobs in $J_\ell$ and slots corresponding to jobs in $J_{\ell-1}$, where each job $j\in J_\ell$ is can be matched only to an extend of $Y_{j}$, i.e., which fulfills:
 \begin{itemize}
     \item For each $j\in J_S\setminus J_\ell$ and each $q\in Q$ we have $f^\ell(\{j, q\})=0$.
     \item For each $j\in J_\ell$ and each slot $q\in Q_S$ not corresponding to a job in $J_{\ell-1}$ we also have $f^\ell(\{j, q\})=0$.
     \item For each $j\in J^\ell$ we have $\sum_{q, \in Q_S}f^\ell(\{j, q\})\leq Y_{j}$.
 \end{itemize}
 Let $f_S(e):=\sum_{\ell=2}^{1/\eps}f_\ell(e)$ denote the union of these fractional matchings.
As shown before,\fabr{Before where? The proof is long. Maybe add and equation} we have $\sum_{e\in E_S}f_\ell(e)=\min\{N_{\ell}, \bar{\fab{N}}_{\ell-1}\} $.
 For each $j\in \Jbnd$ we have that $Y_{j}=0$ and thus $j$ is not matched at all.
And for each $j\in J_S \setminus \Jbnd$ there exists at most one $\ell\in [1/\eps]$ with $j\in J_\ell$ and thus $j$ is matched to an extend of at most $\frac{y_{j, S}^*}{(1-y_{j, L}^*)(1-y_{j, R}^*)}$.

It remains to compute the expected size of the matching. 
By construction, we have that 
$\sum_{e\in E}f_S(e)=\sum_{\ell=2}^{1/\eps}\min\{N_{\ell}, \bar{\fab{N}}_{\ell-1}\}$. Let $\ell\in [1/\eps]\setminus \{1\}$.
By Lemma~\ref{lem:block-concentration}, we have that $\bar{\fab{N}}_{\ell-1}\geq (1-\eps)n_{\ell-1}-2K/\eps^3$ holds with probability at least $1-\eps\ale{^2}$.
And by Lemma~\ref{lem:holes-cocentration}, we have that $\tilde{N}_{\ell}\geq (1-\eps){n_\ell}-2K/\eps^4$ holds also with probability at least $1-\eps$.
Thus with probability at least $1-2\eps$ both events occur. Thus we have with probability $1-2\eps$ that 
\begin{align*}
    \min\{\tilde{N}_\ell, \bar{\fab{N}}_{\ell-1}\}&\geq (1-\eps)\min\{n_\ell, n_{\ell-1}\}-\max\{2K/\eps^4,2K/\eps^3\}\\
    &= (1-\eps)\min\{n_\ell, n_{\ell-1}\}-2K/\eps^4
\end{align*}
Recall that $n_\ell\geq \eps\sum_{j \in J_S}y_{j, S}^*-1, n_{\ell-1}\geq \eps\sum_{j \in J_S}y_{j, S}^*-1$ by Lemma~\ref{lem:groups}.
Thus we obtain
\begin{align*}
    &\EE[\min\{\tilde{N}_\ell, \bar{\fab{N}}_{\ell-1}\}]\\
    &\geq \Pr\big[\min\{\tilde{N}_\ell, \bar{\fab{N}}_{\ell-1}\}\geq (1-\eps)\min\{n_\ell, n_{\ell-1}\}-2K/\eps^4\big] \\
    & \quad\ \cdot\left((1-\eps)\min\{n_\ell, n_{\ell-1}\}-2K/\eps^4\}\right)\\
    &\geq (1-2\eps)(1-\eps)(\eps\sum_{j \in J_S}y_{j, S}^*-1)-(1-\fab{2}\eps)\cdot 2K/\eps^4\\
    &\geq (1-3\eps)\cdot \eps \cdot \sum_{j \in J_S}y_{j, S}^* - 2K/\eps^4.
\end{align*}
This implies
\begin{align*}
    \EE[\sum_{e\in E}f_S(e)]
    &=\sum_{\ell=2}^{1/\eps}\EE[\min\{N_{\ell}, \bar{\fab{N}}_{\ell-1}\}]\\
    &\geq \sum_{\ell=2}^{1/\eps}\big(\EE[\min\{\tilde{N}_{\ell}, \bar{\fab{N}}_{\ell-1}\}]-\ale{\EE[\tilde{N}_\ell-N_\ell]\big)}\\
    &\overset{\text{Lem. }\ref{lem:read-k-fulfilled}}{\geq}(1-3\eps)(1-\eps) \sum_{j \in J_S}y_{j, S}^* - 2K/\eps^5\ale{-O(\eps)\sum_{j\in J_S}y_{j, S}^*}\\
    &\geq (1-O(\eps))\sum_{j \in J_S}y_{j, S}^* - 2K/\eps^5,
\end{align*}
which completes the proof.
\end{proof}

\section{A pseudopolynomial time $(5/4+\varepsilon)$-approximation}
\label{sec:pseudopolynomialTimeAlgorithm}

In this section we show how to improve our approximation ratio to
$5/4+\eps$, using pseudopolynomial running time.
Lemma~\ref{lem:matchingExistence}
states that in our previous algorithm, we lose a factor of $4/3+O(\eps)$
on the profit of the global jobs but only
a factor of $1+O(\eps)$ on the profit of the local jobs (compared to the optimal LP solution).
In this section, we present a different rounding
method that loses only a factor of $1+O(\eps)$ on the profit of the
global jobs, but does not schedule any local job. Therefore, the best of the two
solutions will then yield a $(5/4+O(\eps))$-approximation by a simple computation.

For this section, we define $\Delta:=4(\log_2 T+1)/\eps^4$. We want to apply Lemma~\ref{lem:good-block-superblock-partition} and solve the configuration-LP
for the resulting block-superblock \aw{partition}. However, it is not clear how to solve it in
(pseudo-)polynomial time since possibly $K=\Theta_{\eps}(\log T)$
and Lemma~\ref{lem:solveLP} \aw{guarantees polynomial running time only if $K=O_\eps(1)$}.
Also, it is not clear how to solve instances where $|\OPT| \le 5\Delta\fab{K_0}(2K_0/\eps^5)^{1/\eps}$ (see Lemma~\ref{lem:good-block-superblock-partition}) since the latter quantity can be up to $\Theta_{\eps}(\log T)$. \fab{This is discussed in Section \ref{sec:pseudopolyLP}. The alternative rounding algorithm is then presented in Section~\ref{sec:secondRounding}.}

\subsection{Solving the linear program and instances with small optimal solutions}
\label{sec:pseudopolyLP}

We resolve \fab{the above mentioned} issues \fab{related \aw{to our choice of} $\Delta=\Theta_\eps(\log T)$} with an algorithmic technique that, intuitively, searches for solutions with only few jobs and computes the best solution of this kind, even in a weighted generalization of our problem.
First, we show that we can solve \eqref{eq:configuration-lp}
in time $2^{O(K)}\cdot(nT)^{O(1)}$ in this way. To do this, we invoke the
ellipsoid method together with a suitable separation oracle for its
dual LP:\awr{todo: align $\forall$ statements \ant{done}}

\begin{alignat}{3}
    \min  \sum_{j\in J}\alpha_{j}+\sum_{B\in\calB}\beta_{B}  
    \quad \text{s.t.} \quad  \sum_{j\in J_C}\alpha_{j}+\beta_{B} & \geq|J_C| \quad && \forall  B\in \calB, C\in \calC(B) \tag{\textsf{dualLP}}\label{eq:configuration-lp-dual} \\
    \alpha_{j},\beta_B & \geq0 && \forall  j\in J, B\in \calB. \notag
\end{alignat}

The non-trivial task of the separation oracle is to determine, for a given tentative solution $(\alpha_{j},\beta_B)_{j\in J,B\in \calB}$ and a given block $B=[s,t)\in \calB$, whether there exists a configuration
$C\in\calC(B)$ with $\sum_{j\in J_C}\alpha_{j}+\beta_{B}<|J_C|$. This task is equivalent to computing a configuration $C^*=(B,J^*)\in\calC(B)$ of maximum weight, where the weight of each job $j$ is set to $w(j):=1-\alpha_j$. We show how to solve this task
in time $2^{O(K)}\cdot(nT)^{O(1)}$ using the color coding technique \cite{AYZ95}. We enumerate over a set of $K$-colorings of the jobs such that at least one such coloring assigns a distinct color to each \aw{job} $j\in J^*$. Then the problem reduces to computing a maximum-weight configuration that selects at most one job per color. We solve the latter task via dynamic programming. The idea is to construct a table indexed by a subset $COL\subseteq [K]$ of colors and an integer time $q\in \{s,\ldots,t\}$. The associated value corresponds to a maximum-weight subset of jobs, each one with a distinct color from $COL$, that can be feasibly scheduled within $[s,q)$. Each table entry can be computed in time $O(Kn)$ \aw{using previously computed entries for certain other entries}. The table entry with $(COL,q)=([K],t)$ corresponds to the desired solution. With a similar dynamic program, we can optimally solve instances
in the unweighted case\fabr{Actually also weighted: shall we mention this? Otherwise remove "in the unweighted case" which is implicit in the name for us \ale{I would keep it here, as we discuss directly before that the separation problem is the weighted version of the our problem and how to solve it}} of Maximum Throughput in time $2^{O(|\OPT|)}(nT)^{O(1)}$ which is pseudopolynomial as long as $|\OPT| = O_{\eps}(\log T)$.
\begin{lem}
\label{lem:solve-large-LP}In time $2^{O(K)}\cdot(nT)^{O(1)}$, we
can compute an optimal solution $(x_{C}^{*})_{C\in \calC}$ to \eqref{eq:configuration-lp}
together with a schedule $s_{C}$ for each configuration $C\in\calC$ with $x_{C}^{*}>0$. Also, Maximum Throughput can be solved exactly in time $2^{O(|\OPT|)}(nT)^{O(1)}$.
\end{lem}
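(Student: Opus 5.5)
The plan is to solve \eqref{eq:configuration-lp} through its dual \eqref{eq:configuration-lp-dual} with the ellipsoid method, and to use a single color-coding dynamic program both as the separation oracle and as the exact algorithm for small optima.

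\textbf{Separation oracle.} Fix a tentative dual solution $(\alpha,\beta)$ and a block $B=[s,t)$. We must find a configuration $C\in\calC(B)$ maximizing $\sum_{j\in J_C} w(j)$ with $w(j)=1-\alpha_j$. Only jobs with $w(j)>0$ can help, and the admissible jobs for $B$ are those for which $B$ is a boundary block or lies in a spanned superblock. A constraint is violated if and only if this maximum exceeds $\beta_B$.

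To compute the maximum, use a family of $2^{O(K)}\log n$ colorings $J\to[K]$ that is perfectly hashing for all $K$-subsets (Alon--Yuster--Zwick). For one coloring, run the following DP. The entry $D[COL,q]$ for $COL\subseteq[K]$ and $q\in\{s,\dots,t\}$ is the maximum weight of a set of admissible jobs with distinct colors from $COL$ that is feasibly scheduled within $[s,q)$. Its recurrence is
\[
D[COL,q]=\max\Bigl\{D[COL,q-1],\ \max_{j}\bigl(D[COL\setminus\{c(j)\},q-p_j]+w(j)\bigr)\Bigr\},
\]
where $j$ ranges over admissible jobs with $c(j)\in COL$, $r_j\le q-p_j$ and $q\le d_j$, and the base case is $D[\cdot,s]=0$. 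Correctness follows from ordering a feasible schedule by completion time: the last job ends by $q$, and all other jobs end by its start time.

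The table has $2^K(T+1)$ entries, each computed in $O(n)$ time, so the oracle runs in $2^{O(K)}(nT)^{O(1)}$. The distinct colors enforce $|J_C|\le K$, and the witnessing schedule is recovered by backtracking.

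\textbf{Solving the primal.} Here I rely on the standard Grötschel--Lovász--Schrijver machinery. The dual has $n+|\calB|$ variables and polynomially bounded encoding of its coefficients. The ellipsoid method therefore makes polynomially many oracle calls, and it exposes polynomially many violated configurations. Restricting the primal to these configurations gives an LP of polynomial size with the same optimum, which we solve directly. Every $C$ with $x^*_C>0$ comes with its schedule $s_C$ from the DP.

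\textbf{Exact algorithm.} Guess $k=|\OPT|$ by trying all values $k=1,\dots,n$. Run the same DP with $w\equiv 1$, all jobs admissible, $K=k$, and the whole interval $[0,T)$, over a $k$-perfect hash family. Output the largest schedule found. This takes $2^{O(|\OPT|)}(nT)^{O(1)}$ time.

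The main obstacle is making the ellipsoid step rigorous: the equivalence of separation and optimization yields a polynomial set of configurations that suffices for the primal. The DP itself is routine.
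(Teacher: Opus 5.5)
Your proof follows the paper's step for step. You run the ellipsoid method on \eqref{eq:configuration-lp-dual} with a color-coding DP as separation oracle, using weights $w(j)=1-\alpha_j$, the admissible jobs for $B$, and a $2^{O(K)}$-size perfect hash family. You then restrict the primal to the polynomially many configurations returned by the oracle. For the exact algorithm you reuse the DP with unit weights on $[0,T)$. Your recurrence (either idle from $q-1$ to $q$, or a last job ending exactly at $q$) is a slightly cleaner variant of the paper's recursion over $q'$, and it gives $O(n)$ work per cell.

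\textbf{The gap is in the exact algorithm's running time.} Trying every $k=1,\dots,n$ and keeping the best schedule costs $\sum_{k\le n}2^{O(k)}(nT)^{O(1)}=2^{O(n)}(nT)^{O(1)}$. That is not the claimed $2^{O(|\OPT|)}(nT)^{O(1)}$.

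\textbf{Fix.} Iterate $k=1,2,\dots$ and stop at the first $k$ for which the best rainbow schedule over the $k$-perfect family has fewer than $k$ jobs.
\begin{itemize}
\item If $k\le|\OPT|$, any $k$-subset of $\OPT$ is feasibly schedulable and some coloring in the family makes it rainbow. With only $k$ colors, the DP therefore returns exactly $k$ jobs.
\item The DP can never return more than $|\OPT|$ jobs.
\end{itemize}
So you stop at $k=|\OPT|+1$, the schedule found in that round is optimal, and the total time is $\sum_{k\le|\OPT|+1}2^{O(k)}(nT)^{O(1)}=2^{O(|\OPT|)}(nT)^{O(1)}$.

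For the paper's application, a single call with $k$ set to the threshold $5K_0\Delta(2K_0/\eps^5)^{1/\eps}$ also suffices. It either returns an optimal schedule or certifies that $|\OPT|$ is at least the threshold.
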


\begin{proof}
    We start with the part of the claim related to the configuration LP. As already stated, the dual of \eqref{eq:configuration-lp} is \eqref{eq:configuration-lp-dual}.
Note that \eqref{eq:configuration-lp-dual} has $|J|+|\calB|=O(n)$ variables, but does not necessarily have a polynomial number of constraints. By the equivalence between optimization and separation (see, e.g., \cite[Section~14.2]{schrijver1998theory}), it is sufficient to present a separation oracle for the dual LP. In more detail, we are given values
$(\alpha_{j},\beta_B)_{j\in J,B\in \calB}$, and we need to check whether they induce a feasible
solution. We can easily check the non-negativity constraints. We next show how to check the remaining constraints for a fixed block $B\in \calB$. The separation oracle has to determine whether there exists a violated constraint, i.e., a configuration $C\in \calC(B)$ with $\sum_{j\in C} \alpha_j + \beta_{B} < |J_C|$. 
This is equivalent to $\sum_{j\in J_C}  (1-\alpha_j) > \beta_{B}$.
Let $w(j):=1-\alpha_j$ be the weight of a job. Then it suffices to compute a configuration $C^*=(B,J^*)\in \calC(B)$ such that $J^*$ has maximum weight, as such a configuration satisfies $\sum_{j\in J^*}  (1-\alpha_j) > \beta_{B}$ if and only if any configuration in $\calC(B)$ does. 

Now we show how to compute one such $C^*=(B,J^*)$ in pseudopolynomial time. This can be achieved by using the color-coding technique introduced in \cite{AYZ95}. 
We want to assign colors to jobs, such that all jobs in $C^*$ have different colors. 
This is crucial for the later dynamic program to keep track of which jobs we already scheduled.
Recall that each configuration fulfills $|C(B)|\leq K$. 
If we assign each job randomly one of $ K$ colors, there is a chance of $\frac{K!}{K^K}=1/2^{O( K)}$ that all jobs in $J^*$ have pairwise different colors. 
Intuitively, by trying $2^{O(K)}\cdot n^{O(1)}$ random colorings, we have a high probability that we have at least one coloring, such that all jobs in $J^*$ have pairwise different colors.

As observed in \cite{AYZ95}, this approach can also be de-randomized using a result\footnote{Even better results exist, but they are irrelevant for our goals.} in \cite{schmidt1990HashFunctions}.
For that, we use a so-called a family of $k$-perfect hash function. For $k\in \NN$ and a set $J$ of $n$ elements, a family of $k$-perfect hash functions over $J$ is a family of maps $f_i:J\mapsto [k]$, such that for each subset $A\subseteq J$ of size at most $k$, there exists a map $f_i$ such that $f_i$ is injective when restricted to $A$.  

\aw{In \cite{schmidt1990HashFunctions} it was shown that there exists a $k$-perfect family $\calF$ of hash functions over a set of $n$ elements of size $2^{O(k)}\log^{O(1)} n$. Furthermore, such a family can be computed in time $2^{O(k)} n^{O(1)}$. Using this,}
we compute such a $k$-perfect family of hash functions $\calF$ of size at most $2^{O( K)}\log^{O(1)} n$. We also refer to $\calF$ as the colorings. For each coloring $f_i\in \calF$, we will compute the maximum weight configuration in which all jobs have different colors. Given this, we output the maximum weight configuration among all colorings in $\calF$. As $J^*$ contains at most $ K$ jobs, there is a hash function $f_i$ such that all jobs in $J^*$ have different colors.

So it remains to show how we compute the maximum weight configuration $C^*=(B,J^*)$ for a block $B=[s,t)$ in which all jobs $J^*$ have different colors for a given coloring $f_i\in \calF$.
For that, we use a dynamic program.
Recall that a configuration for $B=[s,t)$ can only schedule jobs $j\in J$ with $r_{j}\in [s,t)$, or $d_{j}\in (s,t]$, or $j$ spans the superblock $S\in\calS$ with $B\subseteq S$.
Thus let $J'$ denote these jobs.
Let us define a DP-cell for each subset $A\subseteq [K]$ of colors and each integer $q\in \{s,\ldots,t\}$. 
The number of DP-cells is $O(2^KT)$. The value of the cell $(A,q)$ is a maximum weight set of jobs $\OPT(A, q)\subseteq J'$, which have the colors in $A$ once each and that can be scheduled inside $[s, q)$. We fill-in the table entries for increasing values of $|A|$ and increasing values of $q$. 
For $A=\emptyset$ or $q=s$, we set $\OPT(A, q)=\emptyset$. Otherwise, if $\OPT(A, q)$ is non-empty, there is a job $j\in \OPT(A, q)$ that can be scheduled last. This leads to the following recursion:
\begin{align}\label{eq:COloringDPrecursion}
    w(\OPT(A, q))&=\max\{0,\max_{j\in J': f_i(j)\in A} \max_{q'\in [s, q): |\tw(j)\cap [q', q)|\geq p_j} w(\OPT(A\setminus\{f_i(j)\}, q')\cup \{j\})\}
\end{align}
We assign to the cell $(A,q)$ the value $\OPT(A,q)=\emptyset$ if the above maximum is $0$, and otherwise the value $\OPT(A\setminus\{f_i(j)\}, q')\cup \{j\}$ attaining the maximum. We also compute a schedule associated to $\OPT(A, q)$ as follows. If the maximum above is $0$, we use the empty schedule. 
Otherwise, we set $s_{C^*}(j):=\max\{q', r_j\}$ for $j\in J', q'\in [s,q)$ attaining the maximum and combine this with the schedule associated to $(A\setminus\{f_i(j)\}, q')$, so that we obtain a valid schedule of weight $w(\OPT(A, q))$.
The above computation takes  $O(Kn)$ time per cell, hence $O(2^K TKn)$ time in total. Finally we set $J^*=\OPT([K], t)$. The associated schedule $s_{C^*}$ of $J^*\subseteq J'$ and the fact that obviously $|J^*|\leq K$ shows that $C^*$ is indeed a valid configuration. This yields the separation oracle.

Thus we can use the ellipsoid method to compute an optimal solution for \eqref{eq:configuration-lp-dual} using $n^{O(1)}$ calls to the separation oracle (since we have $O(n)$ variables and integer coefficients of absolute value $O(n)$), hence in total time $2^{O(K)}(nT)^{O(1)}$ (see, e.g., Theorem 6.4.9 in \cite{grotschel2012geometric}).

Next, we show how to use this to compute an optimal solution for the primal. 
Suppose that instead of the dual LP with a constraint for every configuration, we have a \ale{modified} dual LP with only the $n^{O(1)}$ constraints returned by the separation oracle.
Then we still would obtain the same optimal solution for the modified dual LP, as the algorithm behaves the same for both the original and the modified dual LP. 
Consider the optimal solution for the primal of the modified dual LP. 
This solution also has to be an optimal solution for \eqref{eq:configuration-lp} as it is feasible and has the same cost.
Also, it is $0$ for all variables corresponding to constraint not returned by the $n^{O(1)}$ calls to the separation oracle.
We can therefore optimally solve the primal LP restricted to the latter variables $x_C$ in polynomial time.

Given a configuration $C$ with $x^*_C>0$, we can compute a corresponding feasible schedule $s_C$ in time $2^{O(K)}(nT)^{O(1)}$ by using essentially the same type of dynamic program that we used in the separation oracle.

For the final part of the claim, we use the color-coding technique similarly to design of the separation oracle. In more detail, we compute a set of $2^{O(|\OPT|)}\log^{O(1)}n$ colorings of the jobs such that at least one such coloring assigns a distinct color to each $j\in \OPT$. Then the problem reduces to finding a feasible schedule of a maximum cardinality set of jobs such that at most one job per color is selected. This problem can be solved via a dynamic program similar to the one used in the mentioned separation oracle, with the difference that now all the job weights are $1$, $K$ is replaced by $|OPT|$ and that the block $B=[s,t)$ is replaced by the entire time horizon $[0,T)$. The overall running time is $2^{O(|\OPT|)}(nT)^{O(1)}$ as claimed.
\end{proof}

\subsection{Alternative rounding algorithm}\label{sec:secondRounding}


Due to Lemma~\ref{lem:solve-large-LP}, we can optimally solve all instances for which $|\OPT| < 5\ale{K_0}\Delta(2K_0/\eps^5)^{1/\eps}$.\fabr{Like before, we might need to add a factor $K_0$ \ale{Yes, I added the factor}} If this is not the case, we apply Lemma \ref{lem:good-block-superblock-partition} with our defined value of $\Delta$ to compute a block-superblock \aw{partition}\fabr{Don't we use "partition" before? I actually like decomposition more, but we need to be consistent}\awr{changed it to be consistent} satisfying properties \ref{item:block-superblock-partition-lb}-\ref{item:fewBlocks} from Lemma \ref{lem:good-block-superblock-partition}. %
Based on that we define~\eqref{eq:configuration-lp}.
Via Lemma~\ref{lem:solve-large-LP}, we compute an optimal solution $(x_{C}^{*})_{C\in\calC}$ for it in time $(nT)^{O_{\eps}(1)}$, using that $K=O_\eps(\log T)$.
In the following, we present now an alternative LP-rounding
algorithm which is complementary to the rounding algorithm from Section~\ref{sec:firstRounding}.
In the latter one, we sampled a configuration for each block and then
we assigned the jobs to the blocks. Instead, now we randomly assign each
job to a block and then try to find a schedule within each block,
possibly by discarding some of its assigned jobs.

We ignore all the (local) jobs $\Js$ and do not schedule any of them. For each block $B$ and job $j\in \Jl$, let $y^*_{j,B}:=\sum_{C\in\calC(B):j\in J_C} x_{C}^{*}$ be the fractional amount by which $j$ is assigned to block $B$ in the optimal LP solution. 
For each job $j\in\Jl$ independently, we do the following.
We randomly decide to assign $j$  to some block $B\in \calB$ or to discard $j$ such that $j$ is assigned to each block $B\in \calB$ with probability $(1-2\varepsilon)y^*_{j,B}$
and, deterministically, $j$ is assigned to at most one block in $\calB$. Therefore, $j$ is not assigned to any block and, hence, discarded with probability
$1-(1-2\eps)\sum_{B\in \calB}y^*_{j,B}$.
For each block $B\in\calB$ we define a random variable
$Y_{j,B}\in\{0,1\}$, modeling whether we assign $j$ to $B$ or not, such
that $\Pr[Y_{j,B}=1]=(1-2\varepsilon)y^*_{j,B}$. Notice that
deterministically $\sum_{B\in\calB}Y_{j,B}\le 1$. 

Consider a block $B=[s,t)\in\calB$ and the jobs $J(B)\subseteq \Jl$ randomly assigned to $B$,
i.e., all jobs $j$ such that $Y_{j,B}=1$. In a second phase, we
discard some of these jobs $J(B)$ such that the remaining jobs can be scheduled
within $B$, similar to randomized rounding with alteration, see, e.g., \cite{Srinivasan2001, chakrabarti2007approximation}.
Formally, we define a random variable $Z_{j,B}\in\{0,1\}$ for all
jobs $j\in J(B)$, where $Z_{j,B}=0$ indicates
that we discard $j$ (if $Y_{j,B}=0$, simply define $Z_{j,B}=0$). 

First, we discard all
jobs $j$ whose processing time $p_{j}$ is relatively long compared
to $\tw(j)\cap B$, i.e., the portion of $\tw(j)$ inside $B$. Formally,
for each interval $I=[s',t')$, we define its length by $|I|:=t'-s'$.
The set of \emph{long} jobs is given by $\Jt(B):=\{j\in J(B): p_{j}>\eps^4|\tw(j)\cap B|\}$. We can show that in expectation $\Jt(B)$ contains at most $\Delta$ jobs, which we can afford to discard since on average $(x_{C}^{*})_{C\in\calC}$
schedules at least $\Delta/\eps$ jobs in each block.
\begin{lem}\label{lem:pseudoRemovalLongJobs}
For each block $B\in\calB$ we have that $\E[|\Jt(B)|]\le 4(\log_2 T+1)/\eps^4=\Delta$. 
\end{lem}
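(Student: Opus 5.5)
The plan is to turn the expectation into a statement about single configurations, and then bound the number of long jobs in any feasible schedule inside $B$ by a dyadic counting argument.

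\textbf{Reduction to one configuration.} Fix $B=[s,t)\in\calB$, and call a global job $j$ \emph{long for $B$} if $p_j>\eps^4|\tw(j)\cap B|$. This condition is deterministic and does not depend on the sampling. Since $\Pr[Y_{j,B}=1]=(1-2\eps)y^*_{j,B}$, linearity of expectation gives
\[
\E[|\Jt(B)|]\le\sum_{j\in\Jl \text{ long for } B} y^*_{j,B}=\sum_{C\in\calC(B)}x^*_C\cdot\bigl|\{j\in J_C\cap\Jl: j \text{ long for } B\}\bigr|.
\]
Because $\sum_{C\in\calC(B)}x^*_C=1$, it suffices to show the following claim. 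For every configuration $C\in\calC(B)$, the feasible schedule $s_C$ contains at most $4(\log_2T+1)/\eps^4$ global jobs that are long for $B$.

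\textbf{Shape of $\tw(j)\cap B$.} Take a global job $j\in J_C$. By definition of configurations, $B$ is either a boundary block of $j$ or lies in a superblock spanned by $j$.
\begin{itemize}
\item If $j$ spans the superblock, then $\tw(j)\cap B=B$.
\item If $B=B_{j,L}$, then $\tw(j)\not\subseteq B$ because $j$ is global. Hence $d_j>t$, so $\tw(j)\cap B=[r_j,t)$, a suffix of $B$.
\item If $B=B_{j,R}$, then symmetrically $\tw(j)\cap B=[s,d_j)$, a prefix of $B$.
\end{itemize}
So every such window is a prefix or a suffix of $B$, with $B$ itself counted as either.

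\textbf{Counting.} Consider first the long jobs whose window inside $B$ is a prefix $[s,s+L_j)$. Group them by $i=\lfloor\log_2 L_j\rfloor\in\{0,\dots,\lfloor\log_2 T\rfloor\}$. All jobs in group $i$ are scheduled disjointly inside $[s,s+2^{i+1})$, and each has $p_j>\eps^4 2^i$. Hence group $i$ contains fewer than $2^{i+1}/(\eps^4 2^i)=2/\eps^4$ jobs. Summing over the at most $\log_2T+1$ groups bounds the prefix jobs by $2(\log_2T+1)/\eps^4$. The same argument applies to suffixes. Together this gives $4(\log_2T+1)/\eps^4=\Delta$, which proves the claim and hence the lemma.

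The only delicate point is the prefix/suffix structure. It relies on $j$ being global and on the admissible block types for a job in a configuration. Without it, the windows $\tw(j)\cap B$ could be arbitrary subintervals of $B$, and the dyadic packing bound would fail.
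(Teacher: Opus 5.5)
Your proof is correct and follows essentially the paper's argument. Linearity of expectation together with $\sum_{C\in\calC(B)}x^*_C=1$ reduces the lemma to counting long jobs in a single configuration. The prefix/suffix shape of $\tw(j)\cap B$ for global jobs then gives at most $4/\eps^4$ long jobs per dyadic class, over $\log_2 T+1$ classes.

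The differences are minor. The paper buckets jobs by $p_j\in[2^\ell,2^{\ell+1})$ rather than by $|\tw(j)\cap B|$. It also obtains the prefix/suffix structure from globality alone: a nonempty $\tw(j)\cap B$ that touches neither end of $B$ would force $\tw(j)\subseteq B$. So your appeal to the admissible block types of configurations is harmless but not actually needed, and your closing remark overstates what the argument relies on.
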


\begin{proof}
    Fix a block $B=[s,t)\in \calB$ and let $\ell\in \NN$. Consider the jobs 
    $$
    J_B^\ell:=\{j\in J:2^{\ell}\leq p_j<2^{\ell+1}, \quad p_j> \eps^4 |\tw(j)\cap B|\}.
    $$ 
    There are no such jobs for $\ell > \log_2 T$.
    Note that each long job $j\in \Jt(B)$ belongs to $J_B^\ell$ for some value $\ell$. So suppose that $0\leq \ell \leq \log_2T$.

    Consider a job $j\in J_B^\ell$. If such a job is schedule within $B$, it has to be within the interval $\tw(j)\cap B$.
    Note that $|\tw(j)\cap B|<p_j/\eps^4\leq 2^{\ell+1}/\varepsilon^4$. 
    Also, we have $s\in \tw(j)\cap B$ or $t\in \tw(j)\cap B$ as we discarded all local jobs. Consequently, this implies $\tw(j)\cap B\subseteq [s, s+2^{\ell+1}/\eps^4)\cup [t-2^{\ell+1}/\eps^4, t)$. 
    Note that the latter has a total length of $2\cdot 2^{\ell+1}/\eps^4$. 
    As each job $j\in J_B^\ell$ has a processing time of at least $2^\ell$, any configuration for $B$ can contain at most $4/\eps^4$ such jobs.
    As the following calculation shows, the set $\Jt(B)$ contains at most $4/\eps^4$ jobs from $J_B^\ell$ in expectation.
    \begin{align*}
        \EE[|\Jt(B)\cap J_B^\ell|]&=\EE\left[ \sum_{j\in \Jl\cap J_B^\ell}Y_{j, B} \right] \\
        &=\sum_{j\in \Jl\cap J_B^\ell}\EE[Y_{j, B}]\\
        &\leq \sum_{j\in \Jl\cap J_B^\ell} \sum_{C\in\calC(B):j\in C}x_{C}^{*}\\
        &\leq  \sum_{C\in\calC(B)}|\Jl\cap J_B^\ell\cap C|\cdot x_{C}^{*}\\
        &\leq  \sum_{C\in\calC(B)}4/\eps^4 \cdot x_{C}^{*}\leq 4/\eps^4.
    \end{align*}
    A union over the $\log_2 T +1$ values of $\ell$ shows $\EE[|\Jt(B)|]\leq 4/\eps^4 \cdot (\log_2 T +1)$
\end{proof}

Consider next the remaining \emph{short} jobs $\Jshort(B):=J(B)\setminus \Jt(B)$. For these jobs, intuitively, we have useful concentration properties. Let us initially set $Z_{j,B}=1$ for all $j\in \Jshort(B)$. For each such $j$ we define one or two (unlikely) bad events: when any one of those events happens, we set $Z_{j,B}=0$. The first bad event $\calE_{total}$ happens if the total processing time of the short jobs assigned to
$B$ is larger than $|B|$, i.e., 
$$
\sum_{j'\in \Jshort(B)}p_{\ale{j'}}=\sum_{j'\in \Jl: p_{j'}\leq \eps^4|\tw(j')\cap B|}p_{j'}\cdot Y_{j',B}>|B|.
$$
When $\calE_{total}$ happens, we set $Z_{j,B}=0$ for all $j\in \Jshort(B)$. The event $\calE_{total}$ is the unique bad event for the considered jobs that span $B$, i.e., such that $B\subseteq tw(j)$.


Assume now that some $j\in \Jshort(B)$ does not span $B$ but that $\tw(j)$ intersects
$B$ from the right, i.e., $s<r_{j}<t$. Then we define the additional
bad event $\calE_{right}(j)$ that the total processing time of the jobs $j'\in \Jshort(B)$ with $r_{j}\le r_{j'}$ is larger than $|\tw(j)\cap B|$. The intuition is
that $j$ and all these jobs need to be processed during $\tw(j)\cap B$.
Formally, $\calE_{right}(j)$ happens if 
$$
\sum_{j'\in \Jshort(B): r_j\leq r_{j'}}p_{j'}=
\sum_{j'\in\Jl:p_{j'}\leq \eps^4|\tw(j')\cap B|, r_j\leq  r_{j'}}p_{j'}\cdot Y_{j,B}>|\tw(j)\cap B|.
$$
If $\calE_{right}(j)$ happens, we set $Z_{j,B}:=0$. 
We define an analogous second bad event $\calE_{left}(j)$ if $j$ does not span $B$
but $\tw(j)$ intersects $B$ from the left, i.e., $s<d_{j}<t$, and
we set $Z_{j,B}$ accordingly. %

We can show that for each job $j\in \Jshort(B)$ the defined
bad events are very unlikely and, hence, if $Y_{j,B}=1$ then most
likely also $Z_{j,B}=1$.
\begin{lem}\label{lem:pseudoConcentration}
For each block $B\in\calB$ and \aw{each} job $j\in \Jl$ with $p_j\leq \eps^{\ale{4}}|\tw(j)\cap B|$ and $\Pr[Y_{j, B}=1]>0$, it
holds that $\Pr[Z_{j,B}=1|Y_{j,B}=1]\ge 1-2\eps.$
\end{lem}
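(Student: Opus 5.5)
We bound the probability of each bad event that can zero out $Z_{j,B}$, conditioned on $Y_{j,B}=1$, by $\eps$, and then take a union bound over the (at most two) relevant events. The events are $\calE_{total}$, and either $\calE_{right}(j)$ or $\calE_{left}(j)$. A job with $s<r_j$ and $d_j<t$ would be local, which we excluded, so a global job can be affected by only one of $\calE_{right}(j)$ and $\calE_{left}(j)$. We can therefore afford $\eps$ per event.

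We treat $\calE_{right}(j)$; the events $\calE_{left}(j)$ and $\calE_{total}$ are analogous, with $\calE_{total}$ corresponding to the interval $B$ itself. Let $I:=\tw(j)\cap B=[r_j,t)$. Conditioned on $Y_{j,B}=1$, the other variables $Y_{j',B}$ keep their distribution, since jobs are assigned independently. We bound the load
\[
L:=\sum_{j'\neq j:\ p_{j'}\le\eps^4|\tw(j')\cap B|,\ r_{j'}\ge r_j} p_{j'}Y_{j',B}.
\]
Every such $j'$ has $\tw(j')\cap B\subseteq I$, because $r_{j'}\ge r_j$ and the interval ends at $t$ or earlier. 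Hence $p_{j'}\le\eps^4|I|$.

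For the expectation, any configuration $C\in\calC(B)$ schedules all jobs $j'\in J_C$ with $r_{j'}\ge r_j$ inside $I$ and disjointly. Their total processing time is therefore at most $|I|$. Averaging over configurations with weights $x^*_C$ gives $\sum_{j'} p_{j'}y^*_{j',B}\le |I|$, and thus $\EE[L]\le(1-2\eps)|I|$.

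For concentration, the normalized terms $p_{j'}Y_{j',B}/(\eps^4|I|)$ lie in $[0,1]$ and are independent. The total is exceeded only if
\[
L>|I|-p_j\ge(1-\eps^4)|I|\ge(1+\eps)\,\EE[L],
\]
after checking that $(1-\eps^4)/(1-2\eps)\ge 1+\eps$ for small $\eps$. In normalized units the expectation is at most $(1-2\eps)/\eps^4$.

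The main obstacle is that this expectation may be small, so a multiplicative Chernoff bound with a tiny mean would be too weak. We handle this by using the additive form of the bound: the deviation threshold is $\mu+\eps\cdot(1/\eps^4)$ with $\mu\le 1/\eps^4$. The Chernoff–Hoeffding upper tail $\Pr[X\ge\mu+\lambda]\le\exp(-\lambda^2/(2(\mu+\lambda/3)))$ with $\lambda\ge\eps/\eps^4=\eps^{-3}$ and $\mu\le\eps^{-4}$ gives a bound of $\exp(-\Omega(\eps^{-6}/\eps^{-4}))=\exp(-\Omega(\eps^{-2}))\le\eps$. Equivalently, we can pad with dummy variables so that the mean is exactly $(1-2\eps)/\eps^4$ and apply the multiplicative bound, which gives $\exp(-\eps^2/(3\eps^4))\le\eps$.

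The argument for $\calE_{total}$ is identical, with $I=B$ and using $p_{j'}\le\eps^4|\tw(j')\cap B|\le\eps^4|B|$. A union bound over the two events then gives $\Pr[Z_{j,B}=1\mid Y_{j,B}=1]\ge 1-2\eps$.
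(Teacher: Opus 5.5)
Your proof is correct and follows the paper's own argument. In both, the expected load of the other short jobs is bounded by averaging over configurations of $B$, and a Chernoff bound with an upper bound on the mean is applied to $\calE_{total}$ and to whichever one of $\calE_{right}(j)$, $\calE_{left}(j)$ is relevant, followed by a union bound. Splitting off $p_j$, handling the small-mean issue explicitly (the paper just plugs the upper bound $(1-\eps)|I|$ into Chernoff), and noting that a global job triggers at most one one-sided event only make explicit what the paper leaves implicit.
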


\begin{proof}
    Let $B=[s,t)\in \calB$ and $j\in \Jl$ such that $p_j\leq \eps^4(t-s)$. 
    We condition on the event that $Y_{j, B}=1$.

    First we bound the probability of the event \ale{ $\calE_{total}$, i.e.}
    $X:=\sum_{j'\in \Jshort(B)}p_{j'}>|B|$,
    using Chernoff's bound. 
    For any configuration $C\in \calC(B)$ we have $\sum_{j'\in C}p_{j'}\leq t-s$. This implies the following:
    \begin{align*}\label{eq:expectationpreservance}
        \EE[X | Y_{j,B}=1]
        &=p_j+\sum_{j'\in \Jl\setminus \{j\}: p_{j'}\leq \eps^4|\tw(j)\cap B|}p_{j'}\cdot \EE[Y_{j', B}]\\
        &\leq p_j +\sum_{j'\in \Jl\setminus \{j\}: p_{j'}\leq \eps^4|\tw(j)\cap B|} p_{j'}\cdot (1-2\eps)\sum_{C\in\calC(B):j'\in \ale{J_C}}x_{C}^{*}\\
        &\leq  \eps^4(t-s)+(1-2\eps)\sum_{C\in\calC(B)}\sum_{j'\in \ale{J_C}\cap \Jl : p_{j'}\leq \eps^4|\tw(j)\cap B|}p_{j'}\cdot x_{C}^{*}\\
        &\leq  \eps(t-s)+(1-2\eps)\sum_{C\in\calC(B)}(t-s)\cdot x_{C}^{*}\\
        &=(1-\eps)(t-s).
    \end{align*}
    For $j'\in \Jl$ such that $\ale{p_{j'}}\leq \eps^4(t-s)$, let $Y_{j', B}'=p_{j'}\cdot Y_{j', B}$. 
    Then the random variables $Y_{j', B'}$ for the $j'$ of the above type are independent, their sum $X$ is bounded by $(1-\eps)(t-s)$ and each random variable is bounded by $\eps^4(t-s)$. 
    Therefore, we can apply Chernoff's bound and obtain:
    \begin{align*}
        \Pr[X>(t-s)]&\leq \exp\left(-\frac{\eps^2\cdot (1-\eps)(t-s)}{3 \cdot \eps^4(t-s)}\right)\\
        &\leq \exp\left(-\frac{1-\eps}{3\cdot \eps^2}\right)\\
        &\leq \exp\left(-1/\eps\right)\leq \eps.
    \end{align*}
    We used $\eps\leq 1/4$ in the third inequality.\alnote{$\eps\leq 1/4$ needed here}
    This already implies the lemma for all jobs spanning $B$, as the event \ale{$\calE_{total}$} is the only bad event for such jobs.
    
    Now, suppose that $s<r_j<t$: a symmetric argument holds when $s<d_j<t$.
    We need to upper bound the probability of the event \ale{$\calE_{right}$, i.e.,}
    $$
    X':=\sum_{j'\in \Jshort(B):r_{j}\le r_{j'}}p_{j'}>|\tw(j)\cap B|=t-r_j.
    $$
    For any configuration $C\in \cal C(B)$, we have
    \begin{equation*}
        \sum_{j'\in \ale{J_C}\cap \Jl: p_{j'}\leq \eps^4|\tw(j)\cap B|, r_{j}\le r_{j'}}p_{j'}\leq (t-r_j)
    \end{equation*}
    as all these jobs have to be scheduled during $[r_j, t)$.
    A calculation analogous to the one for $X$ shows that $\EE[X']\leq (1-\eps)\cdot (t-r_j)$. Note that each job $j'\in \Jshort(B)$ with $r_{j}\le r_{j'}$ fulfills $p_{j'}\leq \eps^4 \cdot |\tw(j')\cap B|\leq \eps^4(t-r_j)$ as the jobs are short and $\tw(j')\cap B\subseteq \tw(j)\cap B$.
    As for $X$ we use Chernoff's bound, but this time applied to the random variables $Y_{j', B}'$ only for $j'\in \Jl$ with $p_{j'}\leq \eps^4|\tw(j)\cap B|$ and $r_{j}\le r_{j'}$.
    We obtain $\Pr[X'>t-r_j]\leq \eps$.
    Together, we obtain
    \begin{equation*}
        \Pr[Z_{j, B}=1|Y_{j, B}=1]\geq 1-\Pr[X>t-s]-\Pr[X'>t-r_j]\geq 1-2\eps.
    \end{equation*}
    This completes the proof.
\end{proof}

Let $\APX(B):=\{j\in\Jl:Z_{j,B}=1\}\subseteq J(B)$ be all the jobs that we finally
assign to $B$. We show that we can compute a feasible schedule
for them in $B$ (with a simple greedy algorithm).
\begin{lem}\label{lem:pseudoGreedy}
For each block $B\in\calB$ we can compute a schedule for $\APX(B)$
in $B$ in \fab{polynomial time}.
\end{lem}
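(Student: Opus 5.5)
We use the explicit three-part schedule described below: left-intersecting jobs packed forward from $s$ by earliest deadline, right-intersecting jobs packed backward from $t$ by latest release time, and spanning jobs in between. Feasibility will follow from the absence of the bad events $\calE_{total}$, $\calE_{left}(\cdot)$ and $\calE_{right}(\cdot)$ for the jobs we keep.

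Fix $B=[s,t)$. If $\APX(B)=\emptyset$ there is nothing to do. Otherwise $\calE_{total}$ did not happen, since it would have set $Z_{j,B}=0$ for every $j\in\Jshort(B)$. Every $j\in\APX(B)$ is global and satisfies $\tw(j)\cap B\neq\emptyset$. First I classify these jobs. Since $B$ is one of the boundary blocks of $j$ or lies in a superblock spanned by $j$, and $\tw(j)\not\subseteq B$, exactly one of three cases holds.
\begin{itemize}
\item $j$ spans $B$. This set is $A_{span}$.
\item $s<r_j<t<d_j$. Then $\tw(j)\cap B=[r_j,t)$, and this set is $A_{right}$.
\item $r_j<s<d_j<t$. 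Then $\tw(j)\cap B=[s,d_j)$, and this set is $A_{left}$.
\end{itemize}
The case $s<r_j$ and $d_j\le t$ is impossible, because then $\tw(j)\subseteq B$ and $j$ would be local. Here I use the symmetric definition of $\calE_{left}(j)$: it is the event that the total processing time of the jobs $j'\in\Jshort(B)$ with $d_{j'}\le d_j$ exceeds $d_j-s$.

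The schedule is built in three parts.
\begin{itemize}
\item Order $A_{left}$ by non-decreasing deadline and process these jobs back to back starting at $s$.
\item Order $A_{right}$ by non-increasing release time and process them back to back ending at $t$. That is, the job with the largest $r_j$ is placed last, ending at $t$, the next one ends where that one starts, and so on.
\item Place the jobs of $A_{span}$ consecutively in the gap between the two packed parts.
\end{itemize}
This clearly takes polynomial time, as it only requires sorting.

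Now I verify feasibility. For $j\in A_{left}$, its completion time is $s$ plus the total processing time of the jobs $j'\in A_{left}$ with $d_{j'}\le d_j$ (ties broken consistently). Every job in $A_{span}\cup A_{right}$ has deadline greater than $t>d_j$. So this sum is at most the total processing time of the jobs $j'\in\Jshort(B)$ with $d_{j'}\le d_j$. That total is at most $d_j-s$ because $\calE_{left}(j)$ did not occur, as $Z_{j,B}=1$. Hence $j$ finishes by $d_j$, and it starts at or after $s>r_j$, so it runs inside $\tw(j)\cap B$.

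The same argument works for $j\in A_{right}$. Its start time is $t$ minus the total processing time of the jobs in $A_{right}$ with $r_{j'}\ge r_j$. Only jobs in $A_{right}$ have release time in $(s,t)$ at or after $r_j$, so the absence of $\calE_{right}(j)$ bounds this total by $t-r_j$. Thus $j$ starts no earlier than $r_j$ and ends by $t<d_j$.

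Finally, the total processing time of $\APX(B)\subseteq\Jshort(B)$ is at most $|B|$ because $\calE_{total}$ did not occur. Therefore the left part, the spanning jobs and the right part fit into $B$ without overlapping. The spanning jobs lie in $B\subseteq\tw(j)$, so they are feasible anywhere in $B$.

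The main obstacle is getting this bookkeeping right. One must check that the prefix sums controlled by $\calE_{left}(j)$ and $\calE_{right}(j)$ contain exactly the jobs that the packing places before (respectively after) $j$, and that the global/local classification excludes every other case, so that no job of $\APX(B)$ has both $r_j$ and $d_j$ inside $B$.
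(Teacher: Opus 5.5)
Your proof is correct and follows the paper's argument. It uses the same split into left-intersecting, right-intersecting and spanning jobs. Left jobs are packed from $s$ by deadline, right jobs are packed against $t$ in order of release time, and spanning jobs go in the gap, with feasibility coming from the absence of $\calE_{total}$, $\calE_{left}(j)$ and $\calE_{right}(j)$. Your explicit case analysis, which excludes jobs with both endpoints of $\tw(j)$ inside $B$ because they are local, just makes precise what the paper leaves implicit.
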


\begin{proof}
Consider a block $B=[s,t)$.
Let us partition the jobs $\APX(B)$ into the ones $\APX^{\text{left}}(B):=\{j\in \APX(B):s<d_j<t\}$ whose time window intersects $B$ from the left, the ones $\APX^{\text{right}}(B):=\{j\in \APX(B):s<r_j<t\}$ whose time window intersects $B$ from the right, and the remaining ones $\APX^{\text{mid}}(B):=\APX(B)\setminus (\APX^{\text{left}}(B) \cup \APX^{\text{right}}(B))$, i.e., the ones whose time window spans $B$.

First, we order the jobs $\APX^{\text{left}}(B)$ non-decreasingly by their deadline. We schedule them in this order without idle time between them, such that the first job is starting at $s$. 
Next, we order the jobs $\APX^{\text{right}}(B)$ non-decreasingly by their release time. 
We also schedule them in this order without idle time between them, but such that the last job ends at time $t$.
Finally, we schedule the jobs $\APX^{\text{mid}}(B)$ in arbitrary order between the jobs $\APX^{\text{left}}(B)$ and the jobs $\APX^{\text{right}}(B)$.

Clearly, this computation can be done in polynomial time. 
It remains to show that this yields a valid schedule.
If $\APX(B)=\emptyset$, then there is nothing to show.
So suppose that this is not the case. It is sufficient to prove the following two claims. First, we claim that there is enough space between $\APX^{\text{left}}(B)$ and $\APX^{\text{right}}(B)$ to schedule $\APX^{\text{mid}}(B)$. 
Second, we claim that the jobs in $\APX^{\text{left}}(B)\cup \APX^{\text{right}}(B)$ are scheduled within their time window.

We start with the first claim. 
As $\APX(B)\neq \emptyset$, there exists $j\in \APX(B)$, which therefore fulfills $Z_{j, B}=1$. \ale{Thus the event $\calE_{total}$ does not occur.}
This implies
\begin{equation*}
    \sum_{j\in\APX(B)}p_{j}\leq \sum_{j\in \Jshort(B)}p_{j}\leq |B|,
\end{equation*}
so there is enough space to schedule the jobs $\APX^{\text{mid}}$ between $\APX^{\text{left}}$ and $\APX^{\text{right}}$, given that the latter two sets are scheduled without any idle time in the leftmost and rightmost portion of $B$, respectively.

For the second claim, consider any $j\in \APX^{\text{right}}$ (a symmetric argument works for any $j\in \APX^{\text{left}}$). 
We have $Z_{j, B}=1$ and \ale{thus the event $\calE_{right}(j)$ does not occur. Therefore we obtain}
\begin{equation*}
    \sum_{j'\in \APX(B):r_{j}\le r_{j'}}p_{j'}\leq \sum_{j'\in \Jshort(B):r_{j}\le r_{j'}}p_{j'}\leq t-r_j.
\end{equation*}
Note that all jobs $j'$ scheduled after (and including) $j$ fulfill $\ale{r_{j}\leq r_{j'}}$. Thus their total processing time is at most $t-r_j$.
As the last job is completed at time $t$ and there is no idle time between the jobs in $\APX^{\text{right}}$, this implies that $j$ is started not earlier than $t-(t-r_j)=r_j$. 
This shows that $j$ is scheduled within its time window.
\end{proof}

We apply the procedure above to each block $B\in\calB$. Our solution
is the set of jobs $\APX:=\bigcup_{B\in\calB}APX(B)$ together with the associated schedule as described above. Combining the above results yields the following lemma.
\begin{lem}
\label{lem:rounding-good} 
In \fab{polynomial time} we can compute a feasible solution to
the given instance whose expected number of jobs is at least 
\[
\Big((1-2\eps)\sum_{j\in\Jl}y_{j}^{*}\Big)-4(\log_2 T+1)/\eps^4\cdot |\calB|\ge(1-O(\varepsilon))\sum_{j\in\Jl}y_{j}^{*}-O(\eps)\sum_{j\in J}y_{j}^{*}.
\]
\end{lem}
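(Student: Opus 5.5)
The plan is to combine Lemmas~\ref{lem:pseudoRemovalLongJobs}, \ref{lem:pseudoConcentration} and \ref{lem:pseudoGreedy} with linearity of expectation, and then to absorb the additive loss using property \ref{item:fewBlocks} of Lemma~\ref{lem:good-block-superblock-partition}.

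\textbf{Feasibility and running time.} By Lemma~\ref{lem:pseudoGreedy}, each $\APX(B)$ is scheduled inside $B$. Blocks are disjoint, and each job is assigned to at most one block because $\sum_B Y_{j,B}\le 1$. Hence $\APX$ is feasible. The sampling, the tests of the bad events, and the greedy schedules all take polynomial time given the LP solution.

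\textbf{Expected size, job by job.} Each job $j\in J(B)$ is either long, in which case we discard it, or short. For a short $j$ we condition on $Y_{j,B}=1$ and use Lemma~\ref{lem:pseudoConcentration}, which gives
\[
\EE[|\APX(B)|]\ \ge\ \sum_{j\in\Jl:\,p_j\le\eps^4|\tw(j)\cap B|}(1-2\eps)\Pr[Y_{j,B}=1]\ \ge\ \sum_{j\in\Jl}(1-2\eps)^2y^*_{j,B}\;-\;\EE[|\Jt(B)|].
\]
The second inequality holds because $\sum_{j\text{ long}}\Pr[Y_{j,B}=1]=\EE[|\Jt(B)|]$, using $\Pr[Y_{j,B}=1]=(1-2\eps)y^*_{j,B}$. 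Lemma~\ref{lem:pseudoRemovalLongJobs} bounds the last term by $\Delta=4(\log_2T+1)/\eps^4$.

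We then sum over $B\in\calB$. For a global job $j$, every configuration containing $j$ belongs to some block, so $\sum_B y^*_{j,B}=y^*_j$. This yields $(1-O(\eps))\sum_{j\in\Jl}y^*_j-\Delta|\calB|$. The stated left-hand side uses $(1-2\eps)$ rather than $(1-2\eps)^2$; the difference is only a constant rescaling of $\eps$, and I would either absorb it or adjust the sampling probabilities. The coefficient in the statement is slightly loose, so the argument should be presented with $(1-O(\eps))$.

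\textbf{Absorbing the additive term.} By \ref{item:fewBlocks}, $\Delta|\calB|\le\eps|\OPT'|$. By Lemma~\ref{lem:valueLP}, $|\OPT'|\le\sum_{C}|J_C|x'_C\le\sum_{j\in J}y^*_j$, since $x'$ is feasible and $x^*$ is optimal. Together these give $\Delta|\calB|\le\eps\sum_{j\in J}y^*_j$, which is the right-hand side of the statement.

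\textbf{Main obstacle.} The delicate step is the accounting for long jobs. The event that $j$ is long depends only on $j$ and $B$, not on randomness, so "long" is a deterministic property of the pair $(j,B)$. The loss from long jobs is therefore exactly $\sum_{j\text{ long}}\Pr[Y_{j,B}=1]$, and this equals $\EE[|\Jt(B)|]$, which is controlled by Lemma~\ref{lem:pseudoRemovalLongJobs}.
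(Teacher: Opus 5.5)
Your proposal is correct and follows the paper's proof essentially step by step: linearity of expectation over the pairs $(j,B)$, Lemma~\ref{lem:pseudoConcentration} for the short jobs, Lemma~\ref{lem:pseudoRemovalLongJobs} for the long ones, the identity $\sum_B y^*_{j,B}=y^*_j$, and property~\ref{item:fewBlocks} together with LP optimality to absorb $\Delta|\calB|$. Your remark about the leading coefficient is also right: the paper's own chain of inequalities only gives $(1-2\eps)^2\ge 1-4\eps$, not the stated $(1-2\eps)$, and this does not affect the $(1-O(\eps))$ form on the right-hand side.
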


\begin{proof}
    By Lemma~\ref{lem:pseudoGreedy} we compute a feasible schedule in time $n^{O(1)}$ for each block $B\in \calB$. 
    As each job is assigned only to one block, every job is scheduled at most once and thus this yields a feasible schedule altogether. 
    It remains to show that the computed schedule has a high enough number of jobs.
    {\allowdisplaybreaks
    \begin{align*}
        \EE\left[\sum_{B\in \calB}|\APX(B)|\right]&=\EE\left[\sum_{B\in \calB} \sum_{j\in \Jl}Z_{j, B}\right]\\
        &=\sum_{B\in \calB} \sum_{j\in \Jl}\Pr[Z_{j, B}=1|Y_{j, B}=1]\cdot \Pr[Y_{j, B}=1]\\
        &\overset{\text{Lem. }\ref{lem:pseudoConcentration}}{\geq}\sum_{B\in \calB} \sum_{j\in \Jl: p_j\leq \eps^4|\tw(j)\cap B|}(1-2\eps)\cdot \Pr[Y_{j, B}=1]\\
        &\geq(1-2\eps)\sum_{B\in \calB}\left( \sum_{j\in \Jl}\EE[Y_{j, B}]-\EE[|\Jt(B)|]\right)\\
         &\overset{\text{Lem. }\ref{lem:pseudoRemovalLongJobs}}{\geq} (1-2\eps)\sum_{B\in \calB}\left( \sum_{j\in \Jl}\EE[Y_{j, B}]-4(\log_2 T+1)/\eps^4\right)\\
         &\geq(1-2\eps)\sum_{j\in \Jl}\sum_{B\in \calB}\EE[Y_{j, B}]-\ale{\Delta}\cdot |\calB|\\
         &=(1-2\eps)\sum_{j\in \Jl}(1-2\eps)y_j^*-\ale{\Delta}\cdot |\calB|\\
        &\overset{\text{Lem. }\ref{lem:good-block-superblock-partition}}{\geq}(1-4\eps)\sum_{j\in \Jl}y_j^*-\eps \cdot |\OPT'|\\
        &\overset{\text{Lem. }\ref{lem:solveLP}}{\geq}(1-4\eps)\sum_{j\in \Jl}y_j^*-2\eps \sum_{j\in J}y_j^*.
    \end{align*}
    }
    This completes the proof.
\end{proof}

Finally, we output the best of the two solutions computed with the algorithm of this section and with the algorithm from Section~\ref{sec:firstRounding} (using the same optimal solution to the configuration LP for $\Delta=4(\log_2T+1)/\eps^4$ in both cases). Then, Lemmas~\ref{lem:matchingExistence},
\ref{lem:solve-large-LP}, and \ref{lem:rounding-good} yield 
Theorem~\ref{thr:mainPseudopoly}.

\begin{proof}[Proof of Theorem~\ref{thr:mainPseudopoly}]
    By Lemma~\ref{lem:solve-large-LP}, we solve the LP for $\Delta=4(\log_2 T+1)/\eps^4$ in time $(nT)^{O_{\eps}(1)}$.
    By Lemma~\ref{lem:solve-large-LP}, the computed solution has profit at least $(1-2\eps)|\OPT|$.
    If $\sum_{j\in \Js}y_j^*\geq \frac{|\OPT|}{5}$,
    we know from Lemma~\ref{lem:matchingExistence}, that the rounding procedure from Section~\ref{sec:firstRounding} yields a schedule for at least
    \begin{align*}
        (1-O(\eps))\left(\frac{3}{4}\sum_{j\in J}y_j^*+\frac{1}{4}\sum_{j\in \Js}y_j^*\right)
        &\geq (1-O(\eps))\left(\frac{3}{4}\cdot |\OPT|+\frac{1}{20} \cdot |\OPT|\right) \\
        &\geq (1-O(\eps))\cdot \frac{4}{5} \cdot |\OPT|
    \end{align*}
    jobs.
    On the other hand, if $\sum_{j\in \Js}y_j^*< \frac{|\OPT|}{5}$, we have
    \begin{equation*}
        \sum_{j\in \Jl}y_{j}^*\geq (1-O(\eps))\cdot \frac{4}{5}\cdot |\OPT|.
    \end{equation*}
    This implies by Lemma~\ref{lem:rounding-good}, that the second rounding procedure from Section~\ref{sec:secondRounding} yields a schedule for at least $(1-O(\eps))\cdot \frac{4}{5} \cdot |\OPT|$ jobs.
    In both cases, we obtain a $(5/4+O(\eps))$-approximate solution. The claim follows by rescaling $\eps$ appropriately.
\end{proof}

\section{Extension to multiple machines}
\label{sec:multiple}

In this section, we describe how we extend our polynomial time $(4/3+\eps)$-approximation
and our pseudo-polynomial time $(5/4+\eps)$-approximation algorithms
to the setting of multiple \fab{(identical)}\fabr{Once within parentheses is enough} machines.

First, we define the problem formally on multiple machines. In the
input, we are given a set of jobs $J$ with given processing times,
release times, and deadlines as in the case of one machine. Additionally, \fab{we are given} a value $m\in\NN$ that denotes the given number of (identical)
machines. Our goal is again to compute a subset of jobs $J'\subseteq J$;
however, now we also need to compute a partition $J'=J'_{1}\dot{\cup}...\dot{\cup}J'_{m}$
where for each $i\in[m]$ the set $J'_{i}$ corresponds to the jobs
we assign to machine $i$. Like before, for each job $j\in J'$ we
need to compute a start time $s(j)\in\NN$ such that $[s(j),s(j)+p_{j})\subseteq\tw(j)$.
We require that for each machine $i\in[m]$ and for any two jobs $j,j'\in J'_{i}$
that $[s(j),s(j)+p_{j})\cap[s(j'),s(j')+p_{j'})=\emptyset$. Note
that we do not require this for two jobs assigned to different machines
$i,i'$. As before, the objective is to maximize $|J'|$. We call
the resulting problem Throughput Maximization on $m$ machines.

It was shown in \cite{im2020breaking} that for each $\eps>0$ there
is a $(1-O(\sqrt{(\log m)/m})-\eps)^{-1}$-approximation algorithm.
\begin{thm}[\cite{im2020breaking}]
\label{thm:small-m}For any $\varepsilon>0$, there exists a polynomial
time $(1-O(\sqrt{(\log m)/m})-\eps)^{-1}$approximation for Throughput
Maximization on $m$ machines.
\end{thm}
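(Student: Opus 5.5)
The statement is quoted from \cite{im2020breaking}, so the plan below is a reconstruction in the spirit of Section~\ref{sec:polyAlgo} rather than a faithful account of their proof. I treat $\delta:=C\sqrt{(\log m)/m}$ for a large constant $C$ as the target loss. We may assume $\delta\le 1/2$, i.e., $m$ is at least a large constant, since otherwise the claimed factor is weaker than known constant-factor results such as \cite{chuzhoy2006approximation}.

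\textbf{Step 1: blocks and LP.} First apply the block construction of Lemma~\ref{lem:block-preprocessing} to the $m$-machine setting. Concretely, for each machine separately, take a block partition so that a $(1+\eps)$-approximate solution schedules each job entirely inside one block of its machine, with $O_\eps(1)$ jobs per block and machine. I will try to make the partition common to all machines by taking the union of the breakpoints; a shifting argument then has to bound the loss from jobs cut by a breakpoint coming from a different machine. Next, write the configuration-LP analogue of \eqref{eq:configuration-lp}: a configuration is again a single-machine schedule of at most $K$ jobs inside a block $B$. The block constraints become $\sum_{C\in\calC(B)}x_C=m$, and the job constraints $\sum_{C\ni j}x_C\le 1$ stay. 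This LP can be solved as in Lemma~\ref{lem:solveLP}, and its value is at least $(1-O(\eps))|\OPT|$.

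\textbf{Step 2: rounding.} For each block $B$ independently, choose $m$ configurations by systematic (dependent) sampling, so that each $C\in\calC(B)$ is chosen with probability exactly $x_C$ and at most once. Put each chosen configuration on its own machine. As in Section~\ref{sec:firstRounding}, every chosen job creates a slot, and a maximum bipartite matching assigns jobs to compatible slots. A job $j$ is then \emph{offered} in expectation $\sum_{C\ni j}x_C=y_j^*$ slots of its own.

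\textbf{Step 3: fractional matching.} The analysis would again build a fractional matching of expected value $(1-O(\delta))\sum_j y_j^*$. Unlike the single-machine case, each block now carries about $m\cdot(\text{average load})$ slots. The idea is to perform the harmonic grouping of Lemma~\ref{lem:groups} inside each block, or inside small unions of blocks, with groups of fractional size about $\Theta(\log m/\delta^2)$ instead of $\eps Y^*$. A Chernoff bound then shows that each group's sampled slot count is within a factor $1-\delta$ of its mean, except with probability $1/\mathrm{poly}(m)$. Matching group $\ell$ to the slots of group $\ell-1$, as in Lemma~\ref{lem:fractional-matching-exists}, loses one group per block. This loss, together with the duplicates removed by the matching, should amount to an $O(\delta)$ fraction, giving the factor $(1-O(\delta)-\eps)^{-1}$.

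\textbf{Main obstacle.} Grouping by processing time is only valid for jobs that span the region where the slots lie. Jobs whose windows only partially overlap a block (boundary jobs) have incomparable windows, so they cannot be shifted into each other's slots. For these jobs I would first try to refine the grouping by (release-side, deadline-side, processing time) classes. The read-$k$ argument of Lemma~\ref{lem:read-k-concentration} would handle the dependencies. If the number of classes cannot be kept to $O_\eps(1)$, the fallback is an alteration argument in the style of Lemma~\ref{lem:pseudoConcentration}: scale the LP by $1-\delta$ and drop the rare overflows per block and machine group. With $m$ machines each such overflow event has probability $\exp(-\Omega(\delta^2 m))=m^{-\Omega(1)}$, which is exactly what produces the $\sqrt{(\log m)/m}$ loss.
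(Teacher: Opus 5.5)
The paper gives no proof of this statement; it imports it from \cite{im2020breaking}. Your plan therefore has to stand on its own, and it does not. The step you call the main obstacle is where the argument breaks, and you leave it open. With configurations sampled independently per block, a job whose LP mass is split evenly between its two boundary blocks is sampled in both with constant probability, independently of $m$. The spare slot can only be refilled by another job whose window covers it, and your processing-time grouping provides such substitutes only for jobs spanning the region. This is exactly where the paper's own analysis loses the factor $\frac34$ (Proposition~\ref{prop:meanInequality}), so your claim that duplicates cost only $O(\delta)$ is unsupported.

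Neither proposed fix works. Classes indexed by release time, deadline and processing time cannot be kept bounded. The alteration fallback needs per-block load conditions such as $\sum_{j':r_{j'}\ge r_j}p_{j'}\le m(t-r_j)$ to imply a non-preemptive schedule on $m$ machines. This is false for long jobs: three jobs of length $6$ confined to an interval of length $10$ satisfy $18\le 2\cdot 10$ but do not fit on two machines. Long jobs also cannot simply be discarded when a block carries only $O_{\eps}(1)$ jobs per machine; Section~\ref{sec:pseudopolynomialTimeAlgorithm} affords this only with $\Theta_{\eps}(\log T)$ jobs per block, at pseudo-polynomial cost.

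Two further steps fail for large $m$, the only regime where the theorem is used. In Step~1, a breakpoint of one machine's partition can cut a job on each of the other machines, so the union loses about $m$ times the per-machine loss. Compensating needs accuracy $\eps/m$, which gives a per-block bound $(m/\eps)^{O((m/\eps)\log(m/\eps))}$ as in Lemma~\ref{lem:block-preprocessing-m}, and then the configuration LP is no longer polynomial. The paper's own block-based algorithm has the same dependence, which is why it is used only for $m\le m_{\eps}$. In Step~2, systematic sampling is not negatively correlated: configurations occupying $[0,\frac12)$ and $[1,\frac32)$ in the systematic order are selected together or not at all. So the within-block Chernoff bound of Step~3 is unjustified; $m$ independent draws from $(x_C/m)_C$ would repair this.

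The missing idea is this: once every job's execution interval is fixed, the intervals can be distributed over $m$ machines if and only if each time point is covered at most $m$ times (interval graphs are perfect). That makes blocks, grouping and slot exchanges unnecessary. Take the time-indexed LP with a variable $x_{j,s}$ for each feasible start time, the job constraints, and the constraint that every time point is covered by total mass at most $m$. Let each job independently pick $s$ with probability $(1-\delta)x_{j,s}$. Scan the picked intervals by left endpoint, and keep one iff fewer than $m$ already kept intervals contain its left endpoint.

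Every kept interval through a point $\tau$ contains the left endpoint of the last one kept among them, so the depth never exceeds $m$. A picked interval is dropped only if at least $m$ other picked intervals contain its left endpoint. That count is a sum of independent indicators with mean at most $(1-\delta)m$, so Chernoff bounds the probability by $e^{-\delta^2 m/3}\le 1/m$ for $\delta=\sqrt{3\ln m/m}$. This gives $(1-\delta)(1-1/m)=1-O(\sqrt{(\log m)/m})$ times the LP value, even with weights. The remaining technical point is handling the pseudo-polynomially many start times in polynomial time, which is where an additive $\eps$ can be spent. Your alteration fallback had the right flavour, but it has to act on fixed intervals, not on per-block load conditions.
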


Hence, for each $\eps>0$ there is a constant $m_{\eps}\in\NN$, \fab{$m=O_\eps(1)$,} such
that the theorem above yields a $(1+O(\eps))$-approximation algorithm
if $m>m_{\eps}$. Therefore, in the following we assume that we are
given a constant $\eps>0$ and an instance of Throughput Maximization
on multiple machines in which $m\le m_{\eps}$. For this setting,
we give a polynomial time $(4/3+\eps)$-approximation and a pseudo-polynomial
time $(5/4+\eps)$-approximation algorithm.

\subsection{Construction of blocks and superblocks}

First, we generalize our construction of the blocks and superblocks
to the setting of multiple machines. We start with a corresponding
statement of Lemma~\ref{lem:block-preprocessing} that constructs
the elementary blocks on multiple machines. We ensure that there is
a near-optimal solution which schedules at most $O_{\eps}(m)$ jobs
during each block. Note that since $m\le m_{\eps}=O_{\eps}(1)$ this
is bounded by a constant.
\begin{lem}
\label{lem:block-preprocessing-m} Assume that $|\OPT|\geq (\frac{6m}{\eps})^{3}$.
In polynomial time we can compute a partition of $[0,T)$ into a set
of blocks $\tilde{\calB}_{0}$ such that there exists a \fab{set of jobs} $\widetilde{\OPT}''\subseteq J$ \fab{and a feasible schedule of them} with the following properties:
\begin{enumerate}
[label=(A\arabic*)]
\item \label{item:cor-opt-m} $|\widetilde{\OPT}''|\geq(1-\eps)|\OPT|$,
\item \label{item:cor-restricted-m} each job $j\in\widetilde{\OPT}''$
is scheduled in some block $B\in\tilde{\calB}_{0}$,
\item \label{item:cor-block-bound-m} for each block $B\in\tilde{\calB}_{0}$
there are at most \ale{$\tilde{K}_{0}=(m/\eps)^{O(m/\eps\log(m/\eps))}$}\alnote{I changed the value of $\tilde{K}_0$}
jobs of $\widetilde{\OPT}''$ that are scheduled in $B$,
\item \label{item:cor-bound-opt-m} $|\widetilde{\OPT}''|\ge m\cdot|\tilde{\calB}_{0}|/\eps$.
\end{enumerate}
\end{lem}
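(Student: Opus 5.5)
We will follow the single-machine construction of Chuzhoy, Ostrovsky and Rabani behind Lemma~\ref{lem:block-preprocessing}, run on all $m$ machines at once, and let every threshold scale with $m$. Fix an optimal schedule $\OPT$ on $m$ machines. The starting point is the geometric classification of jobs. For a candidate interval $I$, a job scheduled inside $I$ is \emph{small} for $I$ if $p_j\le \frac{\eps}{m}|I|$ (roughly) and \emph{large} otherwise. On a single machine at most $m/\eps$ large jobs fit in $I$, so at most $m^2/\eps$ fit in total.

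\textbf{Step 1 (recursive partition).} We build the blocks by recursive bisection of $[0,T)$, exactly as in the one-machine case, over $O(\frac{m}{\eps}\log\frac{m}{\eps})$ levels of length scales. At each level, a job whose execution interval crosses a cut point of that level is charged to the cut. We use a shifting/averaging argument over $\Theta(m/\eps)$ shifts of the cut positions to pick the level offsets. This guarantees that the jobs crossing chosen cuts form at most an $\eps/2$ fraction of $\OPT$; these are the only discarded jobs. All $m$ machines share the same time axis, so a cut point cuts at most one job per machine.

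\textbf{Step 2 (bounding jobs per block).} The recursion stops splitting an interval once it contains only jobs that are large with respect to some ancestor scale. The counting of the original proof then goes through with $1/\eps$ replaced by $m/\eps$: per level the number of surviving jobs grows by a factor $(m/\eps)^{O(1)}$, and there are $O(\frac{m}{\eps}\log\frac{m}{\eps})$ levels. This gives $\tilde K_0=(m/\eps)^{O(m/\eps\log(m/\eps))}$, which is \ref{item:cor-block-bound-m}. Properties \ref{item:cor-opt-m} and \ref{item:cor-restricted-m} follow from Step 1, since every uncut job lies inside a leaf block.

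\textbf{Step 3 (few blocks, \ref{item:cor-bound-opt-m}).} We merge consecutive leaf blocks greedily until each merged block contains at least $m/\eps$ jobs of the solution, while keeping the count at most $\tilde K_0$. Merging cannot violate \ref{item:cor-restricted-m}. The algorithm does not know $\OPT$, so, as in the original, we guess the relevant parameters. These are polynomially many cut positions per level, and $|\OPT|$ up to a $(1+\eps)$ factor. The assumption $|\OPT|\ge(6m/\eps)^3$ makes the losses from the boundary and rounding terms (about $m$ jobs per cut, over at most $\eps|\OPT|/m$ cuts) an $O(\eps)$ fraction.

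\textbf{Main obstacle.} The delicate part is Step 3 combined with polynomial running time: we must produce the partition without knowing $\OPT$. We would first try the original approach. It derives the partition from an LP or DP estimate of how many jobs fit in each interval, where ``fit'' now means on $m$ machines, which a $(1+\eps)$-accurate estimate handles since $m=O_\eps(1)$. We would then verify that the per-cut loss is now $m$ rather than $1$, which is exactly compensated by requiring $m/\eps$ jobs per block.
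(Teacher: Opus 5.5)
\textbf{Genuine gap: Steps 1--3 never produce a valid partition.} What you describe is not the construction of Chuzhoy, Ostrovsky and Rabani. Theirs cuts $[0,T)$ according to how many jobs the earliest-finishing-time greedy places in each piece, refined over $O(\frac{1}{\eps}\log\frac{1}{\eps})$ rounds; nothing there is a length-scale bisection. Your substitute cannot work. Properties (A1)--(A3) force $|\tilde{\calB}_0|\ge(1-\eps)|\OPT|/\tilde K_0$, so the number of blocks must grow linearly with $|\OPT|$. A bisection of $[0,T)$ with $O(\frac{m}{\eps}\log\frac{m}{\eps})$ levels has only $(m/\eps)^{O(m/\eps)}$ leaves, so some leaf contains unboundedly many jobs (take $\OPT$ packing $N$ unit jobs into $[0,N)=[0,T)$).

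Suppose instead you recurse until only ``large'' jobs remain. Then the depth becomes about $\log T$ and the stopping rule depends on $\OPT$. Worse, nothing prevents the deeper cuts, which the many short jobs require, from cutting the long jobs interleaved with them, so the $\eps$-fraction loss bound is lost. The claimed growth of $(m/\eps)^{O(1)}$ per level is just the target bound, not a derivation.

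If the cuts may depend on $\OPT$, existence is trivial: cut after every $K$-th job of $\OPT$ in time order, losing at most $m$ jobs per cut. So the entire content of the lemma is computing the partition in polynomial time without knowing $\OPT$, and Step 3 does not achieve this. Jointly guessing $\Omega(|\OPT|/\tilde K_0)$ cut positions is not polynomial. A $(1+\eps)$-accurate estimate of how many jobs fit into an interval on $m$ machines is not known to be computable in polynomial time either; on $[0,T)$ it would approximate $|\OPT|$ itself within $1+\eps$. The assumption $m=O_\eps(1)$ does not help here.

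\textbf{Missing idea: reduce to the single-machine result and use it as a black box.} Place the $m$ copies of $[0,T)$ side by side as $[0,mT)$. View the instance as a single-machine JISP instance in which job $j$ may use any interval of length $p_j$ inside $[r_j+(i-1)T,\,d_j+(i-1)T)$ for some $i\in[m]$. The greedy still runs in polynomial time, since each job has only $m$ windows. Apply Chuzhoy et al.\ with accuracy $\eps/(2m)$. This yields blocks $\hat{\calB}_0$ of $[0,mT)$ and a solution $\widehat{\OPT}$ with $|\widehat{\OPT}|\ge(1-\frac{\eps}{2m})|\OPT|$, at most $\tilde K_0'=(m/\eps)^{O(\frac{m}{\eps}\log\frac{m}{\eps})}$ jobs per block, and $|\hat{\calB}_0|\le\frac{\eps}{2m}|\widehat{\OPT}|$.

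Then fold back: take as breakpoints of $\tilde{\calB}_0$ all endpoints of blocks of $\hat{\calB}_0$ modulo $T$.
\begin{itemize}
\item This gives $|\tilde{\calB}_0|\le|\hat{\calB}_0|$.
\item Each new block, on each machine, lies inside a single old block, so it holds at most $m\tilde K_0'$ jobs.
\item At each breakpoint some machine already has a block boundary, so at most $m-1$ jobs are cut.
\end{itemize}
Discarding the cut jobs costs at most $(m-1)|\hat{\calB}_0|\le\frac{\eps}{2}|\OPT|$ jobs, which gives (A1). Property (A4) follows from $m|\tilde{\calB}_0|/\eps\le|\OPT|/2$. Your bookkeeping (about $m$ jobs lost per cut, compensated by an extra factor $m$ in block density) is exactly what is used here, but it needs this reduction to supply the blocks.
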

\begin{proof}    \alnote{This is the only really new proof.}
    Again, we use the result by \cite{chuzhoy2006approximation} as described in \ant{Appendix}~\ref{sec:appendix-block}. 
    Their result holds for a more general problem than Maximum Throughput, called the Job Interval Selection Problem (JISP). 
    Instead of a release time, a deadline and a processing time, each job $j\in J$ in JISP has a set $\calI_j$ of intervals and it can be scheduled for at most one interval in $\calI_j$ instead of all intervals of length $p_j$ contained in $[r_j, d_j)$.
    So Maximum Throughput on one machine is the special case of JISP where $\calI_j=\{[t, t+p_j):[t, t+p_j)\subseteq [r_j, d_j)\}$.
    Intuitively, in the multiple machines case we have one separate interval of length $T$ for each machine, i.e., a job scheduled on the $i$-th machine is scheduled during the interval $[(i-1)T, iT)$. Formally, for each $j\in J$ and each $i\in [m]$ let
    $$\calI_{j, i}:=\{[t, t+p_j): t\in \NN \wedge [t, t+p_j)\subseteq [r_j+(i-1)T, d_j+(i-1)T)\}$$
    Also let $\calI_j:=\bigcup_{i=1}^m \calI_{j, i}$ denote the intervals in which $j$ can be scheduled.
    A schedule for this JISP instance corresponds to an instance of Maximum Throughput \fab{on $m$ machines}:  A job scheduled during the interval $[(i-1)T+t, (i-1)T+t+p_j)\in \calI_{j, i}$ in the JISP instance is scheduled on machine $i$ during the interval $[t, t+p_j)$ in the \fab{Maximum} Throughput instance.
    
    Now we apply the algorithm from \cite{chuzhoy2006approximation}, for $\eps'=\eps/(2m)$ instead of $\eps$. Note that even though $\calI_j$ is only pseudopolynomial in the input size, the described greedy algorithm can still be implemented in polynomial time (in the size of the given input instance), as each job only has $m$ different time windows.
    We do the same steps as in the proof of Lemma~\ref{lem:block-preprocessing}, which yields, assuming that $|\OPT|\geq (\frac{6m}{\eps})^{3}$,  a partition of $[0,mT)$ into a set of blocks $\hat{\calB}_{0}$ such that there exists a set of jobs $\widehat{\OPT}\subseteq J$ and a feasible schedule of them with the following properties:
\begin{enumerate}[label=(C\arabic*)]
    \item\label{item:cor-opt-m-proof} $|\widehat{\OPT}|\geq(1-\eps/(2m))|\OPT|$, 
    \item \label{item:cor-restricted-m-proof} each job $j\in\widehat{\OPT}$ is scheduled in some block $B\in\hat{\calB}_{0}$,
    \item\label{item:cor-block-bound-m-proof} for each block $B\in\hat{\calB}_{0}$ there are at most $\tilde{K}_{0}'=(m/\eps)^{O(m/\eps\log(m/\eps))}$ jobs of $\widehat{\OPT}$ that are scheduled in $B$, 
    \item\label{item:cor-bound-opt-m-proof} $|\widehat{\OPT}|\ge 2m\cdot |\hat{\calB}_{0}|/\eps$.
\end{enumerate}
    Now we can define the blocks $\tilde{B}_0$ for the given instance of Maximum Throughput on $m$ machines. \fab{Intuitively, we cut the blocks $\hat{\calB}_0$ at multiples of $T$, and then we shift the blocks corresponding to the interval $[(i-1)T,iT)$ into the interval $[0,T)$ so as to obtain a partition into blocks for machine $i$. However, we need a consistent block partitioning over the different machines. 
    \ale{Therefore, whenever a blocks starts or ends for some machine $i$, we use this a start or end point of a new block for each machine.}
    Formally,} let $P=\{s, t:B=[s, t)\in \hat{\calB}_0\}$ denote all start and endpoints of blocks and let $P_{mod}=\{t'\in [0, T]:\exists t\in P \text{ such that } t\equiv \fab{t'} \text{ modulo } T\}$ denote the corresponding points modulo $T$.
    Assume that $P_{mod}=\{t_0=0, \dots, t_r=T\}$ such that for all $k\in [r]$ we have $t_{k-1}<t_{k}$.
    We define by $\tilde{\calB}_0=\{[t_{k-1}, t_k):k\in [r]\}$ the partition of $[0, T)$ into blocks.
    By construction we have \ale{$|P_{mod}|\leq |P|$ and thus} $|\tilde{\calB}_0|\ale{\leq} |\hat{\calB}_0|$.
    We define a schedule $\widetilde{\OPT}'$ for the \ale{Maximum} Throughput instance obtained from $\widehat{\OPT}$ as follows. Consider a job $j \in \widehat{\OPT}$ and assume that it is scheduled during some interval
    $[(i-1)T+t, (i-1)T+t+p_j)\in \calI_{j, i}$ for some value $i$. We include $j$ in our set $\widetilde{\OPT}'$ and we schedule $j$ on machine $i$
    during the interval $[t, t+p_j)$. We apply this reasoning to each job $j\in \widehat{\OPT}$.
    Clearly, $\widetilde{\OPT}'$ is feasible and we have $|\widetilde{\OPT}'|=|\widehat{\OPT}|$.
    Furthermore, let $\widetilde{\OPT}''$ denote the schedule obtained from $\widetilde{\OPT}'$ \fab{by} removing all jobs that are not completely scheduled within one block from $\tilde{\calB}_0$.
    
    It remains to show that this schedule fulfills the properties stated in the lemma.
    Property \ref{item:cor-restricted-m} is fulfilled by construction of $\widetilde{\OPT}''$.
    Let $k\in [r-1]$. Then there exists $i\in [m]$ such that a block in  $\hat{\calB}_0$ begins at $(i-1)T+t_k$. Thus there are at most $m-1$ jobs $j\in \widetilde{\OPT}'$, which are scheduled partially in both the blocks $[t_{k-1}, t_k)$ and $[t_{k}, t_{k+1})$.
    Thus we have $|\widetilde{\OPT}''|\geq |\widetilde{\OPT}'|-(m-1)|\tilde{\calB}_0|$.
    Using \ref{item:cor-opt-m-proof} and \ref{item:cor-bound-opt-m-proof}, we obtain 
    \begin{align*}
        |\widetilde{\OPT}''|&\geq |\widetilde{\OPT}'|-(m-1)|\tilde{\calB}_0|
        \geq |\widehat{\OPT}|-(m-1)|\hat{\calB}_0|\\
        &\ale{\geq (1-\eps/(2m))|\OPT|-(m-1)\frac{\eps |\OPT|}{2m}}\\
        &\geq (1-\eps)|\OPT|
    \end{align*}
    \ale{This} yields \ref{item:cor-opt-m}.
    
    By construction $\widehat{\OPT}$ schedules at most $\tilde{K}_0'$ jobs during each block in $\hat{\calB}_0$. Thus $\widetilde{\OPT}''$ schedules at most $\tilde{K}_0:=m\cdot \tilde{K}_0'$ jobs during each block in $\tilde{\calB}_0$, yielding \ref{item:cor-block-bound-m}.
    Finally, \ale{using \ref{item:cor-opt-m}} we have that  $m\cdot |\tilde{\calB}_0|/\eps \fab{\leq} m\cdot |\hat{\calB}_0|/\eps\leq \frac{1}{2}|\widehat{\OPT}|\ale{\leq \frac{1}{2}|\OPT| }\leq |\widetilde{\OPT}''|$ which yields \ref{item:cor-bound-opt-m} and, hence, completes the proof.

\end{proof}
We use the same definition of a block-superblock partition as in the
case of one machine only. Based on our elementary blocks $\tilde{\calB}_{0}$
from Lemma~\ref{lem:block-preprocessing-m}, we compute $1/\eps$
block-superblock partitions in exactly the same way as in Lemma~\ref{lem:good-block-superblock-partition}.
As we have $|\widetilde{\OPT}''|\ge m\cdot|\tilde{\calB}_{0}|/\eps$  by Lemma~\ref{lem:block-preprocessing-m} instead of $|\OPT''|\ge |\calB_0|/\eps$ shown in Lemma~\ref{lem:block-preprocessing}, we also get the stronger bound $|\widetilde{\OPT}'|\ge m\frac{\Delta}{\eps}|\tilde{\calB}|$ instead of the bound $|\OPT'|\ge \frac{\Delta}{\eps}|\calB|$ in Lemma~\ref{lem:good-block-superblock-partition}. For the same reason we also get $|\widetilde{\OPT}'|\ge m\frac{\tilde{K}}{\eps^{6}}|\tilde{\calS}|$ instead of $|\OPT'|\ge \frac{K}{\eps^6}|\calS|$. However, the bound $|\widetilde{\OPT}'|\ge \frac{\tilde{K}}{\eps^{6}}|\tilde{\calS}|$ 
(as in the case of a single machine)
is
sufficient for our analysis later.
\begin{lem}
\label{lem:good-block-superblock-partition-m}Let $\Delta\in\NN$
be a given parameter, and assume that $|\OPT|\geq 5\tilde{K}_0\Delta(2\tilde{K}_{0}/\eps^{5})^{1/\eps}$.
In time $\Delta^{O(1)}n^{\fab{O_{\eps}(1)}}$ we can compute at most
$1/\eps$ block-superblock partitions and for each one of them a value
$\tilde{K}$ with $\tilde{K}\le \Delta(2\tilde{K}_{0}/\eps^5)^{1/\eps}$
such that for at least one computed partition $(\tilde{\calB},\tilde{\calS})$
there exists a set $\widetilde{\OPT}'\subseteq J$ and a corresponding
feasible schedule with the following properties:
\begin{enumerate}
[label=(B\arabic*)]
\item \label{item:block-superblock-partition-lb-m} $|\widetilde{\OPT}'|\geq(1-2\varepsilon)|\OPT|$,
\item \label{item:border-or-spanning-m} each job $j\in\widetilde{\OPT}'$
is scheduled in $B_{j,L}$, or in $B_{j,R}$, or in a block $B\in\tilde{\calB}$
for which $j$ spans the superblock $S\in\tilde{\calS}$ containing
$B$,
\item \label{item:configuration-size-m} for each block $B\in\tilde{\calB}$
there are at most $\tilde{K}$ jobs from $\widetilde{\OPT}'$ that
are scheduled in $B$,
\item \label{item:fewBlocks-m} $|\widetilde{\OPT}'|\ge m\frac{\Delta}{\eps}|\tilde{\calB}|$
and $|\widetilde{\OPT}'|\ge \frac{\tilde{K}}{\eps^{6}}|\tilde{\calS}|$.
\end{enumerate}
\end{lem}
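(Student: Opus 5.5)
The proof mirrors the proof of Lemma~\ref{lem:good-block-superblock-partition} almost verbatim. The only change is that the elementary blocks $\tilde{\calB}_0$ and the solution $\widetilde{\OPT}''$ now come from Lemma~\ref{lem:block-preprocessing-m} rather than from Lemma~\ref{lem:block-preprocessing}. The hypothesis $|\OPT|\geq 5\tilde{K}_0\Delta(2\tilde{K}_0/\eps^5)^{1/\eps}$ is at least $(6m/\eps)^3$ for small $\eps$, because $\tilde K_0$ grows superpolynomially in $m/\eps$. So Lemma~\ref{lem:block-preprocessing-m} applies.

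\emph{Step 1 (construction).} Each block of $\tilde{\calB}_0$ holds at most $\tilde K_0$ jobs, so $|\tilde{\calB}_0|\ge (1-\eps)|\OPT|/\tilde K_0\ge 4\Delta(2\tilde K_0/\eps^5)^{1/\eps}$. As before, I pad with dummy blocks $[T,T)$ so that the number of blocks is a multiple of $2\Delta(2\tilde K_0/\eps^5)^{1/\eps}$; this at most multiplies the count by $3/2$. I then form $\tilde{\calB}_1$ by merging $2\Delta$ consecutive blocks. For $\ell\ge 2$, $\tilde{\calB}_\ell$ merges $2\tilde K_0/\eps^5$ consecutive blocks of $\tilde{\calB}_{\ell-1}$, and $\tilde{\calS}_\ell$ merges $2\tilde K_0/\eps^5$ consecutive blocks of $\tilde{\calB}_\ell$. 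I set $\tilde K_\ell:=2\Delta\tilde K_0(2\tilde K_0/\eps^5)^{\ell-1}$. Computing all of this takes polynomial time plus the time of Lemma~\ref{lem:block-preprocessing-m}.

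\emph{Step 2 (shifting).} This step is the only conceptual point, and it carries over because it is machine-independent. Each job $j\in\widetilde{\OPT}''$ has a fixed execution interval $[s(j),s(j)+p_j)$ on some machine, and that interval lies inside an elementary block. Whether property~\ref{item:border-or-spanning-m} holds for $j$ at level $\ell$ depends only on this interval and on $\tw(j)$, not on which machine runs $j$. Hence the single-machine argument applies unchanged: if $j$ violates the property at level $\ell$, then for $\ell'>\ell$ it lies in the boundary block $B^{\ell'}_{j,L}$ (or $B^{\ell'}_{j,R}$), and for $\ell'<\ell$ it lies in a spanned superblock. So the violating sets $D_\ell$ are disjoint, and some $D_{\ell^*}$ has size at most $\eps|\widetilde{\OPT}''|$. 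I set $\widetilde{\OPT}':=\widetilde{\OPT}''\setminus D_{\ell^*}$ and keep its machine assignment and schedule. This gives~\ref{item:block-superblock-partition-lb-m} and~\ref{item:border-or-spanning-m}. Property~\ref{item:configuration-size-m} follows by counting: a block of $\tilde{\calB}_{\ell^*}$ consists of $2\Delta(2\tilde K_0/\eps^5)^{\ell^*-1}$ elementary blocks, each holding at most $\tilde K_0$ jobs across all machines.

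\emph{Step 3 (counting).} For~\ref{item:fewBlocks-m}, I repeat the chain of inequalities from the single-machine case, now using $|\widetilde{\OPT}''|\ge m|\tilde{\calB}_0|/\eps$. This yields $|\widetilde{\OPT}'|\ge \frac{2(1-\eps)m}{3\eps}|\tilde{\calB}'_0|\ge m\frac{\Delta}{\eps}|\tilde{\calB}_{\ell^*}|$. The identity $\frac{2\tilde K_0\tilde K_{\ell^*}}{\eps^5}|\tilde{\calS}_{\ell^*}|=\tilde K_0|\tilde{\calB}'_0|$ then gives $|\widetilde{\OPT}'|\ge m\tilde K_{\ell^*}|\tilde{\calS}_{\ell^*}|/\eps^6$, which is at least the required bound since $m\ge 1$.

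I expect no real obstacle. The only point needing care is checking that the shifting argument in Step~2 never used the single-machine structure. It did not: it used only the position of each job's execution interval relative to the nested block hierarchy.
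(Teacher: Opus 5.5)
Your proposal is correct and is essentially the paper's argument: the paper just reruns the construction and shifting argument of Lemma~\ref{lem:good-block-superblock-partition} on the elementary blocks $\tilde{\calB}_0$ of Lemma~\ref{lem:block-preprocessing-m}, and the factor $m$ from property (A4) carries through the counting exactly as in your Step~3. The paper leaves two points implicit that you make explicit: that the shifting argument uses only execution intervals and not machines, and that the hypothesis $|\OPT|\geq (6m/\eps)^3$ of Lemma~\ref{lem:block-preprocessing-m} holds, which is cleanest to justify via $\tilde K_0 = m\tilde K_0'\geq m$ rather than an asymptotic growth claim about $\tilde K_0$.
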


Like before, we guess the block-superblock partition from Lemma~\ref{lem:good-block-superblock-partition-m}
satisfying properties \ref{item:block-superblock-partition-lb-m}-\ref{item:fewBlocks-m}
and the corresponding value of $\tilde{K}$. Let $(\tilde{\calB},\tilde{\calS})$
and $\tilde{K}$ denote these values in the following.

\subsection{The linear program}

Based on $(\tilde{\calB},\tilde{\calS})$ and $\tilde{K}$ we define
the configuration-LP, similarly as in the case of one machine. Formally,
a configuration $C$ is specified by a pair $(B_{C},J_{C})$, where
$B_{C}=[s,t)\in\tilde{\calB}$ is a block and $J_{C}$ is a subset
of jobs that can be feasibly scheduled inside $B_{C}$ on $m$ machines
according to Lemma~\ref{lem:good-block-superblock-partition}; formally,
we require that $|J_{C}|\leq\tilde{K}$ and that, for each $j\in J_{C}$,
the block $B_{C}$ is either a boundary block for $j$ or $B_{C}$
is contained in a superblock $S$ spanned by $j$ (i.e., $S\subseteq\tw(j)$).
Moreover, \aw{we fix for $C$ a}
\fabr{More than assuming, we fix one such partition (like we fix the schedule). We see it as associated implicitly with the schedule. The LP solver will find it}\awr{rephrased it}
partition $J_{C}=J_{C,1}\dot{\cup}...\dot{\cup}J_{C,m}$
of the jobs in $J_{C}$ for the $m$ machines and a function $s_{C}:J_{C}\rightarrow\{s,\ldots,t-1\}$
denoting the jobs' starting times such that the mentioned partition
of $J_{C}$ and the function $s_{C}$ together yield a feasible schedule
of $J_{C}$ in $B_{C}$. Similarly as before, for each block $B$
we define the set $\tilde{\calC}(B)$ to be the set of all configurations
for $B$ and we set $\tilde{\calC}\coloneqq\bigcup_{B\in\tilde{\calB}}\fab{\tilde{\calC}}(B)$.
In our LP, for each configuration $C\in\tilde{\calC}$ we introduce
a variable $x_{C}$ representing whether we select $C$ for its corresponding
block. We introduce constraints to model that we select each job at
most once and we select one configuration for each block.
\begin{equation}
\tag{\textsf{LP}}\begin{split}\max\sum_{C\in\tilde{\calC}}|J_{C}|\cdot x_{C}\\
\text{s.t.}\sum_{C\in\tilde{\calC}:j\in J_{C}}x_{C} & \leq1\quad\text{\ensuremath{\forall}}j\in J\\
\sum_{C\in\tilde{\calC}(B)}x_{C} & =1\quad\text{\ensuremath{\forall}}B\in\tilde{\calB}\\
x_{C} & \geq0\quad\text{\ensuremath{\forall}}C\in\tilde{\calC}.
\end{split}
\label{eq:configuration-lp-m}
\end{equation}

Similarly as before, we can show that the optimal objective function
value of (\ref{eq:configuration-lp-m}) is almost as large as $|\OPT|$.

\begin{lem}
\label{lem:valueLP-m} Let $(\tilde{\calB},\tilde{\calS})$ be a block-superblock
partition satisfying properties \ref{item:block-superblock-partition-lb-m}-\ref{item:fewBlocks-m}
of Lemma~\ref{lem:good-block-superblock-partition-m}. Then the optimal
objective function value of the associated configuration LP is at
least %
\mbox{%
$(1-2\eps)|\OPT|$%
}.
\end{lem}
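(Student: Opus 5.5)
The plan is to mirror the proof of Lemma~\ref{lem:valueLP}: exhibit an integral feasible solution of \eqref{eq:configuration-lp-m} that encodes $\widetilde{\OPT}'$, and then read off its objective value from property \ref{item:block-superblock-partition-lb-m}.

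\emph{Construction.} Fix the set $\widetilde{\OPT}'$ given by Lemma~\ref{lem:good-block-superblock-partition-m}, together with its feasible schedule on $m$ machines, i.e., a machine assignment and start times $s(j)$. By \ref{item:border-or-spanning-m}, every job $j\in\widetilde{\OPT}'$ is scheduled entirely inside a single block $B\in\tilde{\calB}$, and $B$ is either a boundary block for $j$ or contained in a superblock spanned by $j$. For each block $B$, let $J_B$ be the set of jobs of $\widetilde{\OPT}'$ scheduled in $B$. Let $C_B=(B,J_B)$, equipped with the machine partition $J_{B,i}:=J_B\cap \widetilde{\OPT}'_i$ and the starting times $s_C(j):=s(j)$.

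\emph{Checking that $C_B$ is a configuration.} Property \ref{item:configuration-size-m} gives $|J_B|\le\tilde{K}$. The boundary/spanning condition holds by \ref{item:border-or-spanning-m}. The restricted schedule is feasible inside $B$ because each job lies within $B$ and within its time window, and jobs on the same machine remain disjoint. One technical point needs care: the definition fixes one particular partition and schedule per configuration. I will treat a configuration as determined by $(B,J_C)$ together with \emph{some} feasible schedule, so $C_B$ is (or corresponds to) an element of $\tilde{\calC}(B)$. If the fixed schedule for $(B,J_B)$ differs from ours, this is irrelevant, since the LP only depends on $B_C$ and $J_C$. Blocks in which no job of $\widetilde{\OPT}'$ is scheduled get the empty configuration $(B,\emptyset)$, which is always valid.

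\emph{Feasibility and value.} Set $x'_{C_B}=1$ for each $B$ and $x'_C=0$ otherwise. Each block then has exactly one selected configuration. Each job appears in at most one $J_{C_B}$, because every job in $\widetilde{\OPT}'$ is scheduled exactly once and hence in exactly one block. So all constraints are satisfied. The objective equals $\sum_B|J_B|=|\widetilde{\OPT}'|\ge(1-2\eps)|\OPT|$ by \ref{item:block-superblock-partition-lb-m}.

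There is no real obstacle here. The only step requiring attention is the well-definedness issue with the fixed per-configuration schedule described above.
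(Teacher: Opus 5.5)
Your proposal is correct and matches the paper: the paper proves this lemma exactly as Lemma~\ref{lem:valueLP}, by setting $x'_C=1$ for the configuration of each block whose job set is precisely the jobs of $\widetilde{\OPT}'$ scheduled in $B_C$, and reading off the value $|\widetilde{\OPT}'|\ge(1-2\eps)|\OPT|$ from \ref{item:block-superblock-partition-lb-m}. Your remark about the fixed per-configuration partition and schedule is a harmless clarification that the paper leaves implicit.
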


Also, we can solve (\ref{eq:configuration-lp-m}) sufficiently fast
by generalizing the algorithms due to Lemmas~\ref{lem:solveLP} and
\ref{lem:solve-large-LP}.
\begin{lem}
\label{lem:solveLP-m} Let $\tilde{K}$ denote the maximum number
of jobs in a configuration. Then in time 
$\aw{(2m)}^{O(\tilde{K})}\cdot(nT)^{O(1)}$\fabr{I'm not sure about the multiplicative factor $m$ in the exponent in the 2 pseudopoly results. I think $\tilde{K}^m$ and $|\OPT|^m$ look more logical}\awr{I think the term $\aw{m}^{O(\tilde{K})}$ is good since there are at most $m^{\tilde{K}}$ options how the $\tilde{K}$ colors are distributed over the $m$ machines} we can compute an optimal
solution $(x_{C}^{*})_{C\in\calC}$ to the configuration LP together
with a feasible schedule $s_{C}$ for each configuration $C\in\calC$
with $x_{C}^{*}>0$. 
\aw{If $m=O_\eps(1)$ and $\tilde K = O_{\eps,m}(1)$ and  we can do this also in polynomial time.}\awr{put this to address issue with polytime etc.}
Also, Maximum Throughput on $m$ machines can
be solved exactly in time $\aw{(2m)}^{O(|\OPT|)}(nT)^{O(1)}$.
\end{lem}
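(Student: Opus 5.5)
The plan is to follow the proofs of Lemma~\ref{lem:solveLP} and Lemma~\ref{lem:solve-large-LP} and change only the inner optimization routine, namely computing a maximum-weight configuration for a fixed block $B=[s,t)\in\tilde{\calB}$. First, \eqref{eq:configuration-lp-m} has the same dual as before: variables $\alpha_j$ for $j\in J$, $\beta_B$ for $B\in\tilde{\calB}$, and one covering constraint per configuration. So we again use the ellipsoid method with a separation oracle. For given duals and a block $B$, the oracle must find a maximum-weight set $J^*$ with $|J^*|\le\tilde K$, where each job has weight $w(j)=1-\alpha_j$. Every job in $J^*$ must be admissible for $B$ (boundary block or spanned superblock), and $J^*$ must be feasibly schedulable on $m$ machines inside $B$. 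Once this oracle exists, the argument of Lemma~\ref{lem:solve-large-LP} carries over verbatim. Ellipsoid makes polynomially many oracle calls; we then restrict the primal to the configurations generated and solve it. A schedule (machine partition and start times) for each $C$ with $x^*_C>0$ comes from the same DP.

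The oracle again uses color coding. We compute a $\tilde K$-perfect hash family of size $2^{O(\tilde K)}\log^{O(1)}n$ \cite{schmidt1990HashFunctions}. For each coloring, we look for a maximum-weight admissible set with pairwise distinct colors. The new ingredient is handling $m$ machines. I would first guess, for each color $c\in[\tilde K]$, the machine $\mu(c)\in[m]$ on which the job of color $c$ will run. There are $m^{\tilde K}$ such maps, which gives the $(2m)^{O(\tilde K)}$ factor. For a fixed $\mu$, the color classes $A_i=\mu^{-1}(i)$ are disjoint, so the machines decouple. For each machine $i$ we run the single-machine DP from \eqref{eq:COloringDPrecursion}, restricted to color subsets of $A_i$ and times in $\{s,\dots,t\}$, and obtain $\OPT_i(A_i,t)$. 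The union over $i$ uses each color at most once, hence no job twice, and has size at most $\tilde K$.

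For correctness, fix an optimal configuration $J^*$ with its machine partition. Some coloring in the family is injective on $J^*$, and the map $\mu$ that sends each color to the machine of the corresponding job is among those enumerated, so the DP value is at least $w(J^*)$. The total running time is $2^{O(\tilde K)}\cdot m^{\tilde K}\cdot m\cdot O(2^{\tilde K}T\tilde K n)=(2m)^{O(\tilde K)}(nT)^{O(1)}$.

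For the polynomial-time claim with $m=O_\eps(1)$ and $\tilde K=O_{\eps,m}(1)$, I would instead mimic Lemma~\ref{lem:solveLP}. We enumerate all $n^{O(\tilde K)}$ subsets together with all $m^{\tilde K}$ machine assignments. For each machine we test feasibility by trying all $\tilde K!$ orders and scheduling greedily; this avoids any dependence on $T$. Finally, exact solution of the problem in time $(2m)^{O(|\OPT|)}(nT)^{O(1)}$ uses the same color-coding DP with unit weights, $\tilde K$ replaced by $|\OPT|$ (guessed by trying all values), and the block replaced by $[0,T)$. The main obstacle I expect is getting the decoupling across machines right: the guessed color-to-machine map must make the per-machine DPs independent while preventing a job from being used on two machines. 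Distinct colors handle the latter, so the step is mostly bookkeeping rather than a genuine difficulty.
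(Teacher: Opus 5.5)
Your proposal is correct and follows the paper's proof essentially step for step. For the polynomial-time case you enumerate subsets together with machine assignments and check feasibility greedily on each machine; for the pseudopolynomial case you use the color-coding separation oracle, guessing one of the $m^{\tilde K}$ distributions of colors to machines and running the single-machine DP \eqref{eq:COloringDPrecursion} separately per machine, then solve the exact problem with unit weights. Your additional details — that disjoint color classes prevent a job from being used on two machines, and that $|\OPT|$ can be handled by trying all values — are correct points the paper leaves implicit.
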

\begin{proof}
    \aw{We can enumerate all configurations in time $n^{O(\tilde{K})}$, similarly as in Lemma~\ref{lem:solveLP}. For each block and each subset of at most $\tilde{K}$ jobs, we consider each partition of the jobs on the $m$ machines, and for each machine $i\in [m]$, we compute a schedule for its jobs like in the case of one machine (if a schedule exists). 
    Given all configurations, we can solve the LP in polynomial time.}

    \aw{Next, we show that we can solve the LP also in time $\aw{(2m)}^{O(\tilde{K})}\cdot(nT)^{O(1)}$. We proceed as in the proof of Lemma~\ref{lem:solve-large-LP}. Recall that there we need to compute a configuration of maximal weight for a block via the recursion~\eqref{eq:COloringDPrecursion}. We cannot directly apply this in our setting. Instead, we guess how the $\tilde{K}$ colors are distributed among the $m$ machines. For this there are at most $m^{\tilde{K}}$ options. Then, for each machine we use the DP specified by \eqref{eq:COloringDPrecursion} to compute a maximum weight set of jobs that can be scheduled on this machine and that have the respective colors. This way, we can also compute a maximal weight configuration for a block. The remaining proof is the same as for Lemma~\ref{lem:solve-large-LP}.}

    \aw{Finally, we can solve Maximum Throughput \fab{on} $m$ machines exactly in time $(2m)^{O(|\OPT|)}(nT)^{O(1)}$ using the same idea.}
\end{proof}

\subsection{Polynomial time approximation algorithm on multiple machines}\label{subsec:poly-m-machines}

Our polynomial time $(4/3+\eps)$-approximation algorithm is almost
identical to the algorithm for the case of only one machine. We choose
as before $\Delta:=1$ and solve the configuration-LP~(\ref{eq:configuration-lp-m}). \aw{If $m=O_{\eps}(1)$ we can do this in polynomial time by Lemma~\ref{lem:solveLP-m}}
since $\tilde{K}=O_{\eps\ale{, m}}(\ale{1})$
\alnote{I adjusted this to the current value of $\tilde{K}_0$}. Let $(x_{C}^{*})_{C\in\tilde{\calC}(B)}$
denote an optimal solution for it. Then, for each block $B\in\tilde{\calB}$
independently, we sample one configuration $C^{*}(B)\in\tilde{\calC}(B)$
with respect to the distribution given by the vector $(x_{C}^{*})_{C\in\tilde{\calC}(B)}$.
Then, we define a set of slots $\tilde{Q}$ and a bipartite graph
analogously to the case of one machine. We compute a maximum matching
$\tilde{M}^{*}$ in this bipartite graph and output the jobs that
are matched in $\tilde{M}^{*}$.

To show that $\tilde{M}^{*}$ is sufficiently large, we prove a lemma
that generalizes Lemma~\ref{lem:solve-large-LP}. The sets $\Js$
and $\Jl$ are defined like in the case of one machine and for each
job $j\in J$ we define $y_{j}^{*}:=\sum_{C\in\tilde{\calC}:j\in J_{C}}x_{C}^{*}$
\begin{lem}
\label{lem:matchingExistence-m} For any $\Delta\geq1$ we have
\[
\EE[|\tilde{M}^{*}|]\ge(1-O(\eps))\sum_{j\in\Js}y_{j}^{*}+(3/4-O(\varepsilon))\sum_{j\in\Jl}y_{j}^{*}\ge(3/4-O(\varepsilon))|\OPT|.
\]
\end{lem}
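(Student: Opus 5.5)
We will reuse the proof of Lemma~\ref{lem:matchingExistence} almost word for word. The plan is to build a feasible fractional matching $f$ in the bipartite graph between $J$ and the slots $\tilde{Q}$, show that its expected size meets the claimed bound, and then appeal to integrality of the bipartite matching polytope to get $\EE[|\tilde M^*|]\ge\EE[\sum_e f(e)]$. The only new feature is that slots now live on $m$ machines. A slot is the interval $[s_C(j),s_C(j)+p_j)$ together with the machine $i$ for which $j\in J_{C,i}$. Slots on the same machine are pairwise disjoint, since blocks are disjoint and each configuration is feasible on each machine. Consequently, any matching between jobs and slots, with each job placed at $\max\{s,r_j\}$ inside its slot on that slot's machine, is a feasible $m$-machine schedule. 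Edges are defined exactly as before; they depend only on the interval and not on the machine.

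The matching itself has two parts.

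\textbf{Boundary jobs.} Each job contained in $C^*(B_{j,L})$ or $C^*(B_{j,R})$ is matched integrally to its own slot. For global jobs this happens with probability $y^*_{j,L}+y^*_{j,R}-y^*_{j,L}y^*_{j,R}$, and Proposition~\ref{prop:meanInequality} gives the factor $3/4$. For local jobs, the argument of Lemma~\ref{lem:short-jobs-matched} gives $\Pr[g(j)=1]=y^*_j$ verbatim.

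\textbf{Superblocks.} For each superblock $S$ we redo the grouping of Lemma~\ref{lem:groups}, the concentration for sampled slots in Lemma~\ref{lem:block-concentration}, and the read-$k$ argument in Lemmas~\ref{lem:read-k-fulfilled}--\ref{lem:holes-cocentration}. These steps use only three facts: each configuration has at most $\tilde K$ jobs, configurations are sampled independently per block, and $\sum_j y^*_{j,B}\le\tilde K$ for every block. All three hold unchanged. In particular the read-$(\tilde K/\eps)$ property survives, because a block still carries fractional job mass at most $\tilde K$.

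The step that needs care is the claim that any job in $J_\ell$ that spans $S$ can be matched to any slot of $J_{\ell-1}$ inside $S$, regardless of which machine the slot is on. This holds because the slot has length at least $p_j$ and lies within $S\subseteq\tw(j)$. So the bipartite graph between $J_\ell\setminus\Jbnd$ and these slots is again complete, and the per-superblock fractional matching has size $\min\{N_\ell,\bar N_{\ell-1}\}$. Lemma~\ref{lem:fractional-matching-exists} then holds with $K$ replaced by $\tilde K$, with additive loss $2\tilde K/\eps^5$ per superblock.

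Summing over superblocks as in Lemma~\ref{lem:long-jobs-matched}, the total additive loss is $2\tilde K/\eps^5\cdot|\tilde\calS|$. By \ref{item:fewBlocks-m}, namely $|\widetilde{\OPT}'|\ge\tilde K|\tilde\calS|/\eps^6$, this is at most $2\eps|\widetilde\OPT'|$; this is exactly why the weaker, single-machine-style bound suffices. Combining with Lemma~\ref{lem:valueLP-m}, which gives $\sum_j y^*_j\ge(1-2\eps)|\OPT|$, yields both inequalities.

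I expect no real obstacle here. The point to verify carefully is the feasibility just described: that matched slots on a common machine never overlap, and that the cross-machine freedom within a superblock does not break the completeness of the group-to-group bipartite graph.
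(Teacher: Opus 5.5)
Your proposal is correct and is essentially the paper's own argument. You reuse the single-machine fractional matching: integral matches for jobs sampled in a boundary block, then harmonic grouping plus read-$k$ concentration inside spanned superblocks, relying only on the per-block bound $\tilde K$, on the group-to-group graph in a spanned superblock being complete regardless of which machine a slot lies on, and on $|\widetilde{\OPT}'|\ge \frac{\tilde K}{\eps^6}|\tilde\calS|$. Your explicit labelling of slots by machine, to certify that the matched jobs form a feasible $m$-machine schedule, is a slightly more careful version of what the paper asserts in one line.
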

\aw{The proof is the essentialy same as the proof of Lemma~\ref{lem:matchingExistence}. Let $\tilde{E}$ denote the edges of the constructed bipartite graph. Again, we define a fractional matching $f: \tilde{E} \rightarrow [0,1]$. Like before, if for some job $j$ we sampled a configuration $C$ for one of its boundary blocks $B_{j,L},B_{j,R}$ such that $j\in J_C$, then we assign $j$ integrally to its corresponding slot. The other jobs are assigned (potentially fractionally) to slots in superblocks that they span. Recall that we crucially used that if a job $j$ can be assigned to a slot in a spanning superblock, then it can be assigned to all slots \ale{longer} than $p_j$ in this superblock as well. This is also true in the case of multiple machines. 
Also, we crucially used concentration bounds for the slots of each of the $1/\eps$ groups within each superblock. However, for these arguments, it is irrelevant whether the slots within a superblock are distributed on one or on multiple machines. The important arguments are that each block contains at most $\tilde{K}$ jobs, that the average number of jobs in a superblock is larger than the maximum number of jobs in a block,
how \ale{the number of} 
slots of certain size ranges 
\ale{is distributed, }
and how \ale{the number of} 
jobs of certain size ranges are distributed when we restrict our attention to the jobs that are not assigned to their respective boundary intervals\awr{changed wording in this sentence}. Thus, we can use the same
 probabilistic arguments to show that we have enough jobs and slots of each relevant size range with sufficiently high  probability. For example, 
similar to the single machine case, we know by Lemma~\ref{lem:block-preprocessing-m} that within each block there are at most $\tilde{K}$ jobs scheduled by any configuration and we also have $|\widetilde{\OPT}'|\ge \frac{\tilde{K}}{\eps^{6}}|\tilde{\calS}|$. These are  the same bounds as in Lemma~\ref{lem:good-block-superblock-partition} (with $\tilde{K}$ instead of $K$). Therefore we can show exactly the same concentration arguments as in the proof of Lemma~\ref{lem:matchingExistence}.}
\awr{please check}
This yields our polynomial time $(4/3+\eps)$-approximation algorithm
on multiple machines.
\begin{thm}
\label{thr:mainPoly-m} For any constants $\eps>0$ and $m\in\NN$ \fab{with $m=O_\eps(1)$}\awr{I find $m=O_\eps(1)$ confusing, $m$ and $\eps$ are both assumed to be constant and this is enough, no?},
there is a polynomial-time randomized $(4/3+\eps)$-approximation
algorithm for Throughput Maximization on $m$ machines.
\end{thm}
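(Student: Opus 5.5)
The plan is to mirror the proof of Theorem~\ref{thr:mainPoly}, feeding in the multi-machine ingredients already established. First, if $|\OPT| < 5\tilde{K}_0\Delta(2\tilde{K}_0/\eps^5)^{1/\eps}$ with $\Delta=1$, then $|\OPT|=O_{\eps,m}(1)$. In that case we solve the instance exactly by enumerating all job subsets of that constant size, together with their assignments to the $m$ machines, checking feasibility per machine as in Lemma~\ref{lem:solveLP}; this takes $n^{O_{\eps,m}(1)}$ time. The precondition $|\OPT|\ge (6m/\eps)^3$ of Lemma~\ref{lem:block-preprocessing-m} is then also satisfied for $\eps$ small, since $\tilde{K}_0\ge 6m/\eps$.

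In the main case we apply Lemma~\ref{lem:good-block-superblock-partition-m} with $\Delta=1$ and obtain at most $1/\eps$ candidate partitions, each with a value $\tilde{K}\le(2\tilde{K}_0/\eps^5)^{1/\eps}=O_{\eps,m}(1)$. We run the algorithm for each candidate and output the best solution, so in the analysis we may fix the good partition $(\tilde{\calB},\tilde{\calS})$. For it, Lemma~\ref{lem:solveLP-m} solves \eqref{eq:configuration-lp-m} in polynomial time, because $m$ and $\tilde{K}$ are constants. We then sample configurations, build the slot graph, and compute a maximum matching $\tilde{M}^*$. The resulting schedule is feasible: slots of a configuration on a common machine are disjoint, so each machine processes at most one matched job at a time, and an edge is present only if the job fits inside its slot and its time window.

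For the approximation guarantee, Lemma~\ref{lem:matchingExistence-m} gives $\EE[|\tilde{M}^*|]\ge(3/4-O(\eps))|\OPT|$. The lower bound by $|\OPT|$ needs the LP value to be at least $(1-2\eps)|\OPT|$, which is Lemma~\ref{lem:valueLP-m}. Rescaling $\eps$ yields a $(4/3+\eps)$-approximation in expectation.

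The step I expect to need the most care is verifying that Lemma~\ref{lem:matchingExistence-m} genuinely carries over. There are two points to check. First, the concentration lemmas (Lemmas~\ref{lem:block-concentration} and~\ref{lem:holes-cocentration}) use only $|J_C|\le\tilde{K}$ and $\sum_j y^*_{j,B}\le\tilde{K}$, which is the read-$(\tilde{K}/\eps)$ property. Second, the error terms $2\tilde{K}/\eps^5\cdot|\tilde{\calS}|$ must be absorbed by $|\widetilde{\OPT}'|\ge\tilde{K}|\tilde{\calS}|/\eps^6$ from~\ref{item:fewBlocks-m}. Both conditions hold verbatim. The only other subtlety is that the running time is polynomial rather than merely $n^{O(\tilde{K})}$ with $\tilde{K}$ growing; this is guaranteed because $m\le m_\eps$. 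For $m>m_\eps$ we would instead invoke Theorem~\ref{thm:small-m}, but the theorem as stated already assumes $m=O_\eps(1)$.
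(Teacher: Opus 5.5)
Your proposal is correct and follows essentially the same route as the paper. The paper's proof likewise sets $\Delta=1$, applies Lemma~\ref{lem:good-block-superblock-partition-m}, and solves the LP in polynomial time via Lemma~\ref{lem:solveLP-m} (since $m$ and $\tilde{K}$ are constants); it then rounds by sampling configurations and computing a maximum slot matching, and concludes with Lemmas~\ref{lem:valueLP-m} and~\ref{lem:matchingExistence-m}, whose concentration arguments carry over because they only use $|J_C|\le\tilde{K}$ and $|\widetilde{\OPT}'|\ge\tilde{K}|\tilde{\calS}|/\eps^6$. Your explicit handling of the small-$|\OPT|$ case and of the precondition of Lemma~\ref{lem:block-preprocessing-m} fills in details that the paper leaves implicit in the multi-machine setting.
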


Due to Theorem~\ref{thm:small-m} we also obtain a $(4/3+\eps)$-approximation
algorithm for an arbitrary number of machines $m$.
\begin{cor}
\label{cor:mainPoly-m} For any constant $\eps>0$, there is a polynomial-time
randomized $(4/3+\eps)$-approximation algorithm for Throughput Maximization
on $m$ machines.
\end{cor}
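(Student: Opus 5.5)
The corollary follows by a case distinction on $m$, combining Theorem~\ref{thm:small-m} (for many machines) with Theorem~\ref{thr:mainPoly-m} (for few machines).

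Fix the constant $\eps>0$. By Theorem~\ref{thm:small-m}, for any $\eps'>0$ there is a polynomial-time $(1-c\sqrt{(\log m)/m}-\eps')^{-1}$-approximation, where $c$ is the absolute constant hidden in the $O(\cdot)$. Choose $\eps':=\eps/10$. Since $\sqrt{(\log m)/m}\to 0$, there is a threshold $m_\eps$, depending only on $\eps$, such that $c\sqrt{(\log m)/m}\le \eps/10$ for all $m>m_\eps$. For these $m$ the ratio is at most $(1-\eps/5)^{-1}\le 1+\eps\le 4/3+\eps$ for $\eps$ small enough (larger $\eps$ only helps, after rescaling). Hence, if $m>m_\eps$, we run the algorithm of Theorem~\ref{thm:small-m}.

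If instead $m\le m_\eps$, then $m$ is bounded by a constant depending only on $\eps$, that is, $m=O_\eps(1)$. Theorem~\ref{thr:mainPoly-m} then gives a polynomial-time randomized $(4/3+\eps)$-approximation. Its running time is $n^{O_{\eps,m}(1)}$, because $\tilde K=O_{\eps,m}(1)$ by Lemma~\ref{lem:good-block-superblock-partition-m} with $\Delta=1$, and the LP is solved via Lemma~\ref{lem:solveLP-m}. Since $m\le m_\eps$, this bound is $n^{O_\eps(1)}$, which is polynomial for fixed $\eps$.

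The overall algorithm compares $m$ with $m_\eps$, which is computable from $\eps$, and dispatches to the corresponding algorithm. Both branches are polynomial and achieve ratio at most $4/3+\eps$.

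The only point needing care is uniformity: the dependence of the running time on $m$ must be absorbed into a function of $\eps$ alone. The threshold $m_\eps$ ensures this, because all $m$-dependence in the exponent of the small-$m$ branch becomes a function of $\eps$. Small instances in which $|\OPT|$ is below the thresholds of Lemmas~\ref{lem:block-preprocessing-m} and~\ref{lem:good-block-superblock-partition-m} are handled inside Theorem~\ref{thr:mainPoly-m} by exact enumeration in time $n^{O_{\eps,m}(1)}$, which is again polynomial here.
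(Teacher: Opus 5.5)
Your proposal is correct and follows the paper's argument. The paper also splits on whether $m$ exceeds a threshold $m_\eps$: for $m>m_\eps$ it uses Theorem~\ref{thm:small-m} to get a $(1+O(\eps))$-approximation, and for $m\le m_\eps$ it uses Theorem~\ref{thr:mainPoly-m}, whose $n^{O_{\eps,m}(1)}$ running time becomes polynomial once $m$ is bounded by a function of $\eps$. Your explicit choice of $\eps'$ and your remark about absorbing the $m$-dependence into $\eps$ simply spell out what the paper states briefly.
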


\aw{Hence, Corollary \ref{cor:mainPoly-m} yields the first claim of Theorem \ref{thr:multipleMachines}.
}

\subsection{Pseudopolynomial time approximation algorithm on multiple machines}

We describe now how we generalize our pseudo-polynomial time $(5/4+\eps)$-approximation
algorithm to the setting of multiple machines. Similar as in the algorithm
for one machine, we choose $\Delta:=4(\log_{2}T+1)/\eps^{4}$. In
time $(nT)^{\ale{O_{\eps, m}(1)}}$ we check whether $|\OPT|<5\ale{\tilde{K}_0}\Delta(2\ale{\tilde{K}}_{0}/\eps^{5})^{1/\eps}$
and solve the given instance optimally if this is the case, via Lemma~\ref{lem:solveLP-m}.
Otherwise, we solve the configuration-LP (\ref{eq:configuration-lp-m})
in time 
$(nT)^{\ale{O_{\eps, m}(1)}}$\alnote{I think its this running time}
using Lemma~\ref{lem:solveLP-m}. As
above, let $(x_{C}^{*})_{C\in\tilde{\calC}(B)}$ denote an optimal
solution for it.

The alternative rounding algorithm is again very similar to the setting
of one machine. We ignore all the (local) jobs $\Js$ and do not schedule
any of them. For each machine $i\in[m]$, each block $B\in\tilde{B}$
and job $j\in\Jl$, we define $y_{j,i,B}^{*}:=\sum_{C\in\tilde{\calC}(B):j\in J_{C,i}}x_{C}^{*}$
\fab{to} be the fractional amount by which $j$ is assigned to machine $i$
in block $B$ in the optimal LP solution. For each job $j\in\Jl$
independently, we discard $j$ with probability $1-(1-2\eps)\sum_{B\in\tilde{\calB}}\sum_{i=1}^{m}y_{j,i,B}^{*}$,
and otherwise assign it to exactly one block $B$ on one machine $i\in[m]$, \aw{such that each combination of a machine $i$ and a block $B \in \tilde{\calB}$ is selected }
with probability
$(1-2\varepsilon)y_{j,i,B}^{*}$. For each block $B\in\tilde{\calB}$
and each machine $i\in[m]$ we define a random variable $Y_{j,i,B}\in\{0,1\}$,
modeling whether we assign $j$ on machine $i$ to block $B$ or not,
such that $\Pr[Y_{j,i,B}=1]=(1-2\varepsilon)y_{j,i,B}^{*}$ and deterministically
$\sum_{B\in\tilde{\calB}}\sum_{i=1}^{m}Y_{j,i,B}\le1$.

In the remaining steps, we handle each machine $i\in[m]$ independently
and we apply to it exactly the same steps and argumentations as in
our algorithm for one machine. Intutively, this works since for each
job $j$ there can be at most one machine $i$ for which there is
a block $B\in\tilde{\calB}$ with $Y_{j,i,B}=1$.
We discard all long jobs, using the same definition for long \aw{jobs} as before. With the same proof as in the single machine case, \aw{we can show that} there can be at most $\Delta$ long jobs assigned to a block and \aw{a machine $i$} in expectation, meaning that we discard at most $\Delta \cdot m \cdot \fab{|}\aw{\tilde{\calB}} \fab{|}$ jobs in expectation. Furthermore,
 for each machine $i\in[m]$ and each block $B\in\tilde{\calB}$ we can
bound the probability for the bad event $\calE_{total}$ with the
same arguments as before. 
Also, for each job $j\in\Jl$ we can bound
the probability for the bad events $\calE_{right}(j)$ and $\calE_{left}(j)$
with the same arguments. 
Therefore, we can prove the following lemma.
\begin{lem}
\label{lem:rounding-good-m} In \fab{polynomial time} we can compute a
feasible solution to the given instance whose expected number of jobs
is at least
\[
\left((1-2\eps)\sum_{j\in\Jl}y_{j}^{*}\right)-4(\log_{2}T+1)/\eps^{4}\cdot m \cdot|\tilde{\calB}|\ge(1-O(\varepsilon))\sum_{j\in\Jl}y_{j}^{*}-O(\eps)\sum_{j\in J}y_{j}^{*}.
\]
\end{lem}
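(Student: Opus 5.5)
The plan is to mirror the proof of Lemma~\ref{lem:rounding-good}, running the single-machine analysis separately on each pair $(i,B)$ of a machine $i\in[m]$ and a block $B\in\tilde{\calB}$. For each such pair, let $J(i,B)$ be the jobs with $Y_{j,i,B}=1$. We remove the long jobs $\Jt(i,B):=\{j\in J(i,B): p_j>\eps^4|\tw(j)\cap B|\}$ and declare the same bad events $\calE_{total}$, $\calE_{right}(j)$, $\calE_{left}(j)$, now restricted to the short jobs of $J(i,B)$. This defines $Z_{j,i,B}$, and we set $\APX(i,B):=\{j: Z_{j,i,B}=1\}$. For each pair we run the greedy of Lemma~\ref{lem:pseudoGreedy} on machine $i$ inside $B$. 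Its proof only uses the absence of the bad events, so it applies verbatim and produces a valid schedule on machine $i$. Each job satisfies $\sum_{i,B}Y_{j,i,B}\le 1$, so it is placed at most once. Blocks are disjoint per machine, so the union over all pairs is feasible, and it is computed in polynomial time.

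For the expected size, the key observation is that configuration $C$ fixes the partition $J_C=\bigcup_i J_{C,i}$ and a feasible schedule. Hence, for every fixed machine $i$, the sets $J_{C,i}$ are feasible single-machine configurations in $B$.
\begin{itemize}
\item \emph{Long jobs.} For each $i$ and $B$, the counting argument of Lemma~\ref{lem:pseudoRemovalLongJobs} carries over: at most $4/\eps^4$ jobs of each class $J_B^\ell$ fit into $J_{C,i}$. Using $y^*_{j,i,B}$ in place of $y^*_{j,B}$, this gives $\EE[|\Jt(i,B)|]\le\Delta$.
\item \emph{Bad events.} For Lemma~\ref{lem:pseudoConcentration}, condition on $Y_{j,i,B}=1$. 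For $j'\neq j$, the variables $Y_{j',i,B}$ are independent of this conditioning and of each other, since jobs are assigned independently. Their expectations satisfy $\sum_{j'}p_{j'}\EE[Y_{j',i,B}]\le(1-2\eps)\sum_C x^*_C\sum_{j'\in J_{C,i}}p_{j'}\le(1-2\eps)|B|$, and analogously for $[r_j,t)$. The same Chernoff bound then gives $\Pr[Z_{j,i,B}=1\mid Y_{j,i,B}=1]\ge 1-2\eps$ for short $j$.
\end{itemize}

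Summing as in Lemma~\ref{lem:rounding-good} over $i$, $B$ and $j$, and using $\sum_{i,B}\Pr[Y_{j,i,B}=1]=(1-2\eps)y^*_j$, we obtain
\[
\EE\Big[\sum_{i,B}|\APX(i,B)|\Big]\ge(1-2\eps)^2\sum_{j\in\Jl}y^*_j-\Delta\, m\,|\tilde{\calB}|.
\]
This is the first stated expression, up to merging $(1-2\eps)^2$ into $(1-O(\eps))$.

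To get the right-hand inequality, use \ref{item:fewBlocks-m}, which gives $|\widetilde{\OPT}'|\ge m\frac{\Delta}{\eps}|\tilde{\calB}|$, so $\Delta m|\tilde{\calB}|\le\eps|\widetilde{\OPT}'|$. Then $|\widetilde{\OPT}'|\le|\OPT|\le\frac{1}{1-2\eps}\sum_j y^*_j$ by Lemma~\ref{lem:valueLP-m}. The extra factor $m$ in \ref{item:fewBlocks-m} exactly absorbs the factor $m$ in the loss term, and this is the step requiring the most care.

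The main obstacle I expect is checking that conditioning on $Y_{j,i,B}=1$ keeps the Chernoff bound valid in the multi-machine setting. The difference is that $Y_{j',i,B}$ now depends on the random machine choice of $j'$. Since that choice is made jointly with the block, independently for each job, the independence across jobs still holds.
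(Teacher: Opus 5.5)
Your proposal is correct and follows the paper's argument: you treat each pair of a machine $i$ and a block $B\in\tilde{\calB}$ exactly as in the single-machine rounding, bound the long jobs by $\Delta$ per pair via the per-machine sets $J_{C,i}$, control the bad events with the same Chernoff argument, and absorb the loss $\Delta\cdot m\cdot|\tilde{\calB}|$ through the factor $m$ in \ref{item:fewBlocks-m}; you simply spell out details the paper covers with ``the same arguments''. The factor $(1-2\eps)^2$ you obtain, rather than the stated $(1-2\eps)$, is also all that the paper's own proof of Lemma~\ref{lem:rounding-good} yields, and it does not affect the right-hand bound.
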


Together with the rounding algorithm from Section~\ref{subsec:poly-m-machines},
we obtain our pseudo-polynomial time $(5/4+\eps)$-approximation algorithm
\aw{on $m$ machines if}
$m\fab{=O_\eps(1)}$.
\begin{thm}
\label{thr:mainPseudopoly-1} For any constant $\eps>0$, there is
a randomized $(5/4+\eps)$-approximation algorithm for Throughput
Maximization on $m$ machines with a running time of $(nT)^{\ale{O_{\eps, m}(1)}}$.
\end{thm}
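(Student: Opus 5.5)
We follow the proof of Theorem~\ref{thr:mainPseudopoly} almost verbatim, replacing each single-machine ingredient by its $m$-machine counterpart. First we reduce to bounded $m$: by Theorem~\ref{thm:small-m} there is a constant $m_\eps$ such that for $m>m_\eps$ we already obtain a polynomial-time $(1+O(\eps))$-approximation. We therefore assume $m\le m_\eps$; then $(2m)^{O(\cdot)}$ factors and dependencies on $m$ are absorbed into $O_{\eps,m}(1)$ exponents.

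We fix $\Delta:=4(\log_2T+1)/\eps^4$. If $|\OPT|<5\tilde K_0\Delta(2\tilde K_0/\eps^5)^{1/\eps}=O_{\eps,m}(\log T)$, we solve the instance exactly via Lemma~\ref{lem:solveLP-m} in time $(2m)^{O(|\OPT|)}(nT)^{O(1)}=(nT)^{O_{\eps,m}(1)}$. We cannot test this condition directly, so we run the exact algorithm with the color budget capped at this threshold and keep its output as a candidate. Otherwise, we apply Lemma~\ref{lem:good-block-superblock-partition-m} and try all at most $1/\eps$ partitions, keeping the best output. For the correct partition, $\tilde K=O_{\eps,m}(\log T)$, so Lemma~\ref{lem:solveLP-m} solves (\ref{eq:configuration-lp-m}) in time $(2m)^{O(\tilde K)}(nT)^{O(1)}=(nT)^{O_{\eps,m}(1)}$. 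Lemma~\ref{lem:valueLP-m} then gives $\sum_j y^*_j\ge(1-2\eps)|\OPT|$.

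On the same LP solution we run both roundings and output the better result. The matching-based rounding of Section~\ref{subsec:poly-m-machines} yields $(1-O(\eps))\sum_{\Js}y^*_j+(3/4-O(\eps))\sum_{\Jl}y^*_j$ by Lemma~\ref{lem:matchingExistence-m}. That lemma is stated for any $\Delta\ge1$, and the rounding step itself runs in polynomial time once the LP solution is given. The per-machine alteration rounding yields $(1-O(\eps))\sum_{\Jl}y^*_j-O(\eps)\sum_J y^*_j$ by Lemma~\ref{lem:rounding-good-m}. The error term there is controlled because $\Delta m|\tilde\calB|\le\eps|\widetilde{\OPT}'|$ by \ref{item:fewBlocks-m}, and this is exactly why the factor $m$ appears in that bound.

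The final step is the case split from the proof of Theorem~\ref{thr:mainPseudopoly}. If $\sum_{\Js}y^*_j\ge|\OPT|/5$, the first rounding gives at least $(1-O(\eps))\bigl(\tfrac34\sum_J y^*_j+\tfrac14\sum_{\Js}y^*_j\bigr)\ge(1-O(\eps))\tfrac45|\OPT|$. Otherwise $\sum_{\Jl}y^*_j\ge(1-O(\eps))\tfrac45|\OPT|$, and the second rounding gives the same bound. Rescaling $\eps$ then yields the $(5/4+\eps)$ guarantee.

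The main obstacle I expect is confirming that Lemma~\ref{lem:rounding-good-m} really holds per machine. Lemmas~\ref{lem:pseudoRemovalLongJobs}, \ref{lem:pseudoConcentration} and~\ref{lem:pseudoGreedy} must carry over when configurations are restricted to their machine-$i$ part $J_{C,i}$. For each fixed $i$, the restricted configurations are single-machine feasible schedules in $B$, so both the packing bound of $4/\eps^4$ jobs per size class and the expectation bounds $\EE[X]\le(1-\eps)|B|$ hold with $y^*_{j,i,B}$ in place of $y^*_{j,B}$. Independence across jobs is preserved because each job picks a single (machine, block) pair. Consequently, the Chernoff and greedy arguments apply unchanged on each machine.
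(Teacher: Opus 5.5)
Your proposal is correct and follows essentially the same route as the paper. The paper solves small-$|\OPT|$ instances exactly. Otherwise it solves (\ref{eq:configuration-lp-m}) with $\Delta=4(\log_2T+1)/\eps^4$ via Lemma~\ref{lem:solveLP-m}, runs the matching rounding (Lemma~\ref{lem:matchingExistence-m}) and the per-machine alteration rounding (Lemma~\ref{lem:rounding-good-m}) on the same LP solution, and concludes with the $|\OPT|/5$ case split from Theorem~\ref{thr:mainPseudopoly}. Your added details are consistent with the paper and slightly flesh out its sketch: restricting to $m\le m_\eps$, testing the small-$|\OPT|$ case by capping the number of colors, and checking that the packing and expectation bounds hold for each machine's part $J_{C,i}$ of a configuration.
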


Again, due to Theorem~\ref{thm:small-m} we also obtain the same
approximation ratio \aw{in polynomial time} for arbitrary values of $m$.

\begin{cor}
\label{cor:mainpseudoPoly-m} For any constant $\eps>0$, there is
a polynomial-time randomized $(5/4+\eps)$-approximation algorithm
for Throughput Maximization on $m$ machines with a running time of
$(nT)^{O_{\eps}(1)}$.
\end{cor}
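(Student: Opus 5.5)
The plan is a case split on $m$. The threshold is the constant $m_\eps$ introduced after Theorem~\ref{thm:small-m}.

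\emph{Large $m$.} If $m>m_\eps$, we invoke Theorem~\ref{thm:small-m}. It gives a polynomial-time $(1-O(\sqrt{(\log m)/m})-\eps)^{-1}$-approximation. By the choice of $m_\eps$ this ratio is $1+O(\eps)$, which is at most $5/4+\eps$ once $\eps$ is small enough. The running time is polynomial in the input size and hence also bounded by $(nT)^{O_\eps(1)}$.

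\emph{Small $m$.} If $m\le m_\eps$, then $m=O_\eps(1)$. We apply Theorem~\ref{thr:mainPseudopoly-1}, i.e., the combination of the two roundings. Concretely, with $\Delta=4(\log_2T+1)/\eps^4$ we proceed as follows.
\begin{itemize}
\item If $|\OPT|<5\tilde K_0\Delta(2\tilde K_0/\eps^5)^{1/\eps}$, we solve the instance exactly via Lemma~\ref{lem:solveLP-m} in time $(2m)^{O(|\OPT|)}(nT)^{O(1)}$.
\item Otherwise, we solve \eqref{eq:configuration-lp-m} via Lemma~\ref{lem:solveLP-m}. Here $\tilde K=O_{\eps,m}(\log T)$, so the running time $(2m)^{O(\tilde K)}(nT)^{O(1)}$ is again $(nT)^{O_{\eps,m}(1)}$. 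We then output the better of the two rounded solutions. The first has expected value given by Lemma~\ref{lem:matchingExistence-m}, the second by Lemma~\ref{lem:rounding-good-m}.
\end{itemize}
The ratio follows from the same case split as in the proof of Theorem~\ref{thr:mainPseudopoly}, on whether $\sum_{j\in\Js}y^*_j\ge|\OPT|/5$. In the second rounding we must make sure the loss term $\Delta m|\tilde\calB|$ is $O(\eps)|\OPT|$. This is where we use \ref{item:fewBlocks-m}, $|\widetilde{\OPT}'|\ge m\frac{\Delta}{\eps}|\tilde\calB|$, in which the factor $m$ cancels exactly.

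\emph{Main obstacle.} The delicate point is the running-time bookkeeping. Both $(2m)^{O(\tilde K)}$ and $(2m)^{O(|\OPT|)}$ with $|\OPT|=O_{\eps}(\log T)$ must turn into $(nT)^{O_\eps(1)}$, not $(nT)^{O_{\eps,m}(1)}$. This holds because in this branch $m\le m_\eps$. Hence $\tilde K_0=(m/\eps)^{O(m/\eps\log(m/\eps))}$ depends only on $\eps$, and $(2m)^{O_\eps(\log T)}=T^{O_\eps(\log m_\eps)}=T^{O_\eps(1)}$.

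Combining the two cases gives the claimed randomized $(5/4+\eps)$-approximation in time $(nT)^{O_\eps(1)}$ for every $m$, after rescaling $\eps$.
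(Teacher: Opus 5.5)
Your proposal is correct and follows the paper's argument. You split on whether $m>m_\eps$, use Theorem~\ref{thm:small-m} for large $m$, and otherwise apply Theorem~\ref{thr:mainPseudopoly-1}, whose running time $(nT)^{O_{\eps,m}(1)}$ becomes $(nT)^{O_\eps(1)}$ because $m\le m_\eps$ depends only on $\eps$; your extra bookkeeping (the cancellation of $m$ via \ref{item:fewBlocks-m} and $(2m)^{O_\eps(\log T)}=T^{O_\eps(1)}$) just spells out what the paper leaves implicit.
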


\aw{Corollary \ref{cor:mainpseudoPoly-m} yields the second claim of Theorem \ref{thr:multipleMachines} which completes its proof.}

\bibliographystyle{alpha}
\bibliography{MaxThroughput-STOC06}

@inproceedings{CCGRS14,
  author    = {V. T. Chakaravarthy and
               A. R. Choudhury and
               S. Gupta and
               S. Roy and
               Y. Sabharwal},
  title     = {Improved Algorithms for Resource Allocation under Varying Capacity},
  booktitle = {ESA},
  pages     = {222--234},
  year      = {2014}
}

@InProceedings{BCES2006,
  Title                    = {A quasi-{PTAS} for unsplittable flow on line graphs},
  Author                   = {Bansal, N. and Chakrabarti, A. and Epstein, A. and Schieber, B.},
  Booktitle                = {STOC},
  Year                     = {2006},
  Pages                    = {721--729}
}

@book{leung2004handbook,
  title={Handbook of scheduling: algorithms, models, and performance analysis},
  author={Leung, Joseph YT},
  year={2004},
  publisher={Chapman and Hall/CRC}
}

@inproceedings{BFKS09,
    author = {Bansal, N. and Friggstad, Z. and  Khandekar, R. and Salavatipour, R.},
  title = {A logarithmic approximation for unsplittable flow on line graphs},
  booktitle = {SODA},
  year      = {2009},
  pages     = {702-709}
  }

@inproceedings{AGLW14,
  author    = {A. Anagnostopoulos and
               F. Grandoni and
               S. Leonardi and
               A. Wiese},
  title     = {A Mazing 2+$\varepsilon$ Approximation for Unsplittable Flow on a Path},
  booktitle = {SODA},
  pages     = {26--41},
  year      = {2014}
}

@article{BSW14,
  title={A Constant-Factor Approximation Algorithm for Unsplittable Flow on Paths},
  author={Bonsma, P. and Schulz, J. and Wiese, A.},
  journal={SIAM Journal on Computing},
  volume={43},
  pages={767--799},
  year={2014},
  publisher={SIAM}
}

@InProceedings{BGKMW15,
author                      = {J. Batra and N. Garg and A. Kumar and T. M{\"o}mke and A. Wiese},
title                       = {New Approximation Schemes for Unsplittable Flow on a Path},
booktitle                   = {SODA},
year                        = {2015},
pages                       = {47--58}
}

@inproceedings{GMWZ17,
  author    = {F. Grandoni and
               T. M{\"{o}}mke and
               A. Wiese and
               H. Zhou},
  title     = {To Augment or Not to Augment: Solving Unsplittable Flow on a Path
               by Creating Slack},
  booktitle = {SODA},
  pages     = {2411--2422},
  year      = {2017}
}

@inproceedings{GMWZ18,
  author    = {F. Grandoni and
               T. M{\"{o}}mke and
               A. Wiese and
               H. Zhou},
  title     = {A {(5/3} + {\(\epsilon\)})-approximation for unsplittable flow on
               a path: placing small tasks into boxes},
  booktitle = {STOC},
  pages     = {607--619},
  year      = {2018}
}

@inproceedings{GMW21,
  author    = {F. Grandoni and
               T. M{\"{o}}mke and
               A. Wiese},
  title     = {Faster (1+{\(\epsilon\)})-Approximation for Unsplittable Flow on a
               Path via Resource Augmentation and Back},
  booktitle = {ESA},
  volume    = {204},
  pages     = {49:1--49:15},
  year      = {2021}
}

@inproceedings{Srinivasan2001,
author = {Srinivasan, Aravind},
title = {New approaches to covering and packing problems},
year = {2001},
isbn = {0898714907},
publisher = {Society for Industrial and Applied Mathematics},
address = {USA},
booktitle = {Proceedings of the Twelfth Annual ACM-SIAM Symposium on Discrete Algorithms},
pages = {567–576},
numpages = {10},
location = {Washington, D.C., USA},
series = {SODA '01}
}

@inproceedings{GMW22,
  author    = {F. Grandoni and
               T. M{\"{o}}mke and
               A. Wiese},
  title     = {Unsplittable Flow on a Path: The Game!},
  booktitle = {SODA},
  publisher = {{SIAM}},
  year      = {2022},
  pages     = {906--926}
}

@inproceedings{GMW22STOC,
  author    = {F. Grandoni and
               T. M{\"{o}}mke and
               A. Wiese},
  title     = {A {PTAS} for unsplittable flow on a path},
  booktitle = {{STOC}},
  pages     = {289--302},
  publisher = {{ACM}},
  year      = {2022}
}

@article{AYZ95,
  author       = {Noga Alon and
                  Raphael Yuster and
                  Uri Zwick},
  title        = {Color-Coding},
  journal      = {J. {ACM}},
  volume       = {42},
  number       = {4},
  pages        = {844--856},
  year         = {1995},
  url          = {https://doi.org/10.1145/210332.210337},
  doi          = {10.1145/210332.210337},
  bibsource    = {dblp computer science bibliography, https://dblp.org}
}

@inproceedings{dynamic-programming-framework,
  title={A dynamic programming framework for non-preemptive scheduling problems on multiple machines},
  author={Im, Sungjin and Li, Shi and Moseley, Benjamin and Torng, Eric},
  booktitle={Proceedings of the twenty-sixth annual ACM-SIAM symposium on Discrete algorithms},
  pages={1070--1086},
  year={2015},
  organization={SIAM}
}

@inproceedings{armbruster2023ptas,
  title={A PTAS for minimizing weighted flow time on a single machine},
  author={Armbruster, Alexander and Rohwedder, Lars and Wiese, Andreas},
  booktitle={Proceedings of the 55th Annual ACM Symposium on Theory of Computing},
  pages={1335--1344},
  year={2023}
}

@inproceedings{armbruster2025approximability,
  title={On the Approximability of Unsplittable Flow on a Path with Time Windows},
  author={Armbruster, Alexander and Grandoni, Fabrizio and Husi{\'c}, Edin and Tinguely, Antoine and Wiese, Andreas},
  booktitle={International Conference on Integer Programming and Combinatorial Optimization},
  pages={29--42},
  year={2025},
  organization={Springer}
}

@article{lawler1990dynamic,
  title={A dynamic programming algorithm for preemptive scheduling of a single machine to minimize the number of late jobs},
  author={Lawler, Eugene L},
  journal={Annals of Operations Research},
  volume={26},
  number={1},
  pages={125--133},
  year={1990},
  publisher={Springer}
}

@article{garey1977two,
author = {Garey, M. R. and Johnson, D. S.},
title = {Two-Processor Scheduling with Start-Times and Deadlines},
year = {1977},
issue_date = {Sep 1977},
publisher = {Society for Industrial and Applied Mathematics},
address = {USA},
volume = {6},
number = {3},
issn = {0097-5397},
url = {https://doi.org/10.1137/0206029},
doi = {10.1137/0206029},
journal = {SIAM J. Comput.},
month = sep,
pages = {416–426},
numpages = {11}
}

@article{agnetis2025fifty,
  title={Fifty years of research in scheduling--theory and applications},
  author={Agnetis, Alessandro and Billaut, Jean-Charles and Pinedo, Michael and Shabtay, Dvir},
  journal={European Journal of Operational Research},
  year={2025},
  publisher={Elsevier}
}

@article{bar2001unified,
  title={A unified approach to approximating resource allocation and scheduling},
  author={Bar-Noy, Amotz and Bar-Yehuda, Reuven and Freund, Ari and Naor, Joseph and Schieber, Baruch},
  journal={Journal of the ACM (JACM)},
  volume={48},
  number={5},
  pages={1069--1090},
  year={2001},
  publisher={ACM New York, NY, USA}
}

@article{bar2001approximating,
  title={Approximating the Throughput of Multiple Machines in Real-Time Scheduling},
  author={Bar-Noy, Amotz and Guha, Sudipto and Naor, Joseph and Schieber, Baruch},
  journal={SIAM Journal on Computing},
  volume={31},
  number={2},
  pages={331--352},
  year={2001},
  publisher={SIAM}
}

@inproceedings{BGNS99,
  author       = {Amotz Bar{-}Noy and
                  Sudipto Guha and
                  Joseph Naor and
                  Baruch Schieber},
  editor       = {Jeffrey Scott Vitter and
                  Lawrence L. Larmore and
                  Frank Thomson Leighton},
  title        = {Approximating the Throughput of Multiple Machines Under Real-Time
                  Scheduling},
  booktitle    = {Proceedings of the Thirty-First Annual {ACM} Symposium on Theory of
                  Computing, May 1-4, 1999, Atlanta, Georgia, {USA}},
  pages        = {622--631},
  publisher    = {{ACM}},
  year         = {1999},
  url          = {https://doi.org/10.1145/301250.301420},
  doi          = {10.1145/301250.301420},
  bibsource    = {dblp computer science bibliography, https://dblp.org}
}

@article{berman2000multi,
  title={Multi-phase algorithms for throughput maximization for real-time scheduling},
  author={Berman, Piotr and DasGupta, Bhaskar},
  journal={Journal of Combinatorial Optimization},
  volume={4},
  pages={307--323},
  year={2000},
  publisher={Springer}
}

@book{blazewicz2013scheduling,
  title     = {Scheduling Computer and Manufacturing Processes},
  author    = {Blazewicz, Jacek and Ecker, Klaus H. and Pesch, Erwin and Schmidt, Gunter and Weglarz, Jan},
  year      = {2013},
  publisher = {Springer Science \& Business Media},
  address   = {Berlin, Heidelberg}
}

@book{brucker2004scheduling,
  author       = {Peter Brucker},
  title        = {Scheduling algorithms},
  publisher    = {Springer},
  year         = {2004},
}

@article{chakrabarti2007approximation,
  title={Approximation algorithms for the unsplittable flow problem},
  author={Chakrabarti, Amit and Chekuri, Chandra and Gupta, Anupam and Kumar, Amit},
  journal={Algorithmica},
  volume={47},
  number={1},
  pages={53--78},
  year={2007},
  publisher={Springer}
}

@article{chekuri2001approximation,
  title={Approximation techniques for average completion time scheduling},
  author={Chekuri, Chandra and Motwani, Rajeev and Natarajan, Balas and Stein, Clifford},
  journal={SIAM Journal on Computing},
  volume={31},
  number={1},
  pages={146--166},
  year={2001},
  publisher={SIAM}
}

@article{chuzhoy2006approximation,
  title={Approximation algorithms for the job interval selection problem and related scheduling problems},
  author={Chuzhoy, Julia and Ostrovsky, Rafail and Rabani, Yuval},
  journal={Mathematics of Operations Research},
  volume={31},
  number={4},
  pages={730--738},
  year={2006},
  publisher={INFORMS}
}

@inproceedings{COR01,
  author       = {Julia Chuzhoy and
                  Rafail Ostrovsky and
                  Yuval Rabani},
  title        = {Approximation Algorithms for the Job Interval Selection Problem and
                  Related Scheduling Problems},
  booktitle    = {42nd Annual Symposium on Foundations of Computer Science, {FOCS} 2001,
                  Las Vegas, Nevada, USA, October 14-17, 2001},
  pages        = {348--356},
  publisher    = {{IEEE} Computer Society},
  year         = {2001},
  url          = {https://doi.org/10.1109/SFCS.2001.959909},
  doi          = {10.1109/SFCS.2001.959909},
  bibsource    = {dblp computer science bibliography, https://dblp.org}
}

@book{dubhashi2009concentration,
  title={Concentration of measure for the analysis of randomized algorithms},
  author={Dubhashi, Devdatt P and Panconesi, Alessandro},
  year={2009},
  publisher={Cambridge University Press}
}

@article{fischetti1987fixed,
  title     = {The Fixed Job Schedule Problem with Spread-Time Constraints},
  author    = {Fischetti, Matteo and Martello, Silvano and Toth, Paolo},
  journal   = {Operations Research},
  volume    = {35},
  number    = {6},
  pages     = {849--858},
  year      = {1987},
  publisher = {INFORMS}
}

@article{GareyJST81,
  author       = {M. R. Garey and
                  David S. Johnson and
                  Barbara B. Simons and
                  Robert Endre Tarjan},
  title        = {Scheduling Unit-Time Tasks with Arbitrary Release Times and Deadlines},
  journal      = {{SIAM} J. Comput.},
  volume       = {10},
  number       = {2},
  pages        = {256--269},
  year         = {1981},
  url          = {https://doi.org/10.1137/0210018},
  doi          = {10.1137/0210018},
  bibsource    = {dblp computer science bibliography, https://dblp.org}
}

@article{gavinsky2015tail,
  title={A tail bound for read-k families of functions},
  author={Gavinsky, Dmitry and Lovett, Shachar and Saks, Michael and Srinivasan, Srikanth},
  journal={Random Structures \& Algorithms},
  volume={47},
  number={1},
  pages={99--108},
  year={2015},
  publisher={Wiley Online Library}
}

@inproceedings{grandoni2015improved,
  title={Improved approximation algorithms for unsplittable flow on a path with time windows},
  author={Grandoni, Fabrizio and Ingala, Salvatore and Uniyal, Sumedha},
  booktitle={International Workshop on Approximation and Online Algorithms},
  pages={13--24},
  year={2015},
  organization={Springer}
}

@article{hall1994maximizing,
  title     = {Maximizing the Value of a Space Mission},
  author    = {Hall, Nicholas G. and Magazine, Michael J.},
  journal   = {European Journal of Operational Research},
  volume    = {78},
  number    = {2},
  pages     = {224--241},
  year      = {1994},
  publisher = {Elsevier}
}

@article{hochbaum1987using,
  title={Using dual approximation algorithms for scheduling problems theoretical and practical results},
  author={Hochbaum, Dorit S and Shmoys, David B},
  journal={Journal of the ACM (JACM)},
  volume={34},
  number={1},
  pages={144--162},
  year={1987},
  publisher={ACM New York, NY, USA}
}

@article{im2020breaking,
  title={Breaking 1-1/e barrier for nonpreemptive throughput maximization},
  author={Im, Sungjin and Li, Shi and Moseley, Benjamin},
  journal={SIAM Journal on Discrete Mathematics},
  volume={34},
  number={3},
  pages={1649--1669},
  year={2020},
  publisher={SIAM}
}

@inproceedings{ILM17,
  author       = {Sungjin Im and
                  Shi Li and
                  Benjamin Moseley},
  editor       = {Friedrich Eisenbrand and
                  Jochen K{\"{o}}nemann},
  title        = {Breaking 1 - 1/e Barrier for Non-preemptive Throughput Maximization},
  booktitle    = {Integer Programming and Combinatorial Optimization - 19th International
                  Conference, {IPCO} 2017, Waterloo, ON, Canada, June 26-28, 2017, Proceedings},
  series       = {Lecture Notes in Computer Science},
  volume       = {10328},
  pages        = {292--304},
  publisher    = {Springer},
  year         = {2017},
  url          = {https://doi.org/10.1007/978-3-319-59250-3\_24},
  doi          = {10.1007/978-3-319-59250-3\_24},
  bibsource    = {dblp computer science bibliography, https://dblp.org}
}

@article{jansen2010eptas,
  title={An EPTAS for scheduling jobs on uniform processors: using an MILP relaxation with a constant number of integral variables},
  author={Jansen, Klaus},
  journal={SIAM Journal on Discrete Mathematics},
  volume={24},
  number={2},
  pages={457--485},
  year={2010},
  publisher={SIAM}
}

@article{jansen2020closing,
  title={Closing the gap for makespan scheduling via sparsification techniques},
  author={Jansen, Klaus and Klein, Kim-Manuel and Verschae, Jos{\'e}},
  journal={Mathematics of Operations Research},
  volume={45},
  number={4},
  pages={1371--1392},
  year={2020},
  publisher={INFORMS}
}

@incollection{lawler1993sequencing,
  title     = {Sequencing and Scheduling: Algorithms and Complexity},
  author    = {Lawler, Eugene L. and Lenstra, Jan Karel and Rinnooy Kan, A. H. G. and Shmoys, David B.},
  booktitle = {Handbooks in Operations Research and Management Science},
  volume    = {4},
  editor    = {Graves, S. C. and Rinnooy Kan, A. H. G. and Zipkin, P. H.},
  pages     = {445--522},
  year      = {1993},
  publisher = {Elsevier},
  address   = {Amsterdam}
}

@article{lee1985simple,
  title={A simple on-line bin-packing algorithm},
  author={Lee, Chan C and Lee, Der-Tsai},
  journal={Journal of the ACM (JACM)},
  volume={32},
  number={3},
  pages={562--572},
  year={1985},
  publisher={ACM New York, NY, USA}
}

@book{pinedo08,
  title={Scheduling: Theory, Algorithms, and Systems},
  author={Pinedo, Michael L.},
  year={2008},
  publisher={Springer}
}

@article{pruhs2007approximation,
  title={Approximation schemes for a class of subset selection problems},
  author={Pruhs, Kirk and Woeginger, Gerhard J},
  journal={Theoretical Computer Science},
  volume={382},
  number={2},
  pages={151--156},
  year={2007},
  publisher={Elsevier}
}

@book{schrijver1998theory,
  title={Theory of linear and integer programming},
  author={Schrijver, Alexander},
  year={1998},
  publisher={John Wiley \& Sons}
}

@book{schrijver2003combinatorial,
  title={Combinatorial optimization: polyhedra and efficiency},
  author={Schrijver, Alexander},
  volume={24},
  year={2003},
  publisher={Springer}
}

@article{spieksma1999approximability,
  title={On the approximability of an interval scheduling problem},
  author={Spieksma, Frits CR},
  journal={Journal of Scheduling},
  volume={2},
  number={5},
  pages={215--227},
  year={1999},
  publisher={Wiley Online Library}
}

@article{DBLP:journals/siamcomp/BansalP14,
  author    = {Nikhil Bansal and
               Kirk Pruhs},
  title     = {The Geometry of Scheduling},
  journal   = {{SIAM} Journal on Computing},
  volume    = {43},
  number    = {5},
  pages     = {1684--1698},
  year      = {2014}
}

@book{williamson2011design,
  title={The design of approximation algorithms},
  author={Williamson, David P and Shmoys, David B},
  year={2011},
  publisher={Cambridge University Press}
}

@article{schmidt1990HashFunctions,
author = {Schmidt, Jeanette P. and Siegel, Alan},
title = {The Spatial Complexity of Oblivious k-Probe Hash Functions},
journal = {SIAM Journal on Computing},
volume = {19},
number = {5},
pages = {775-786},
year = {1990},
doi = {10.1137/0219054},

URL = { 
    
        https://doi.org/10.1137/0219054
    
    

}
}

@book{grotschel2012geometric,
  title={Geometric algorithms and combinatorial optimization},
  author={Gr{\"o}tschel, Martin and Lov{\'a}sz, L{\'a}szl{\'o} and Schrijver, Alexander},
  volume={2},
  year={2012},
  publisher={Springer Science \& Business Media}
}

@article{hohn2018unsplittable,
  title={How unsplittable-flow-covering helps scheduling with job-dependent cost functions},
  author={H{\"o}hn, Wiebke and Mestre, Juli{\'a}n and Wiese, Andreas},
  journal={Algorithmica},
  volume={80},
  number={4},
  pages={1191--1213},
  year={2018},
  publisher={Springer}
}

@inproceedings{antoniadis2017qptas,
  title={A QPTAS for the General Scheduling Problem with Identical Release Dates},
  author={Antoniadis, Antonios and Hoeksma, Ruben and Mei{\ss}ner, Julie and Verschae, Jos{\'e} and Wiese, Andreas},
  booktitle={44th International Colloquium on Automata, Languages, and Programming (ICALP 2017)},
  pages={31--1},
  year={2017},
  organization={Schloss Dagstuhl--Leibniz-Zentrum f{\"u}r Informatik}
}

@article{cheung2017primal,
  title={A primal-dual approximation algorithm for min-sum single-machine scheduling problems},
  author={Cheung, Maurice and Mestre, Juli{\'a}n and Shmoys, David B and Verschae, Jos{\'e}},
  journal={SIAM Journal on Discrete Mathematics},
  volume={31},
  number={2},
  pages={825--838},
  year={2017},
  publisher={SIAM}
}

\newpage
\appendix

\section{
The block construction by Chuzhoy, Ostrovsky and Rabani
}\label{sec:appendix-block}


Before we head to the proof of Lemma~\ref{lem:block-preprocessing}, we sketch the block construction by Chuzhoy et al.\ \cite{chuzhoy2006approximation}. The main subroutine $\greedy(\cdot)$ gets in input a set of jobs $J'\subseteq J$, and greedily schedules a subset $J''\subseteq J'$ according to the earliest finishing time criterion. In more detail, initially $J''=\emptyset$. At each iteration, we are given a schedule of $J''$, and we add to $J''$ a job $j\in J'\setminus J''$, if any, that can scheduled in an interval $[s(j),s(j)+p_j)\subseteq \tw(j)$ not overlapping with previously scheduled jobs, where $s(j)+p_j$ is as small as possible. We stop when no more job can be added to $J''$.%
\footnote{We remark that this is precisely the $2$-approximation algorithm in \cite{spieksma1999approximability}.}
Notice that $\greedy(\cdot)$ takes polynomial time.


    After running $\greedy(J)$ and scheduling the set of jobs $J_1 \subseteq J$, we divide $[0,T)$ into a set of blocks $\tilde\calB$ such that in each block (except possibly the last one) $\tilde K^3$ many jobs from $J_1$ are scheduled, where $\tilde K = 6/\varepsilon$.
    Starting with $i=2$, we repeat the following steps:
    Given the set of jobs $J_{i-1}$ and a set of blocks $\tilde\calB$, we 
    \ale{do the following.
    We iterate through the blocks from left to right and start with the jobs $\hat{J}=J_{i-1}$. For each block $B\in \tilde{B}$, we use $\greedy(\hat{J})$ to schedule jobs from $\hat{J}$, but only into the interval $B$. 
    If more than $\bar{K}^{i+2}$ \fab{jobs} are scheduled within $B$, we partition $B$ into smaller blocks, each containing $\bar{K}^{i+2}$
    scheduled jobs (except possibly the last) and continue with the next block and the unscheduled jobs.
    And if at most $\bar{K}^{i+2}$ \ale{jobs} are scheduled within $B$, we discard the schedule and the algorithm directly continues with the next block and all jobs in $\hat{J}$, i.e., including those just scheduled by $\greedy(\hat{J})$ in $B$.}
    We increment $i$ by one and repeat.

    The process stops after at most $6/\varepsilon \cdot \ln (6/\varepsilon) +1$ iterations (we do not describe the precise stopping condition in \cite{chuzhoy2006approximation} as it is not relevant for our analysis). 
    \ale{The output is two sets of blocks $\calB^{I}$ and $\calB^{II}$ such that $\calB^{I}\cup \calB^{II}$ is a partition of $[0, T)$, and two disjoint sets of jobs $J^{I}$ and $J_{pass}$ together with a schedule $s$ of $J^{I}$ within the blocks $B^{I}$. They obtain the following result:}
    \begin{lem}[\cite{chuzhoy2006approximation}]\label{lem:chuzhoy}
        There exists a \fab{feasible} schedule \fab{of} $\OPT''\subseteq J$ with the following properties:
        \begin{itemize}
            \item $|\OPT''| \ant{\geq} (1-\eps)|\OPT|$,
            \item each job $j\in \OPT''$ is scheduled in some block $B\in \calB^{I}\cup \calB^{II}$
            \item for each block $B\in\calB^{II}$ there are at most $K_{0}=(1/\eps)^{O(1/\eps\log(1/\eps))}$ \fab{jobs} of $\OPT''$ that are scheduled within $B$,
            \item The set \fab{of} jobs in $\OPT''$ scheduled in the blocks $\calB^{I}$ is $J^I$ and they are scheduled according to $s$.
        \end{itemize}
    \end{lem}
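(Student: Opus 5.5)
The plan is to reconstruct the Chuzhoy–Ostrovsky–Rabani argument by combining three ingredients: a charging lemma for the earliest-finishing-time greedy restricted to an interval, a count of the jobs lost at block boundaries, and an averaging argument over the iterations that fixes where the construction stops. I start from an optimal schedule of $\OPT$ and modify it iteration by iteration.

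\textbf{Step 1 (charging lemma).} The basic tool is the following property of $\greedy(\hat J)$ run inside a block $B$. Let $X\subseteq \hat J$ be any set of jobs feasibly scheduled inside $B$. Then every job of $X$ that is not selected by greedy overlaps the completion time of some greedy job. Also, each greedy job "blocks" at most one such job of $X$ at its completion point, because the schedule of $X$ is non-overlapping. Hence $|X| \le 2\,|\greedy(\hat J)\cap B| + 1$. Applied in iteration $i$, this gives the following: if greedy schedules at most $\bar K^{i+2}$ jobs of $J_{i-1}$ in $B$, then every schedule of jobs from the current candidate set inside $B$ has $O(\bar K^{i+2})$ jobs. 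Such blocks become the $\calB^{II}$ blocks, and the bound $K_0=\bar K^{O(1/\eps\log(1/\eps))}=(1/\eps)^{O(1/\eps\log(1/\eps))}$ follows because $i\le 6/\eps\cdot\ln(6/\eps)+1$ and $\bar K=6/\eps$.

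\textbf{Step 2 (boundary losses).} I then account for the jobs of $\OPT$ that must be dropped so that every remaining job lies inside a single block. Each new block boundary cuts at most one job of the current near-optimal schedule. In iteration $i$ the number of new blocks is at most $|J_{i-1}|/\bar K^{i+2}+(\text{number of old blocks})$, and $|J_{i-1}|\le|\OPT|$. The losses therefore sum to a geometric series of order $|\OPT|\sum_i \bar K^{-(i+2)} = O(\eps)|\OPT|$ over all iterations. I must also handle the jobs of $\OPT$ that are not in $J_{i-1}$ but compete with greedy inside a block. For these I use the same exchange argument: in a block where greedy packed many jobs, I replace the portion of $\OPT$ inside the block by the greedy schedule of $J_i$ there, losing at most a $(1/\bar K)$-fraction. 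This replacement is what makes the last bullet true: the $\OPT''$ jobs in $\calB^{I}$ are exactly $J^{I}$ with schedule $s$.

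\textbf{Step 3 (stopping and averaging), the main obstacle.} The exchange in Step 2 can lose up to a constant fraction in a single iteration. This happens when $|J_i|$ is much smaller than the number of $\OPT$ jobs it replaces. I would argue that such a large loss forces $|J_i|\le(1-\eps/6)|J_{i-1}|$, or alternatively that the number of "passed" jobs drops by a constant factor. This kind of potential can shrink only $O(\bar K\ln\bar K)=O(1/\eps\cdot\ln(1/\eps))$ times before it falls below $\eps|\OPT|$, which explains the iteration bound. Consequently there is an iteration, namely the stopping iteration, at which the exchange loss is at most $\eps/6\cdot|\OPT|$. Summing this loss with the geometric boundary losses from Step 2 and a final $\eps/6$ slack yields $|\OPT''|\ge(1-\eps)|\OPT|$.

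I expect the precise potential function and the corresponding stopping rule of Step 3 to be the delicate part. What I would try first is the ratio $|J_i|/|J_{i-1}|$, combined with the assumption $|\OPT|\ge(6/\eps)^3$ to absorb the additive $+1$ terms.
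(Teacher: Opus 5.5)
The paper does not prove this lemma; it quotes it from Chuzhoy et al.\ and only sketches the construction. So your sketch has to stand on its own, and it has a genuine gap in Steps~2--3. The exchange claim in Step~2 is false. \greedy{} is only a $2$-approximation inside a block, so a heavy block may contain almost twice as many $\OPT$-jobs as greedy jobs. Moreover, the inserted greedy jobs may also be used by $\OPT$ in other blocks, which forces further deletions. You concede in Step~3 that the loss can be a constant fraction. But if you modify $\OPT$ in every iteration, as announced, these losses add up over all iterations, not only at the stopping one. In addition, Step~1 only bounds the $\OPT$-jobs in a light block that are still candidates in $\hat J$. It does not control $\OPT$-jobs outside $J_{i-1}$, or jobs of $J_{i-1}$ already consumed by an earlier heavy block of the same iteration. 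Hence the $K_0$ bound for $\calB^{II}$ is not established either.

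What is missing is an accounting that leaves $\OPT$ untouched until the very end (apart from dropping jobs cut by boundaries) and exploits the nesting $J=J_0\supseteq J_1\supseteq J_2\supseteq\cdots$. Here \greedy{} in iteration $i$ runs on $J_{i-1}$, and $J_i$ is the set of jobs it places in heavy blocks. Let $L$ be the last iteration, and give each $\OPT$-job $o$ the level $\lambda(o)=\max\{i\le L: o\in J_i\}$.
\begin{itemize}
\item A job of level $\lambda<L$ stays available throughout iteration $\lambda+1$ and is never selected in a heavy block. Your charging argument then shows that each block of the partition after iteration $\lambda+1$, and hence each later sub-block, contains at most $2\bar K^{\lambda+3}$ such jobs. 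Summing over $\lambda\le L-1$ gives the $O(\bar K^{L+2})$ bound for the light blocks.
\item Only at iteration $L$ do you exchange: $\OPT''$ consists of $J_L$, scheduled by $s$ in the heavy blocks, plus the $\OPT$-jobs of level at most $L-1$ lying in light blocks.
\item The dropped jobs of $\OPT\cap J_L$ number at most $|J_L|$ and are paid for by $J_L$.
\item The dropped jobs of level $L-1$ belong to $J_{L-1}\setminus J_L$, the jobs passed over in the last iteration.
\item The dropped jobs of level at most $L-2$ number $O(\bar K^{L+1})$ per heavy block. This is $O(|J_L|/\bar K)$ overall, since each heavy block carries more than $\bar K^{L+2}$ greedy jobs.
\end{itemize}
So the loss is at most $|J_{L-1}\setminus J_L|+O(\eps)|\OPT|$, which is exactly the inequality your Step~3 needs. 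Stop once $|J_{i-1}\setminus J_i|\le|J_{i-1}|/\bar K$. Otherwise $|J_i|<(1-1/\bar K)|J_{i-1}|$, so after about $\bar K\ln\bar K$ iterations $|J_{i-1}|\le|\OPT|/\bar K$ anyway. Either way $|\OPT''|\ge(1-\eps)|\OPT|$. Without this level decomposition, the link you postulate between a large exchange loss and a drop of $|J_i|/|J_{i-1}|$ remains unproved.
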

    
\begin{proof}[Proof of Lemma~\ref{lem:block-preprocessing}]\alnote{I added the full blocks}
    \ale{We use the algorithm from \cite{chuzhoy2006approximation} to compute the blocks $\calB^{I}$, $\calB^{II}$ and the schedule $s$ with the properties stated in Lemma~\ref{lem:chuzhoy}. Then we continue as follows. While there is a block $B\in \calB^{I}$ such that $s$ schedules more than $K_0$ jobs within $B$, then we subdivide $B$ into multiple blocks, such that within each new block (except possibly the last on) the schedule $s$ schedules exactly $K_0$ jobs. Let $\calB^{I}_+$ the resulting blocks and let $\calB_0= \calB^{I}_+\cup \calB^{II}$. This defines the block decomposition and we use the solution $\OPT''$ from Lemma~\ref{lem:chuzhoy} as the reference solution.}
    
    The properties \ref{item:cor-opt}, \ref{item:cor-restricted} and \ref{item:cor-block-bound} follow directly from Lemma~\ref{lem:chuzhoy} \ale{and the above construction.}
    To show \ref{item:cor-bound-opt}, as pointed out in the proof of \cite[Lemma~2.3]{chuzhoy2006approximation}, in every iteration, at most $(\varepsilon/6)^3|\OPT|$ blocks are added. Since \ale{there is one block in the beginning and} there are at most $(6/\varepsilon)\ln(6/\varepsilon)+1$ iterations \ale{in the algorithm from \cite{chuzhoy2006approximation} we have that }  $|\calB^{I}|+|\calB^{II}|\leq 1+ (\varepsilon/6)^3 \cdot \big((6/\varepsilon)\ln(6/\varepsilon)+1\big) |\OPT| \leq \ale{\frac{\varepsilon}{2}} |\OPT''|$ blocks in total (the last inequality used \ref{item:cor-opt} \fab{and the fact that $|\OPT|$ is large enough}).
    \ale{When we subdivide a block from $B\in \calB^{I}$ into new blocks, in each new block (except possibly the last one) the schedule $s$ schedules exactly $K_0$ jobs, showing $|\calB^{I}_+|\leq |\calB^{I}|+|\OPT|/K_0\leq |\calB^{I}|+\frac{\varepsilon}{2}|\OPT''|$. Consequently, we obtain $|\calB_0|\fab{=} |\calB^{I}_+|+|\calB^{II}|\leq |\calB^{I}|+\fab{\frac{\eps}{2}|\OPT''|}+|\calB^{II}|\leq \varepsilon |\OPT''|$, completing the proof.}
\end{proof}


\end{document}